\documentclass[draft]{agujournal2019}
\usepackage[utf8]{inputenc}
\usepackage{url} 
\usepackage{lineno}
\usepackage{soul}

\usepackage{booktabs}
\usepackage{csquotes}
\usepackage{nicefrac}
\usepackage{amsmath,amsthm,amssymb,bbm}
\usepackage{rotating}
\usepackage{algorithm, algorithmic}
\usepackage{multirow}
\usepackage{subcaption}
\usepackage{upgreek}
\usepackage{tabularx}
\usepackage{float}
\usepackage{ragged2e}


\def\NN{{\mathbb N}}    
\def\RR{{\mathbb R}}    
\def\EE{{\mathbb E}}    
\def\11{{\mathbf 1}}    

        \def\cN{{\mathcal N}}       \def\cD{{\mathcal D}}                 


 \def\bb{{\mathbf b}} \def\bc{{\mathbf c}}   \def\bff{{\mathbf f}}       \def\bm{{\mathbf m}}       \def\bt{{\mathbf t}} \def\bu{{\mathbf u}}   \def\bx{{\mathbf x}} \def\by{{\mathbf y}}  

\def\bA{{\mathbf A}}   \def\bD{{\mathbf D}} \def\bE{{\mathbf E}}    \def\bI{{\mathbf I}}  \def\bK{{\mathbf K}}         \def\bT{{\mathbf T}} \def\bU{{\mathbf U}} \def\bV{{\mathbf V}}  \def\bX{{\mathbf X}}  \def\bZ{{\mathbf Z}}

     \def\sfff{{\mathsf f}}                   \def\sfy{{\mathsf y}}  

 \def\sfB{{\mathsf B}}    \def\sfF{{\mathsf F}}             \def\sfS{{\mathsf S}} \def\sfT{{\mathsf T}}      \def\sfZ{{\mathsf Z}}


                       \def\sfbX{\boldsymbol{\mathsf X}}  \def\sfbZ{\boldsymbol{\mathsf Z}}

\newcommand{\indep}{\perp \!\!\!\!\!\; \perp}



\def\d{\,{\mathrm d}}

\def\GP{{\operatorname{GP}}}
\def\bPhi{{\boldsymbol{\Phi}}}
\def\bGamma{{\boldsymbol{\Gamma}}}

\theoremstyle{plain}
	\newtheorem{theorem}{Theorem}[section] 
	\newtheorem{lemma}[theorem]{Lemma}            

        
\theoremstyle{definition}

\theoremstyle{remark}

\newcommand{\thinrule}{\specialrule{0.001em}{0em}{0em}}
\draftfalse

\journalname{Journal of Advances in Modeling Earth Systems (JAMES)}

\begin{document}

\title{FaIRGP: A Bayesian Energy Balance Model \\ for Surface Temperatures Emulation}

\authors{Shahine Bouabid\affil{1}, Dino Sejdinovic\affil{2}, Duncan Watson-Parris\affil{3}}

\affiliation{1}{Department of Statistics, University of Oxford, Oxford, UK}
\affiliation{2}{School of CMS \& AIML, University of Adelaide, Adelaide, Australia}
\affiliation{3}{Scripps Institution of Oceanography and Halicioğlu Data Science Institute, University of California, San Diego, US}

\correspondingauthor{Shahine Bouabid}{shahine.bouabid@stats.ox.ac.uk}

\justifying

\begin{keypoints}
    \item We introduce FaIRGP, a Bayesian machine learning emulator for global and local mean surface temperatures that builds upon a physically based simple climate model
    \item The model improves upon both purely physically-driven and purely data-driven baseline emulators on several metrics across realistic future scenarios.
    \item The model is fully mathematically tractable, which makes it a convenient and easy-to-use probabilistic tool for the emulation of surface temperatures, but also for downstream applications such as detection and attribution or precipitation emulation.
\end{keypoints}

\begin{abstract}
    Emulators, or reduced complexity climate models, are surrogate Earth system models that produce projections of key climate quantities with minimal computational resources. Using time-series modelling or more advanced machine learning techniques, data-driven emulators have emerged as a promising avenue of research, producing spatially resolved climate responses that are visually indistinguishable from state-of-the-art Earth system models. Yet, their lack of physical interpretability limits their wider adoption. In this work, we introduce FaIRGP, a data-driven emulator that satisfies the physical temperature response equations of an energy balance model. The result is an emulator that \textit{(i)} enjoys the flexibility of statistical machine learning models and can learn from data, and \textit{(ii)} has a robust physical grounding with interpretable parameters that can be used to make inference about the climate system. Further, our Bayesian approach allows a principled and mathematically tractable uncertainty quantification. Our model demonstrates skillful emulation of global mean surface temperature and spatial surface temperatures across realistic future scenarios. Its ability to learn from data allows it to outperform energy balance models, while its robust physical foundation safeguards against the pitfalls of purely data-driven models. We also illustrate how FaIRGP can be used to obtain estimates of top-of-atmosphere radiative forcing and discuss the benefits of its mathematical tractability for applications such as detection and attribution or precipitation emulation. We hope that this work will contribute to widening the adoption of data-driven methods in climate emulation.
\end{abstract}

\section*{Plain Language Summary}
Emulators are simplified climate models that can be used to rapidly explore climate scenarios --- they can run in less than a minute on an average computer. They are key tools used by the Intergovernmental Panel on Climate Change to explore the diversity of possible future climates. Data-driven emulators use advanced machine learning techniques to produce climate predictions that look very similar to the predictions of complex climate models. However, they are not easy to interpret, and therefore to trust in practice. In this work, we introduce FaIRGP, a data-driven emulator based on physics. The emulator is flexible and can learn from data to improve its predictions, but is also grounded on physical energy balance relationships, which makes it robust and interpretable. The model performs well in predicting future global and local temperatures under realistic future scenarios, outperforming purely physics-driven or purely data-driven models. Further, the probabilistic nature of our model allows for mathematically tractable uncertainty quantification. By gaining trust in such a data-driven yet physically grounded model, we hope the climate science community can benefit more widely from their potential.

\newpage
\section{Introduction}

Earth system models (ESMs)~\cite{flato2011esm} are key tools to understand current climate dynamics and climate change responses to greenhouse gas emissions. They constitute an extensive physical simulation of Earth's atmosphere and ocean fluid dynamics, used for example in the Couple Model Intercomparison Project~\cite{meehl2007wcrp, taylor2012overview, eyring2016overview} to study past and future climate. As such, they offer the most comprehensive view of what future climate could look like. They are also used as an idealized fully controlled environment to study climate dynamics and understand its underlying drivers. In particular, they play a central role in the estimation of key properties of the climate system such as timescales and equilibrium responses to the change in carbon dioxide concentration in the atmosphere~\cite{allen2009warming, collins2014longterm} and the effect of aerosols~\cite{levy2013roles}.

Running simulations with an ESM requires an astute understanding of the climate science background, of the numerical schemes used to simulate climate dynamics, and access to an adequate computational infrastructure\footnote{As an order of magnitude, running the CESM2 model~\cite{danabasoglu2020community} for a single year ahead takes about 2000 core hours on a supercomputer.}. Therefore, only a limited number of research teams around the world can realistically afford to perform climate simulations. A direct consequence is that a variety of scientific applications relying on future climate projections --- such as agricultural studies~\cite{rosenzweig2014assessing}, energy infrastructure models~\cite{turner2017climate} or global socio-economic human models~\cite{akhtar2013integrated} --- must settle for publicly available precomputed climate projections that have been selected based on their policy-relevance, and may not be tailored to the application needs. The need for expensive computational resources also serves as an important barrier, making it inaccessible for independent researchers and less well-equipped research teams to run experiments with ESMs. This exacerbates the already unequal representation in high-impact climate science research, where the global north is disproportionately represented~\cite{tandon2021analysis}.

Furthermore, even when the resource needs are met to run experiments with an ESM, their computational cost remains a critical obstacle. Indeed, the uncertainty over the ESM parameterisation, the climate system internal variability and the emission pathway the world will choose, together span a high-dimensional uncertainty space. Therefore, obtaining a comprehensive coverage of this uncertainty requires running numerous climate simulations, which quickly meets computational cost limitations. As a result, much of the climate variability and potential socio-economic pathways remain in practice unexplored.

Together, these limitations have fostered the emergence of simpler surrogate ESMs which are inexpensive to compute and referred to as \emph{emulators}. Unlike ESMs, emulators do not explicitly model the fluid dynamics of the atmosphere and oceans and focus on a limited number of climate features, such as surface temperatures or precipitations. They can run thousands of years of simulation in less than a minute on an average personal computer~\cite{leach2021fairv2}, hence making accessible the emulation of climate projections under configurations unexplored by ESMs.

\emph{Simple climate models} (SCMs)~\cite{meinshausen2011emulating, meinshausen2020shared, millar2017modified, smith2018fair, leach2021fairv2} are an important class of models that can emulate climate projections. They propose a reduced-order representation of the climate system, modelling the carbon cycle, radiative forcing, and temperature response with simplified first order dynamics. A key component of SCMs is their \emph{energy balance model} (EBM) \cite{rypdal2012global, geoffroy2013transient, millar2017modified, fredriksen2017long, lovejoy2021fractional}, which describes the changes in global surface temperature through imbalances in Earth's energy budget~\cite{held2010probing}. It represents the atmosphere-ocean system as a set of connected boxes forced by radiative flux at the top of the atmosphere. Whilst this constitutes a drastic simplification of the climate system, EBMs are robust physically-motivated models, and therefore remain the main tool used to connect the IPCC working physical basis research~\cite{lee2021future} with the adaptation and mitigation efforts~\cite{portner2022climate, shukla2022climate}.

Another important class of emulators are \emph{data-driven emulators}. In contrast to SCMs, they do not explicitly model climate dynamics and rather rely on statistical modelling techniques to emulate key climate variables such as temperature or precipitation~\cite{del2019predicting, beusch2020emulating, link2019fldgen, goodwin2020computationally, castruccio2014statistical, holden2010dimensionally}. Statistical modelling enjoys greater flexibility and has produced powerful emulators, capable of successfully approximating ESMs' spatially resolved climate projections with visually indistinguishable outputs. More recently, statistically driven emulators drawing from advances in statistical machine learning have demonstrated a remarkable capacity at regressing global emission profiles onto spatially resolved temperature and precipitation maps~\cite{watsonparris2021climatebench}.

However, both SCMs and data-driven emulators display fundamental limitations. Whilst SCMs provide a robust physical framework to reason about the climate system, they remain a simplistic model which may display a poor fit to ESM's outputs in certain scenarios~\cite{meinshausen2011emulating, geoffroy2013transient, jackson2022errors} and can only operate at the global level. Reasoning only in terms of global mean temperatures fails to capture the difference in exposure of different world regions and limits use for scientific applications that require regional projections. On the other hand, data-driven emulators are also limited in their ability to provide a complete, reliable picture of the climate system. Indeed, their lack of robust physical grounding limits their capacity to make inferences about the climate system~\cite{beusch2020emulating, link2019fldgen}. Further, the outputs from these emulators are often subject to qualitative evaluation and may not be trustworthy\footnote{Whilst statistical explainability methods~\cite{shapley1953value, vstrumbelj2014explaining} may help understand the contributions to predictions, we argue that the lack of a physically grounded model would still harm trust in the predictions.}, in part because of limited understanding on how they might behave outside the training data regimes.

In this work, we address these limitations by formulating a hybrid physical/statistical emulator that will both enjoy the robust physical grounding of SCMs, and the flexibility of statistical machine learning methods, thereby combining the advantages of both classes of emulators. We propose a probabilistic emulator for the task of reproducing an ESM temperature response to greenhouse gas and aerosol emissions. Our model builds upon a simple EBM, but draws from the latent force model paradigm~\cite{alvarez2009latent, alvarez2013linear} by placing a Gaussian process (GP) prior over the radiative forcing, thereby inducing a stochastic temperature response model. We show that the resulting emulated temperature turns out to also be a GP, with a physically-informed covariance structure that reflects the dynamics of the EBM. In consequence, we can update the EBM with temperature data to learn a posterior distribution over global temperatures, but also over radiative forcing. Further, we demonstrate how our model can be easily extended to emulate spatially-resolved maps of annual surface temperatures.

Experiments demonstrate skilful prediction of global and spatial mean surface temperatures, with improvements over a simple EBM and a simple GP model. We obtain robust predictions even when only training with historical data, or when predicting over scenarios outside the range of emissions of greenhouse gases and aerosols observed during training. Additionally, we show that the model improves over baselines to emulate temperature changes induced by anthropogenic aerosol emissions, and can provide useful estimates of the top-of-atmosphere radiative forcing.

\section{Background}

\subsection{Energy Balance Models}

Energy balance models (EBMs)~\cite{sellers1969global, holden2010dimensionally} are a key component of SCMs that provide a lower-order representation of the climate system, where the changes in global temperature are explained by the imbalance in the Earth's energy budget. The most common and established class of EBM are box models. They represent the atmosphere and ocean as a set of vertically stacked boxes, where the uppermost box is exposed to top-of-atmosphere radiative flux.

In a box model EBM, each box has a different heat capacity $C_i$, heat transfer coefficients $\kappa_i$ with the adjacent boxes, and its own temperature $T^{(i)}(t)$, as depicted in Figure~\ref{fig:k-box-model}. The uppermost box represents the fast components of the climate system, generally limited to the atmosphere and the land surface, while the following boxes are used to represent slower components of the climate system such as shallow ocean and deep ocean.

\begin{figure}[h]
    \centering
    \includegraphics[width=0.9\linewidth]{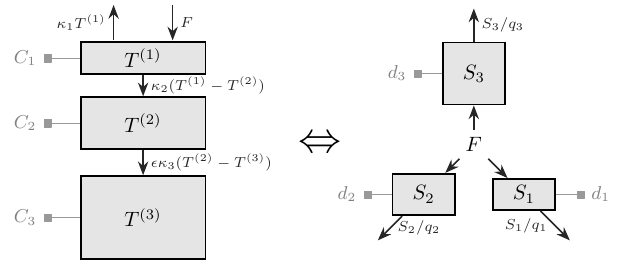}
    \caption{Schematic representation of $3$-box energy balance model, the arrows represent the flow of heat. \textbf{Left:} Temperature response model representation. Thicker boxes have a larger heat capacity (i.e.\ $C_1 < C_2 < C_3$) and the arrows represent the flow of heat. $\kappa_1, \kappa_2, \kappa_3$ are the heat transfer coefficients between the boxes and $\epsilon > 0$ denotes the deep ocean heat uptake efficacy. \textbf{Right:} Thermal response model representation. Thicker boxes have larger response timescale (i.e.\ $d_1 < d_2 < d_3$). Adapted from (Figure 1, Cummins et al., 2020).}
    \label{fig:k-box-model}
\end{figure}

Let $\bX(t)  = \begin{bmatrix}T^{(1)}(t) & \ldots & T^{(k)}(t)\end{bmatrix}^\top$
be the vector concatenation of the temperatures within the boxes in a model with $k$ boxes. The change in temperature of a $k$-box EBM is described by the following simple first order linear ordinary differential equation (ODE)
\begin{equation}\label{eq:temperature-response-ode}
    \frac{\d\bX(t)}{\d t} = \bA \bX(t) + \bb F(t),
\end{equation}
where $\bA$ is a tridiagonal temperature feedback matrix\footnote{explicit form of the matrix is provided in Appendix~\ref{appendix:derivation-of-fairgp}} that depends on heat capacities $C_i$, heat transfer coefficients $\kappa_i$, and deep ocean uptake efficacy $\varepsilon$, and $\bb$ is a radiative forcing feedback vector given by
\begin{equation}\label{eq:b-vector}
    \bb = \begin{bmatrix} 1/C_1 & 0 & \ldots & 0\end{bmatrix}^\top.
\end{equation}
$F(t)$ denotes the top of atmosphere effective radiative forcing --- radiative forcing for short --- i.e.\ the change in energy flux caused by natural or anthropogenic factors of climate change. It is only applied to the surface box, i.e.\ the uppermost box.

Whilst physically motivated, box-models remain an abstract representation of the climate system. Therefore their parameters, such as boxes heat capacities, are not realistic quantities that can be calculated, but rather need to be tuned against data using calibration methods~\cite{meinshausen2011emulating, millar2017modified} or maximum likelihood strategies~\cite{cummins2020optimal}.

The simplest box models only use $k = 2$ boxes, thereby splitting the climate system into 2 groups: fast and slow components. Whilst reductive, this split has proven to be a robust approximation of the climate system~\cite{held2010probing}. In fact, 2-box EBMs are today the primary tool used for linking the IPCC physical basis research~\cite{lee2021future} with the adaptation and mitigation efforts~\cite{portner2022climate, shukla2022climate}. It is however worth noting recent advocacy for 3-box models~\cite{leach2021fairv2}, highlighting the insufficiency of 2-box models to capture the full range of behaviour observed in CMIP6 models~\cite{tsutsui2020diagnosing, cummins2020optimal}.


\begin{table}[t]
\centering
    \vspace*{-3em}
    \caption{Notation, description and unit of named parameters.}
    \vspace*{-1em}
    \begin{subtable}[t]{\textwidth}
        \centering
        \caption{Deterministic parameters.}
        \vspace*{-0.5em}
        \begin{tabular}{lcc}\toprule
        Notation & Unit & Description \\ \midrule
            $F(t)$ & Wm\textsuperscript{-2} & Effective top of atmosphere radiative forcing \\
            $T^{(i)}(t)$ & K & Temperature of the $i$\textsuperscript{th} box \\
            $C_i$ & JK\textsuperscript{-1} & Heat capacity of the $i$\textsuperscript{th} box \\
            $\kappa_i$ & Wm\textsuperscript{-2}K\textsuperscript{-1} & Heat exchange coefficient of the $i$\textsuperscript{th} box \\
            $\varepsilon$ & 1 & Deep ocean heat uptake efficacy \\
            $T(t)$ & K & Global mean surface temperature ($T^{(1)}(t)$)\\
            $S_i(t)$ & K & Response of the $i$\textsuperscript{th} thermal box \\
            $d_i$ & years & Response timescale of $i$\textsuperscript{th} thermal box \\
            $q_i$ & KW\textsuperscript{-1}m\textsuperscript{2} & Equilibrium response of $i$\textsuperscript{th} thermal box \\ 
            $\chi$ & - & Atmospheric agent (e.g. CO\textsubscript{2}, SO\textsubscript{2}) \\
            $E(t)$ & - & Emission of agents in the atmosphere \\ \bottomrule
        \end{tabular}
        \label{table:deterministic-variables}
    \end{subtable}
    \begin{subtable}[t]{\textwidth}
        \centering
        \caption{Stochastic parameters.}
        \vspace*{-0.5em}
        \begin{tabular}{lcc}\toprule
        Notation & Unit & Description \\ \midrule
            $\sfF(t)$ & Wm\textsuperscript{-2} & Effective top of atmosphere radiative forcing \\
            $\sfT(t)$ & K & Global mean surface temperature \\
            $\sfS_i(t)$ & K & Response of the $i$\textsuperscript{th} thermal box \\ \bottomrule
        \end{tabular}
        \label{table:stochastic-variables}
    \end{subtable}
    \vspace*{-2em}
\end{table}

\subsection{Impulse response formulation}\label{subsection:impulse-response-formulation}

The main box of interest in a box-model is the uppermost box since it describes the global mean surface temperature response $T^{(1)}(t)$ to the radiative forcing $F(t)$. In general, computing $T^{(1)}(t)$ can be simplified by diagonalising the ODE (\ref{eq:temperature-response-ode}). The result is an equivalent impulse response formulation of the temperature response model, depicted on the right in Figure~\ref{fig:k-box-model}, and referred to as the \emph{thermal response ODE}~\cite{millar2017modified}.

Let us rename the temperature of the first box as $T(t)$ for simplicity. Then, for a $k$-box model, the thermal response ODE is formally given by
\begin{equation}\label{eq:impulse-response-formulation}
    \left\{
    \begin{aligned}
        \begin{split}
            & \frac{\d S_i(t)}{\d t} = \frac{1}{d_i} \left(q_i F(t) - S_i(t)\right), \quad 1\leq i\leq k \\
            & T(t) = \sum_{i=1}^k S_i(t),
        \end{split}
    \end{aligned}
    \right.
\end{equation}
where $S_i(t)$ is the $i$\textsuperscript{th} thermal response, $d_i$ is the $i$\textsuperscript{th} response timescale and $q_i$ is the $i$\textsuperscript{th} equilibrium response. Table~\ref{table:deterministic-variables} provides a brief description of named parameters and units.

Whilst it is harder to interpret the impulse response formulation in terms of physical processes, its primary benefit is that each thermal response ODE can be solved independently, thereby avoiding the intricacies of solving a coupled system of ODE. Further, the timescale and equilibrium parameters $d_i$ and $q_i$ can be expressed in terms of the original boxes heat capacities $C_i$ and heat transfer coefficients $\kappa_i$. Detailed derivations can be found in \cite{geoffroy2013transient}.

Throughout this work, we will use as a reference SCM FaIRv2.0.0~\cite{leach2021fairv2} (for Finite amplitude Impulse Response), a recent update of a well established SCM~\cite{millar2017modified, smith2018fair}, that offers a minimal level of structural complexity. FaIRv2.0.0 is effectively composed of 3 submodels: a gas cycle model, which converts emissions to concentrations, a radiative forcing model, which converts concentrations to radiative forcing, and a temperature response model, which converts radiative forcing into temperatures. The temperature response model of FaIRv2.0.0 exactly corresponds to the impulse response EBM described in (\ref{eq:impulse-response-formulation}). We refer to reader to the work of \citeA{leach2021fairv2} for a comprehensive presentation of FaIRv2.0.0. In the rest of the paper, we permit ourselves to drop v.2.0.0 and refer to the model as FaIR.

\subsection{Gaussian processes}

Gaussian processes (GPs)~\cite{rasmussen2005gaussian} are a ubiquitous class of Bayesian priors over real-valued functions. They enjoy convenient closed-form expressions, principled uncertainty quantification, and cover a rich class of complex functions. As a result, they have been widely used in various nonlinear and nonparametric regression problems in geosciences~\cite{camps2016survey}. 

We say that a real-valued stochastic process function $\sfff(t)$ is a GP if any finite collection of its evaluations has a joint multivariate normal distribution. The parameters of this distribution are fully determined by the GP mean function $m(t) := \EE[\sfff(t)]$ and its covariance function $k(t, t') := \operatorname{Cov}(\sfff(t), \sfff(t'))$. For example, for two inputs $t_1, t_2$ we have that
\begin{equation}
    \begin{bmatrix} \sfff(t_1) \\ \sfff(t_2) \end{bmatrix} \sim \cN\left(\begin{bmatrix} m(t_1) \\ m(t_2) \end{bmatrix}, \begin{bmatrix} k(t_1, t_1) & k(t_1, t_2) \\ k(t_2, t_1) & k(t_2, t_2)\end{bmatrix}\right).
\end{equation}
Because this extends to any finite collection of inputs  $t_1, \ldots, t_n$, we say that GPs induce a distribution over the entire function $\sfff(t)$. Figure~\ref{fig:gp-background-prior-samples}~(a) shows examples of functions sampled from this distribution, referred to as \emph{sample paths}. Formally, since this distribution is fully determined by $m(t)$ and $k(t, t')$, we write
\begin{equation}
    \sfff(t)\sim\GP(m, k).
\end{equation}
\begin{figure}[h]
    \centering
    \vspace*{-2em}
    \includegraphics[width=\linewidth]{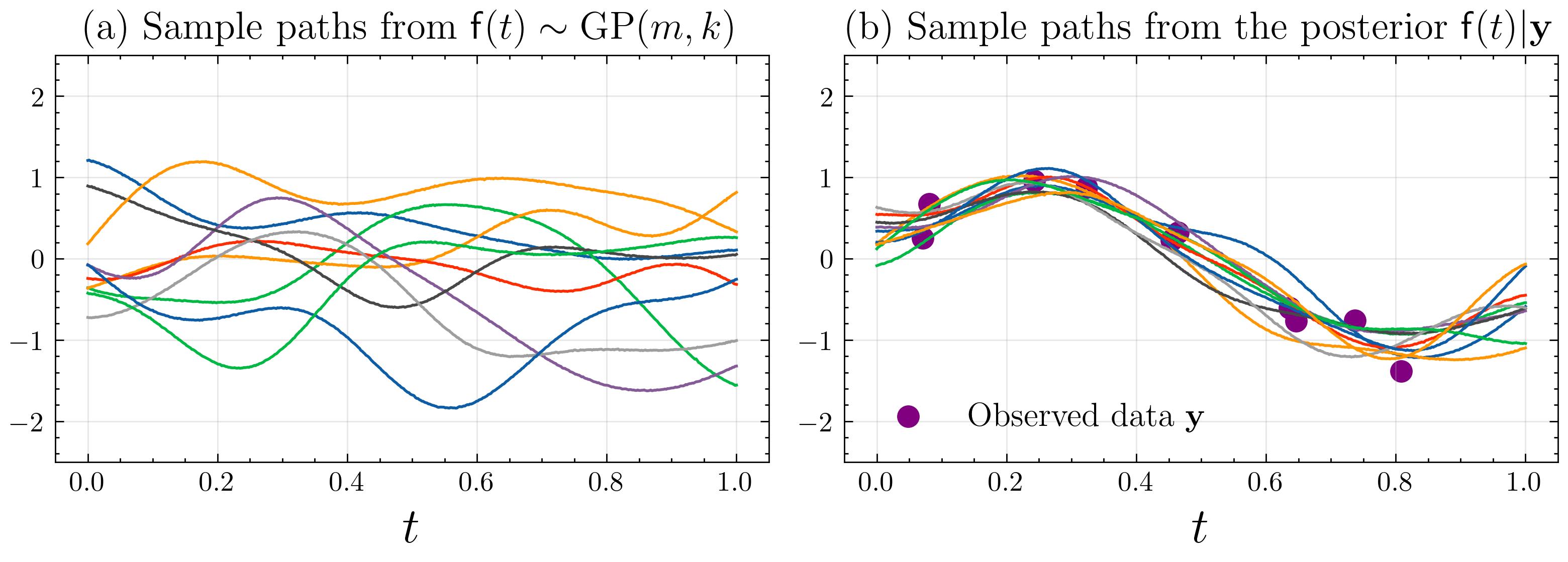}
    \vspace*{-2em}
    \caption{(a): 10 sample paths from $\sfff(t)\sim\GP(m, k)$ with $m(t) = 0$ and $k(t, t') = \sigma_\sfff^2 \exp(-|t-t|^2/\ell)$, $\sigma_\sfff^2 = 0.43$, $\ell = 0.17$. (b): 10 sample paths from the posterior $\sfff(t)|\by$ obtained by updating the prior. The data $\by$ is sampled from $\sfy = \sin(2\pi t) + \epsilon$ with $\epsilon\sim \cN(0, 0.04)$.}
    \vspace*{-3em}
    \label{fig:gp-background-prior-samples}
\end{figure}

When using GPs as a Bayesian prior over a function, the mean and covariance functions are typically user-specified. The mean $m(t)$ can be set as any function specifying the average behaviour of the function. The covariance function $k(t, t')$, commonly called \emph{kernel}, is chosen to specify the correlation between $\sfff(t)$ and $\sfff(t')$ by computing a notion of similarity between $t$ and $t'$. For example, a widely used family of kernels are the Matérn kernels~\cite{stein1999interpolation}, where the one-dimensional Matérn-1/2 kernel is given by
\begin{equation}
    k(t, t') = \sigma^2_\sfff \exp\left(-\frac{|t-t'|}{\ell}\right),
\end{equation}
with variance parameter $\sigma^2_\sfff$ and lengthscale parameter $\ell$. When $t$ and $t'$ are similar, i.e\ $|t - t'| \ll \ell$, this kernel specifies a strong correlation between $\sfff(t)$ and $\sfff(t')$, i.e.\ $\operatorname{Cov}(\sfff(t), \sfff(t')) / \sigma^2_\sfff$ $\approx 1$. Conversely, if $t$ and $t'$ are dissimilar, i.e.\ $|t - t'| \gg \ell$, the kernel tends to zero and specifies almost no correlation between $\sfff(t)$ and $\sfff(t')$.

Other popular choices of kernel include the squared exponential kernel $k(t, t') = \sigma^2_\sfff \exp(|t - t'|^2/\ell)$ which specifies a smooth correlation, or the cosine kernel $k(t, t') = \sigma^2_\sfff \cos(|t-t'|/\ell)$ which specifies a periodic correlation~\cite[Chapter 4]{rasmussen2005gaussian}. More broadly, kernels (and hence GPs) can also be defined over multidimensional inputs, for example by substituting $|t - t'|$ by the Euclidean distance between input vectors. When each dimension has a different lengthscale parameter, the kernel is referred to as an anisotropic, or automatic relevance determination, kernel~\cite[Chapter 5.1]{rasmussen2005gaussian}.

Let $\by = \begin{bmatrix} y_1 & \ldots & y_n\end{bmatrix}^\top$ denote independent data samples at inputs $\bt = \begin{bmatrix} t_1 & \ldots & t_n\end{bmatrix}^\top$ from the noisy data generating process $\sfy = \sfff(t) + \epsilon$, where $\epsilon\sim \cN(0, \sigma_\epsilon^2)$. It is possible to inform the GP with this data, thereby updating the prior into a posterior. In particular, the posterior distribution gives rise to a \emph{posterior GP}, with a posterior mean function $\bar m(t)$ and a posterior covariance function $\bar k(t, t')$,
\begin{equation}
    \sfff(t)\mid \by \sim \GP(\bar m,  \bar k).
\end{equation}
In other words, the posterior GP defines a new distribution over functions, parametrised by $\bar m(t)$ and $\bar k(t, t')$, which is informed by collected data. Figure~\ref{fig:gp-background-prior-samples}~(b) shows examples of posterior sample paths, which are now concentrated around the collected data points.

A great advantage of GPs is that the posterior mean and covariance can be analytically computed following
\begin{align}
    \bar m(t) & = m(t) + k(t, \bt)(\bK + \sigma_\epsilon^2\bI_n)^{-1} (\by - m(\bt)), \label{eq:gp-posterior-mean-update-rule} \\
    \bar k(t, t') & = k(t, t') - k(t, \bt)(\bK + \sigma_\epsilon^2\bI_n)^{-1}k(\bt, t), \label{eq:gp-posterior-covariance-update-rule}
\end{align}
where $\bI_n$ denotes the identity matrix of size $n$, $k(t, \bt) = \begin{bmatrix}
    k(t, t_1) & \ldots & k(t, t_n)\end{bmatrix}$, $k(\bt, t) = k(t, \bt)^\top$ and $\bK = k(\bt, \bt) = \begin{bmatrix}k(t_i, t_j)\end{bmatrix}_{1\leq i, j\leq n}$ is the covariance matrix.

We provide in Appendix~\ref{appendix:illustrations-gp} a detailed illustrated walk-through of GP regression, and refer the reader to \cite{rasmussen2005gaussian} for a comprehensive overview of GPs in machine learning.

\newpage
\section{FaIRGP}\label{section:fairgp}

In this section, we present FaIRGP, a GP emulator for global mean surface temperature that builds upon the thermal response model from FaIR. We begin by motivating a Bayesian treatment of the radiative forcing and formulate a GP prior over the forcing. We show this modification naturally results in a stochastic formulation of the thermal response model, which admits a GP with a physically-informed covariance structure for solution. In addition, we show how this framework can seamlessly account for the climate internal variability. Finally, we provide closed-form expressions for the posterior distributions over temperature and forcing, which can readily be used for emulation.

In what follows, we adopt the following notational conventions: deterministic variables are denoted with serif font ($Z$), stochastic variables are denoted with sans-serif font ($\sfZ$) and vector/matrix versions are denoted in bold ($\bZ$, $\sfbZ$).

\subsection{Epistemic uncertainty over the radiative forcing model}\vspace*{-0.5em}

The top-of-atmosphere radiative forcing is the fundamental quantity used to describe Earth's energy imbalance. The value of this forcing is primarily determined by the emissions of greenhouse gases and aerosols in the atmosphere. In FaIR, the radiative forcing term $F(t)$ drives the temperature response model (\ref{eq:temperature-response-ode}). Having access to a reliable estimate of the forcing, and in particular, to the forcing response to greenhouse gas and aerosol emissions, is therefore critical to produce trustworthy emulated temperatures.

Computing an accurate estimate of the radiative forcing requires modelling how greenhouse gases and aerosols interact with radiations in the atmosphere, accounting for atmospheric adjustments and uncertainty in the measured parameters which may further complicate the calculation. The intricacy of such tasks has led to the development of forcing estimation methods which rely on simplifying assumptions. For example, in FaIR, the forcing model is formulated as a combination of logarithmic, linear and square-root terms of greenhouse gas and aerosol concentrations. Namely, let $\chi$ denote a given atmospheric agent (e.g.\ CO\textsubscript{2}, CH\textsubscript{4}, SO\textsubscript{2}), the radiative forcing induced by $\chi$ is modeled as
\begin{equation}\label{eq:F-chi-deterministic}
    F^\chi(t) = \alpha_{\log}^\chi \log\left(\frac{C^\chi(t)}{C_0^\chi}\right) + \alpha_{\mathrm{lin}}^\chi \left(C^\chi(t) - C_0^\chi\right) + \alpha_{\mathrm{sqrt}}^\chi \left(\sqrt{C^\chi(t)} - \sqrt{C_0^\chi}\right),
\end{equation}\vspace*{-0.5em}
where $C^\chi(t)$ denotes the concentration in the atmosphere of agent $\chi$, $C_0^\chi$ the agent concentration at pre-industrial period and $\alpha_{\log}^\chi, \alpha_{\mathrm{lin}}^\chi, \alpha_{\mathrm{sqrt}}^\chi$ are scalar sensitivity coefficients. The total radiative forcing is then obtained by combining the contribution of each agent
\begin{equation}\label{eq:F-deterministic}
    F(t) = \sum_{\chi} F^\chi(t).
\end{equation}\vspace*{-1em}

This choice of forcing model has a physical motivation drawing from model calculations and empirical studies of temperature and historical trajectories of these concentrations~\cite{manabe1967thermal, forster2007changes, myhre2014ipcc}, which have provided substantial evidence that the concentration-to-forcing relationship can be reasonably approximated by the combination of terms in (\ref{eq:F-chi-deterministic}). However, it is important to recall that such a functional form is primarily based on empirical physics, in contrast with the EBM which has a greater theoretical motivation. In fact, the values of the sensitivity coefficients $\alpha_{\log}^\chi, \alpha_{\mathrm{lin}}^\chi, \alpha_{\mathrm{sqrt}}^\chi$ can vary substantially depending on the data used and the fitting procedure~\cite{leach2021fairv2}.

From a statistical machine learning perspective, the concentration-to-forcing relationship is arguably the component of FaIR which is most akin to statistical modelling. Indeed, $\Phi(t) = \begin{bmatrix} \log C^\chi(t) & C^\chi(t) & \sqrt{C^\chi(t)} \end{bmatrix}^\top$ can be interpreted as a feature engineering vector, where the features relevance have been identified by physics domain knowledge. The relationship (\ref{eq:F-chi-deterministic}) can then be rearranged as postulating a linear model $F^\chi(t) = \boldsymbol{\alpha}^\top \Phi(t) + \alpha_0$, where the coefficients $\boldsymbol{\alpha}$ are fitted against climate model data and the intercept $\alpha_0$ is determined by the pre-industrial concentration $C_0^\chi$.

Because the radiative forcing is a key driving quantity of the climate system, and therefore of the emulator, we argue it is critical to account for the uncertainty introduced by the FaIR model of radiative forcing. Specifically, this choice of a simplified model of forcing reflects uncertainty caused by the lack of knowledge about the phenomenon we wish to describe, referred to as \emph{epistemic uncertainty}~\cite{hullermeier2021aleatoric}. Drawing from the latent force modelling literature~\cite{alvarez2009latent}, we propose to acknowledge the epistemic uncertainty over the forcing model functional form by treating the radiative forcing as an underspecified latent force. Specifically, following the work of \citeA{alvarez2013linear}, we propose to account for the epistemic uncertainty over the forcing model through a Bayesian formalism and place a GP prior over the radiative forcing.

\subsection{Radiative forcing as a Gaussian process}

To specify a GP prior over the radiative forcing, we must first specify a choice of mean function. We propose to use the deterministic forcing function $F(t)$ as the mean function of our GP. This choice ensures that our prior will behave on average like the FaIR forcing model. Second, we must introduce a covariance function that specifies how different forcing levels should correlate as a function of some input variable.

The radiative forcing level is primarily determined by the emission rate of short-lived forcing agents and the cumulative emissions of long-lived forcing agents. Therefore, we anticipate a correlation of forcing levels for similar long-lived and short-lived forcing agent emissions trajectories. Conversely, for dissimilar emissions trajectories, we expect the resulting forcing levels to be fairly different, resulting in a lower correlation. For this reason, we propose to specify our covariance function as a function of long-lived forcing agents' cumulative emissions and short-lived forcing agents' emission rates.

Namely, let $E^\chi_r(t)$ and $E^\chi_c(t)$ be respectively the rate of emission and the cumulative emission of agent $\chi$ in the atmosphere at time $t$. Let $E^{\text{LL}}_c(t) = \begin{bmatrix} E^{\chi_1}_c(t) & \ldots & E^{\chi_d}_c(t)\end{bmatrix}^\top$ be the vector concatenation of long-lived agents $\{\chi_i\}_{i=1}^p$ cumulative emissions, and let $E^{\text{SL}}_r(t) = \begin{bmatrix} E^{\chi_{p+1}}_r(t) & \ldots & E^{\chi_d}_r(t)\end{bmatrix}^\top$ be the vector concatenation of short-lived agents $\{\chi_i\}_{i=p+1}^d$ rate of emission. The covariance function we introduce takes the form
\begin{equation}\label{eq:prior-rho-specification}
    K(t, t') = \rho\left(E(t) - E'(t')\right),
\end{equation}
where $E(t) = \begin{bmatrix}E^{\text{LL}}_c(t) \!&\! E^{\text{SL}}_r(t)\end{bmatrix}^{\!\!\top}$ and $E'(t') = \begin{bmatrix} E^{'\text{LL}}_c(t') \!&\! E^{'\text{SL}}_r(t')\end{bmatrix}^{\!\!\top}$ are two vectors encapsulating the cumulative emissions of long-lived forcing agents and emission rates of short-lived forcing agents for two different scenarios at times $t$ and $t'$, and $\rho$ is a user-specified kernel.

\begin{figure}[h]
    \centering
    \vspace*{-0.5em}
    \includegraphics[width=0.6\linewidth]{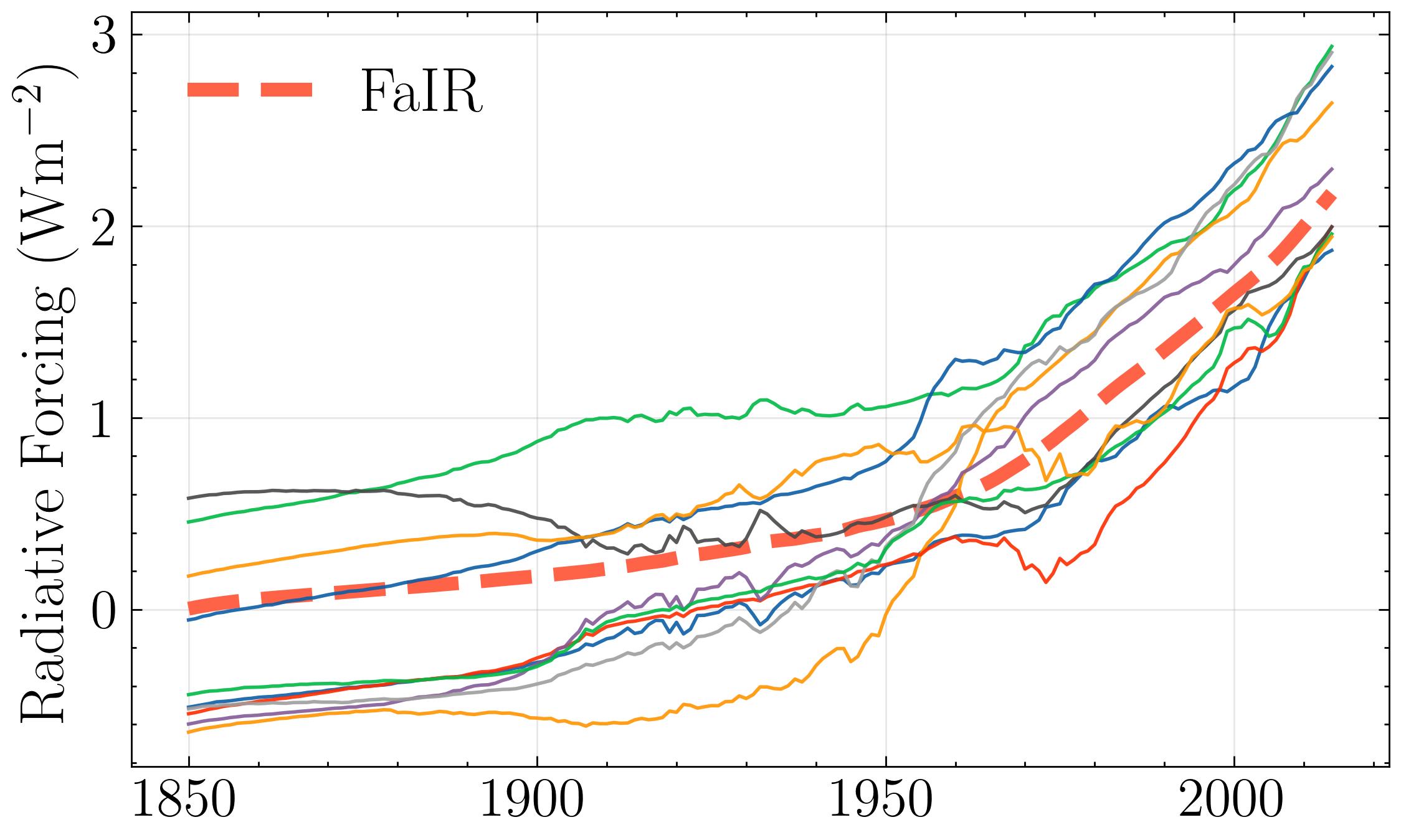}
    \vspace*{-0.5em}
    \caption{Example of sample paths from the radiative forcing prior $\sfF(t, E(t))\sim\GP(F, K)$ over the 1850-2014 period. The FaIR forcing response (dashed red) corresponds to the average sample path. The covariance function is taken as a Matérn-3/2 covariance.}
    \label{fig:sample-paths-from-sfF}
    \vspace*{-2em}
\end{figure}

With these notations, we can then formally define the prior we specify over the forcing model as
\begin{equation}
    \sfF(t, E(t)) \sim \GP(F, K),
\end{equation}
where we emphasise that $\sfF$ is now a stochastic process by using a sans-serif notation. This Bayesian formulation of the forcing effectively introduces a distribution over the functional forms the forcing response can take, and each sample path drawn from this distribution may describe a different function. Figure~\ref{fig:sample-paths-from-sfF} shows examples of sample paths from $\sfF(t, E(t))$.

Taking the mean function as the forcing model $F$ from (\ref{eq:F-chi-deterministic}) aligns sample paths with the FaIR forcing response on average. However, the probabilistic nature of this prior introduces a relaxation from exactly replicating the FaIR forcing response: sample paths can deviate from the mean, insofar as the forcing trajectories they describe maintain a correlation specified by $K(t, t')$ given their emission trajectories. This degree of freedom introduces a notion of variability over the forcing response, thereby allowing us to account for the epistemic uncertainty over the functional form of the FaIR forcing response.

To concretely understand how the covariance function affects $\sfF$, recall that for GPs,  $\operatorname{Cov}\big(\sfF(t, E(t)), \sfF(t', E'(t'))\big) = K(t, t')$. Therefore, if $E(t)$ and $E'(t')$ describe similar emission trajectories, the covariance $K(t, t')$ will take larger values, and specify a strong correlation between  $\sfF(t, E(t))$ and $\sfF(t', E'(t'))$. On the other hand, if $E(t)$ and $E'(t')$ describe very different emission trajectories, $K(t, t')$ will take values close to zero, and thus specify a low correlation between $\sfF(t, E(t))$ and $\sfF(t', E'(t'))$. Figure~\ref{fig:covariances-demo} shows an example of a covariance matrix between two scenarios with different CO\textsubscript{2} emissions trajectories.

The choice of kernel $\rho$ is an important choice as it specifies how correlation in forcings should be encoded as a function of emissions. For example, choosing a dirac kernel $\rho(E(t) - E'(t')) = \delta(E(t) - E'(t'))$ makes the forcing levels corresponding to $E(t)$ and $E'(t')$ independent, except if the emission trajectories are identical. Another extreme is to choose $\rho(E(t) - E'(t')) = 1$, which causes the forcing values taken to correlate regardless of emissions. In between these extremes, a widely used class of kernels are the Matérn family~\cite{stein1999interpolation}. They specify an exponentially decaying covariance as a function of the distance between emission trajectories, and most notably allow to control the smoothness of the GP sample path. Appendix~\ref{appendix:matern-covariance} provides detailed expressions of the Matérn covariance family and illustrations of their sample paths.

\begin{figure}[h]
    \centering
    \vspace*{-1.5em}
    \includegraphics[width=0.9\linewidth]{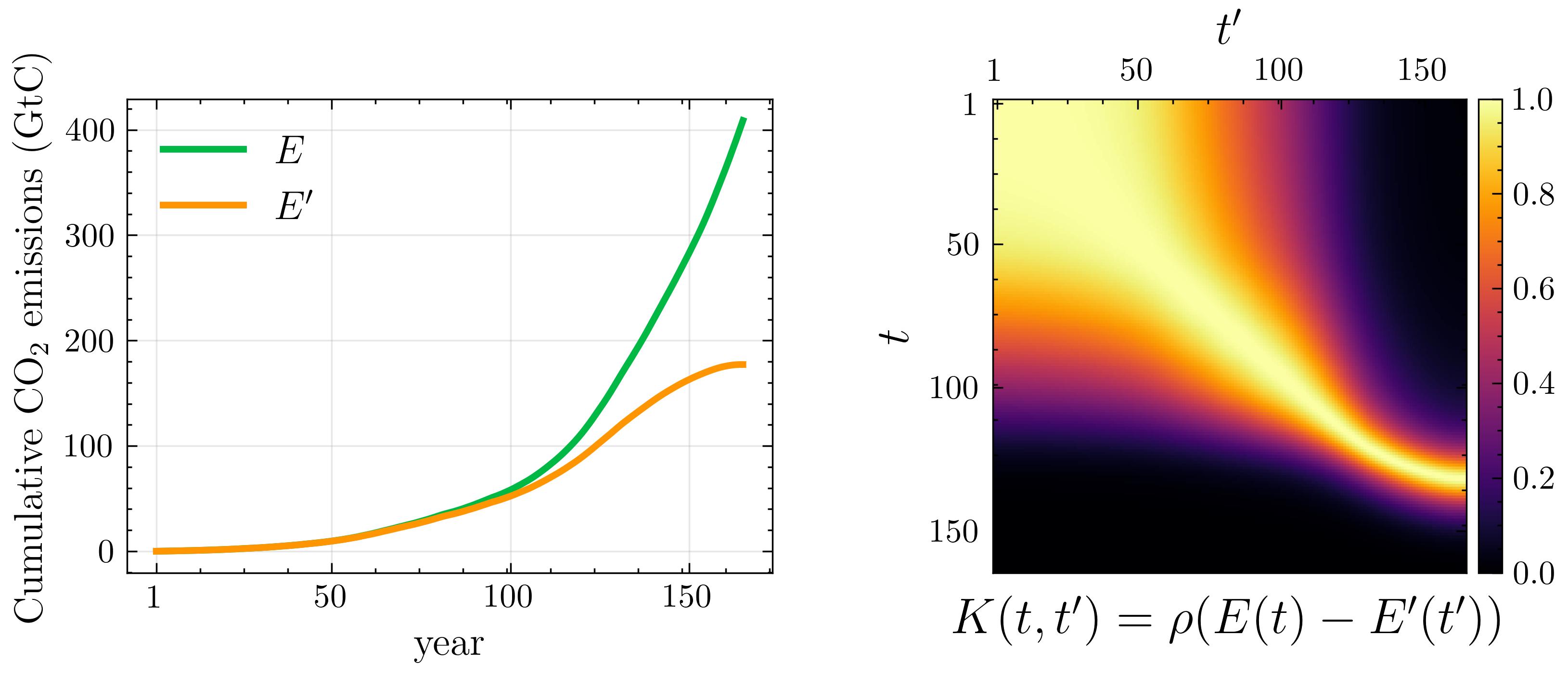}
    \vspace*{-0.5em}
    \caption{\textbf{Left} : Cumulative CO\textsubscript{2} emissions for a scenario with increasing emissions $E(t)$ and a scenario with reduced emissions $E'(t')$. \textbf{Right} : Cross-covariance matrix between emission scenarios for a Matérn-3/2 covariance, diagonals gives covariances for $t = t'$ and off-diagonals give covariances for $t\neq t'$. For $t, t' \leq 100$ the emission trajectories are similar, therefore $K(t, t')$ specifies a correlation between $\sfF(t, E(t))$ and $\sfF(t', E'(t'))$. For $t \geq 130$, $E(t)$ becomes dissimilar from the reduced emission scenario, therefore $K(t, t')$ specifies a low correlation. For $t'\geq 150$, $E'(t')$ stabilises to levels similar to $E(t\approx 130)$, therefore $K(t, t')$ specifies a correlation between $\sfF(t', E'(t'))$ and $\sfF(t\approx 130, E(t\approx 130))$.}
    \label{fig:covariances-demo}
    \vspace*{-2em}
\end{figure}

A popular choice within the Matérn family is the squared exponential kernel\footnote{to be exact, the squared exponential kernel corresponds to the limit of the Matérn family for an infinitely smooth kernel.}, which produces infinitely differentiable GP samples paths. We argue that this may not be faithful to real-world physical processes, which can exhibit abrupt changes. For this reason, we propose in this work to specify $\rho$ as an anisotropic Matérn-3/2 kernel given by
\begin{equation}
C_{3/2}\left(E(t) - E'(t')\right)  = \sigma^2_{\sfF} \left(1 + \sqrt{3\sum_{i=1}^d \left(\frac{\delta_i}{\ell_i}\right)^2}\right)\exp\left(-\sqrt{3\sum_{i=1}^d \left(\frac{\delta_i}{\ell_i}\right)^2}\right),
\end{equation}
where $\delta_i$ is the difference between the $i$\textsuperscript{th} component of $E(t)$ and $E'(t')$, $\sigma^2_{\sfF}$ is a variance parameter, $\ell_1, \ldots, \ell_d$ are independent lengthscale parameters associated to each atmospheric agent $\chi_1, \ldots, \chi_d$. The Matérn-3/2 covariance yields GP sample paths that are differentiable, thus allowing for smoothness, and yet permitting more abrupt changes compared to a squared exponential kernel.

In the remainder of the paper, we permit ourselves to drop notations and we denote the prior $\sfF(t)$ as a function of time only for the sake of conciseness. However, we emphasise that $\sfF(t)$, along with everything derived from it, fundamentally depends on emissions.

\subsection{Thermal response model with GP forcing}

We will now show that by combining the GP prior over the radiative forcing and the FaIR thermal response model, we obtain a GP prior over temperatures, which we name FaIRGP. Recall the thermal impulse response model presented in Section~\ref{subsection:impulse-response-formulation}, described by a system of $k$ independent linear first order ODEs
\begin{equation}\label{eq:jth-thermal-ode}
    \frac{\d S_i(t)}{\d t} = \frac{1}{d_i} (q_i F(t) - S_i(t)),
\end{equation}
where the thermal responses $S_i(t)$ are such that $T(t) = \sum_{i=1}^k S_i(t)$ is the global mean surface temperature.

We propose to substitute the deterministic forcing function $F(t)$ in (\ref{eq:jth-thermal-ode}) by its Bayesian counterpart $\sfF(t)$. In doing so, we naturally induce a stochastic version of the thermal impulse response model. Namely, the resulting $i$\textsuperscript{th} stochastic thermal response, which we denote $\sfS_i(t)$, will satisfy a differential equation forced by $\sfF(t)\sim\GP(m, K)$, and given by
\begin{equation}\label{eq:jth-thermal-sdee}
    \d \sfS_i(t)  = \frac{1}{d_i} (q_i \sfF(t) - \sfS_i(t))\d t.
\end{equation}

The resolution of this stochastic thermal response differential equation is similar to the resolution of the thermal response ODE. Namely, assuming $\sfS_i(0) = 0$ at pre-industrial time, the solution to (\ref{eq:jth-thermal-sdee}) takes the canonical form
\begin{equation}\label{eq:solution-to-sde}
    \sfS_i(t) = \frac{q_i}{d_i}\int_0^t \sfF(s) e^{-(t-s)/d_i}\d s.
\end{equation}

However, because $\sfF(t)$ is random, the solution is now a stochastic process. In particular, because $\sfF(t)$ is a GP to which we apply a linear convolution with an exponential function $e^{-(t-s)/d_i}$, we can show that $\sfS_i(t)$ must also be a GP, with its mean and covariance functions shaped by the form of this linear convolution.

Namely, the $i$\textsuperscript{th} thermal response can be characterised as a GP with the following mean and covariance functions
\begin{equation}\label{eq:sj-stochastic}
    \left\{
    \begin{aligned}
        \begin{split}
            & \sfS_i(t)  \sim \GP(m_i, k_{ii}) \\
            & m_i(t) = \frac{q_i}{d_i}\int_0^t F(s) e^{-(t-s)/d_i}\d s \\
            & k_{ii}(t, t') = \left(\frac{q_i}{d_i}\right)^2\int_0^t\int_0^{t'} K(s, s') e^{-(t-s)/d_i}e^{-(t'-s')/d_i}\d s \d s'.
        \end{split}
    \end{aligned}
    \right .
\end{equation}

We observe that the mean function $m_i(t)$ exactly corresponds to the solution of the deterministic thermal response ODE (\ref{eq:jth-thermal-ode}). This is consistent with our expectation: because the forcing sample paths from $\sfF(t)$ are specified to behave on average like the FaIR forcing response, the GP thermal response sample paths $\sfS_i(t)$ should behave on average like the FaIR thermal response model. Further, the covariance $k_{ii}(t, t')$ is expressed as a function of the forcing prior covariance $K$, but also of the parameters of the EBM, $d_i$ and $q_i$. As such, $k_{ii}(t, t')$ defines a physically-informed covariance structure that propagates the uncertainty over the forcing $\sfF(t)$ --- specified by our Bayesian prior --- into uncertainty over the thermal response $\sfS_i(t)$.


Under this model, the correlation $k_{ii}(t, t')$ between two stochastic thermal responses $\sfS_i(t)$ and $\sfS_i(t')$ is influenced by how fast the response of the $i$\textsuperscript{th} thermal box is, and by how much the radiative forcings trajectories leading to $\sfS_i(t)$ and $\sfS_i(t')$ did correlate. Indeed, the covariance in (\ref{eq:sj-stochastic}), expressed in integral form, specifies a stronger correlation if the past correlation $K(s, s')$ between stochastic forcing trajectories $(\sfF(s))_{s\leq t}$ and $(\sfF(s'))_{s'\leq t'}$ is greater. This aligns with the idea that similar forcing processes $\sfF(s)$ and $\sfF(s')$ should lead to similar thermal responses. Further, a fast thermal box (small $d_i$) primarily influences the integral with local correlations in forcings just before $t$ and $t'$, reflecting the short-term impact of fast components of the climate system. Conversely, a slow thermal box (large $d_i$) accumulates forcing covariance over an extended period before $t$ and $t'$, reflecting the long-term influence of slow components of the climate system. Figure~\ref{fig:demo-kjs} shows an example of cross-covariance matrices for a stochastic two-box thermal response model.

\begin{figure}[h]
    \centering
    \vspace*{-1.5em}
    \includegraphics[width=0.9\linewidth]{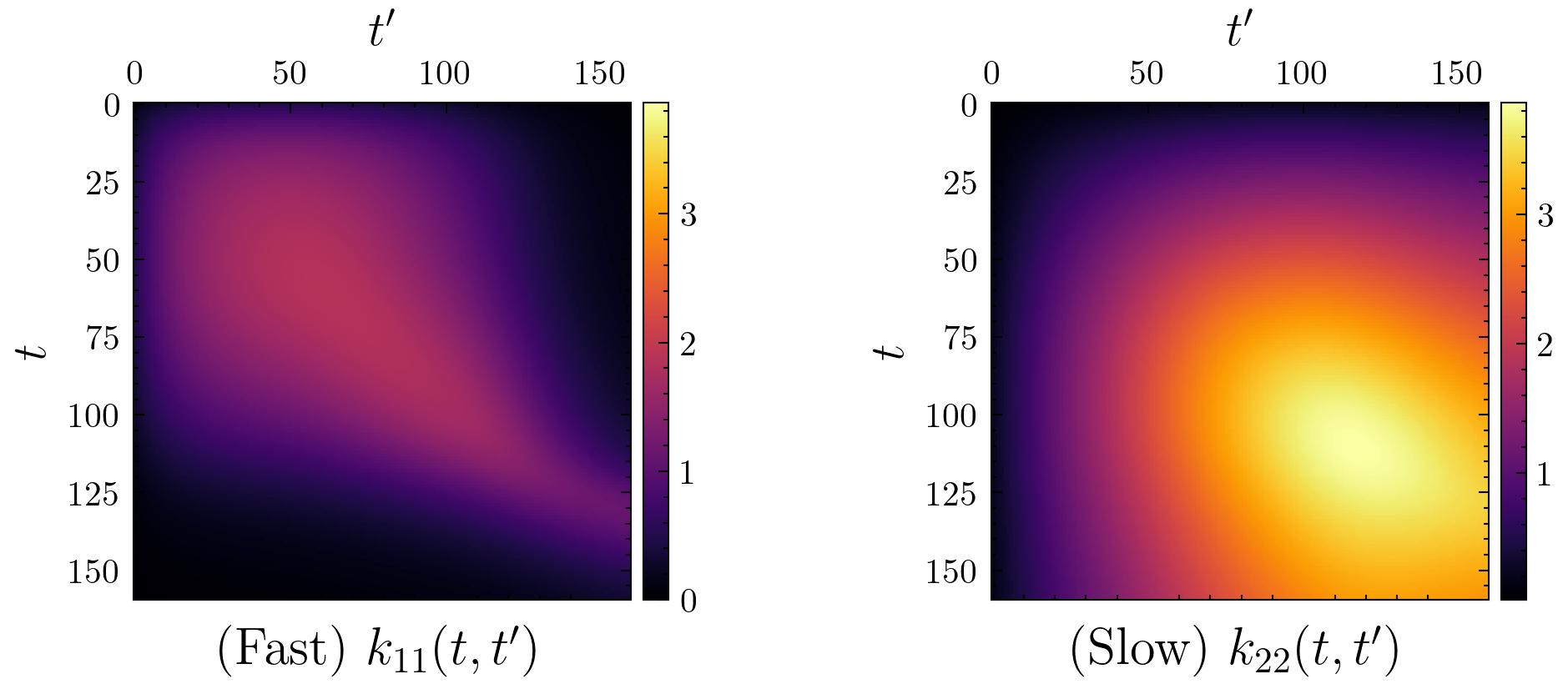}
    \caption{Cross-covariance matrices for the stochastic thermal responses $\sfS_1(t)$ and $\sfS_2(t)$ from a 2-box model under the two emission scenarios from Figure~\ref{fig:covariances-demo}. \textbf{Left} : Cross-covariance matrix of a fast thermal box with $d_1 = 8$ years and $q_1 = 2$ KW\textsuperscript{-1}m\textsuperscript{-2}. Only the forcings correlations at times right before $(t, t')$ influence $k_{11}(t, t')$, hence the specified correlation is moderate and the form of the thermal box covariance matrix is similar to the form of the forcing covariance matrix $K(t, t')$ in Figure~\ref{fig:covariances-demo}. \textbf{Right} : Cross-covariance matrix of a slow thermal box with $d_2 = 80$ years and $q_2 = 20$ KW\textsuperscript{-1}m\textsuperscript{-2}. The covariance $k_{22}(t, t')$ integrates forcing correlations over an extended period, which specifies a stronger correlation, and additionally models the memory effect of the climate system: even after the two emission scenarios start to diverge for $t, t'\geq 130$, the cross-covariance remains important.}
    \vspace*{-2em}
    \label{fig:demo-kjs}
\end{figure}

If we now define the global mean surface temperature as the sum of thermal response GPs $\sfT(t) = \sum_{i=1}^k \sfS_i(t)$, then we can show that $\sfT(t)$ must also be a GP. Namely, let us define the cross-covariance function between thermal responses $\sfS_i(t)$ and $\sfS_j(t')$ as
\begin{equation}
    k_{ij}(t, t') := \operatorname{Cov}(\sfS_i(t), \sfS_j(t')) = \frac{q_i q_j}{d_i d_j}\int_0^t\int_0^{t'} K(s, s') e^{-(t-s)/d_i}e^{-(t'-s')/d_j}\d s \d s',
\end{equation}
which describes how the stochastic thermal responses from different boxes should correlate within our model. Then, the global mean surface temperature is a GP specified by the following mean and covariance functions

\begin{equation}\label{eq:t-stochastic}
    \left\{
    \begin{aligned}
        \begin{split}
            & \sfT(t)  \sim \GP(m_\sfT, k_\sfT), \\
            & m_\sfT(t) = \sum_{i=1}^k m_i(t), \\
            & k_\sfT(t, t')  = \sum_{i=1}^k\sum_{j=1}^k k_{ij}(t,t').
        \end{split}
    \end{aligned}
    \right.
\end{equation}

The mean function $m_\sfT(t)$ exactly corresponds to the FaIR temperature response model from (\ref{eq:impulse-response-formulation}). This is again consistent with the expectation that since $\sfF(t)$ behaves on average like the FaIR forcing model, our GP temperature response $\sfT(t)$ should behave on average like the FaIR temperature response. The covariance function $k_\sfT(t, t')$, which specifies the correlation between $\sfT(t)$ and $\sfT(t')$, includes the covariances of individual thermal boxes $k_{ii}(t, t')$, aligning with the intuition that if thermal responses correlate, so should global mean surface temperatures. However, we note $k_\sfT(t, t')$ also considers correlations between different thermal boxes $k_{ij}(t, t')$ for $i\neq j$. Therefore, if certain components of the climate system exhibit correlated responses across different timescales, this contributes positively to the correlation between $\sfT(t)$ and $\sfT(t')$ within our model.

By specifying a GP prior over the radiative forcing, we have obtained a GP prior over the temperature that uses the FaIR thermal response model as its backbone, which we name FaIRGP. Using FaIRGP serves as a principled measure of epistemic uncertainty over the emulator design. In particular, the integral form of its covariance function allows us to account for past trajectories, thereby capturing the climate system memory effect, and producing robust uncertainty estimates. In addition, as we will see in Section~\ref{subsection:posterior-distribution}, FaIRGP can go beyond a standard impulse response model by learning from data using standard GP regression techniques. We emphasise that whilst we abuse notations for conciseness, the forcing prior $\sfF(t)$ is effectively a function of emissions (or cumulative emissions) through its covariance function $\rho\left(E(t) - E'(t')\right)$. Therefore, $\sfT(t)$ is also a function of emissions, and its covariance function can be understood as \enquote{$k_\sfT\left(E(t), E'(t')\right)$}.

\subsection{Accounting for climate internal variability}\label{subsection:fairgp-internal-variability}

An important component of the climate system is its internal variability, which integrates the effects of weather phenomena --- typically operating on the scale of days --- into elements of the climate system, such as the ocean, cryosphere and land vegetation --- which rather operate on the scale of months, years or decades~\cite{hasselman1976stochastic}.

The climate internal variability can classically be modelled in a $k$-box model by introducing a white noise forcing disturbance over the uppermost box, i.e.\ the atmosphere and land surface box~\cite{cummins2020optimal}. Formally, let $\sfB(t)$ be the standard one-dimensional Brownian motion and let $\sfbX(t)$ denote a stochastic version of the $k$-box model temperatures from (\ref{eq:temperature-response-ode}). The temperature response model with internal variability is given by
\begin{equation}
    \d \sfbX(t) = \bA \sfbX(t)\d t + \bb F(t)\d t + \sigma\bb \d\sfB(t),
\end{equation}
where $\sigma > 0$ is a variance term that controls the amplitude of the white noise, and we recall that $\bb$ has zero everywhere but its first entry, therefore the white noise disturbance is only applied to the uppermost box. When the radiative forcing $F(t)$ is deterministic, this is equivalent to adding red noise onto the global mean surface temperature, or equivalently, modelling it as an Ornstein-Uhlenbeck process\footnote{In the literature, it is common to encounter its discrete time analogue, the autoregressive process of order 1 AR(1).}. The corresponding diagonalised impulse response system is given by
\begin{equation}
    \d\sfS_i(t) = \frac{1}{d_i}\left(q_i F(t) - \sfS_i(t)\right)\d t + \sigma \frac{q_i}{d_i}\d\sfB(t),
\end{equation}
where derivations are detailed in Appendix~\ref{appendix:internal-variability-proof}. If we now again substitute the deterministic forcing $F(t)$ with the GP forcing $\sfF(t)$, it can be shown that the long time regime solution to this stochastic impulse response system can be expressed as
\begin{equation}\label{eq:sj-stochastic-with-internal-variability}
    \sfS_i(t) \sim \GP(m_i, k_{ii} + \sigma^2\gamma_i),
\end{equation}
where $\gamma_i(t, t')$ is an exponential, or Matérn-1/2, kernel function given by
\begin{equation}\label{eq:gamma-internal-variability}
    \gamma_i(t, t') = \frac{q_i^2}{2d_i} \exp\left(-\frac{|t - t'|}{d_i}\right).
\end{equation}

Therefore, FaIRGP can easily model the internal variability of the climate system simply by modifying its covariance structure. We observe that accounting for internal variability essentially corresponds to adding an independent autocorrelated noise process $\upxi_i(t) \sim \GP(0, \sigma^2\gamma_i)$ to the thermal response GP obtained in (\ref{eq:sj-stochastic}).

\begin{figure}[t]
    \centering
    \vspace*{-0.5em}
    \includegraphics[width=\linewidth]{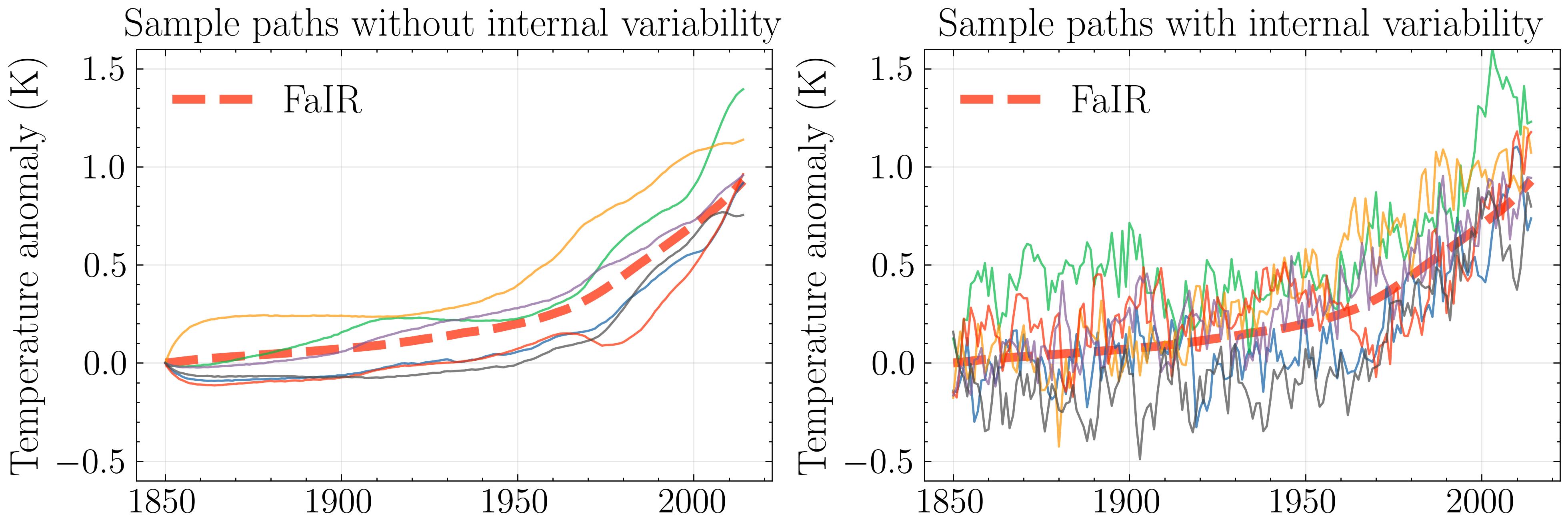}
    \vspace*{-2em}
    \caption{\textbf{Left}: Example of sample paths from the FaIRGP prior $\sfT(t)\sim\GP(m_\sfT, k_\sfT)$ without internal variability over the 1850-2014 period. \textbf{Right}: Example of sample paths from the FaIRGP prior with internal variability $\sfT(t)\sim\GP(m_\sfT, k_\sfT + \sigma^2\gamma_\sfT)$ over the same period. The FaIR temperature response (dashed red) corresponds in both cases to the average sample path.}
    \vspace*{-2em}
    \label{fig:fairgp-sample-paths}
\end{figure}

Ultimately, we sum the thermal response together to obtain a surface temperature GP $\sfT(t) = \sum_{i=1}^k \sfS_i(t)$, which in the long time regime takes the form
\begin{equation}
    \left\{
    \begin{aligned}
        \begin{split}
            & \sfT(t)  \sim \GP\left(m_\sfT, k_\sfT + \sigma^2 \gamma_\sfT\right) \\
            & \gamma_\sfT(t, t') = \sum_{i=1}^k \nu_i \gamma_i(t, t'),
        \end{split}
    \end{aligned}
    \right.
\end{equation}
where $\nu_i$ is a dimensionless weight determined by $q_i, d_i$ which accounts for cross-covariances between thermal responses internal variabilities. Its detailed expression can be found in Appendix~\ref{appendix:internal-variability-proof}.

Therefore, we can account for the climate internal variability simply by adding an independent noise process $\upxi_\sfT(t) = \sum_{i=1}^k \upxi_i(t) \sim \GP(0, \sigma^2\gamma_\sfT)$ to the temperature response GP obtained in (\ref{eq:t-stochastic}). Figure~\ref{fig:fairgp-sample-paths} shows examples of sample paths from the FaIRGP prior with and without accounting for internal variability.


\subsection{Posterior distribution over temperature and radiative forcing}\label{subsection:posterior-distribution}

In FaIRGP, the surface temperature response and the radiative forcing model can both be informed by global temperature data. Using standard GP regression techniques~\cite{rasmussen2005gaussian}, FaIRGP can learn from data how to deviate from its backbone impulse response model to best account for actual temperature simulated by a climate model. The resulting model can then be used to emulate future temperatures, benefiting from both the robustness of its prior and the flexibility of GP regression. As we will demonstrate in Section~\ref{section:global-experiments}, we can inform FaIRGP with historical observations and climate model data.

Assume we are under a fixed emission scenario e.g.\ working with historical emissions only. For timesteps $t_1 < \ldots < t_n$, suppose we simulate global mean surface temperatures $T_1, \ldots, T_n$, and also have access to greenhouse gas and aerosol emissions data $E_1, \ldots, E_n \in \RR^d$, where $d$ corresponds to the number of atmospheric agents $\chi_1, \ldots, \chi_d$. We concatenate this data into $\bt = \begin{bmatrix} t_1 & \ldots & t_n\end{bmatrix}^\top$, $\bE = \begin{bmatrix} E_1 & \ldots & E_n\end{bmatrix}^\top$ and $\bT = \begin{bmatrix}T_1 & \ldots & T_n\end{bmatrix}^\top$.


With notation abuse, the kernel $k_\sfT$ allows to specify how the FaIRGP prior correlates between times $t_i$ and $t_j$ by computing the similarity between emission data, i.e.\ $k_\sfT(t_i, t_j) = k_\sfT(E_i, E_j)$, because the forcing prior kernel $\rho$ is a function of emissions. The internal variability kernel $\gamma_\sfT$ allows us to specify how the additive internal variability noise process correlates between times $t_i$ and $t_j$ by computing $\gamma_\sfT(t_i, t_j)$. By extending this to the entire dataset, we can define the covariance matrices induced by $k_\sfT$ and $\gamma_\sfT$ over $\bE$ and $\bt$ following
\begin{align}
    \bK & = k_\sfT(\bE, \bE) = \begin{bmatrix}k_\sfT(E_i, E_j)\end{bmatrix}_{1\leq i,j\leq n} \\
    \bGamma & = \gamma_\sfT(\bt, \bt) = \begin{bmatrix}\gamma_{\sfT}(t_i, t_j)\end{bmatrix}_{1\leq i, j\leq n}.
\end{align}\vspace*{-2em}

Intuitively, $\bK$ captures how much the values taken by the stochastic global mean temperature response from the FaIRGP prior will correlate between different times. Similarly, $\bGamma$ captures how much the values taken by the internal variability process $\upxi_\sfT$ correlate between different time steps. Combining them into $\bK + \sigma^2\bGamma$, we obtain a covariance matrix that captures the total correlation we have between different times for the values taken by the FaIRGP prior with internal variability.

\begin{figure}[h]
    \centering
    \vspace*{-0.8em}
    \includegraphics[width=\linewidth]{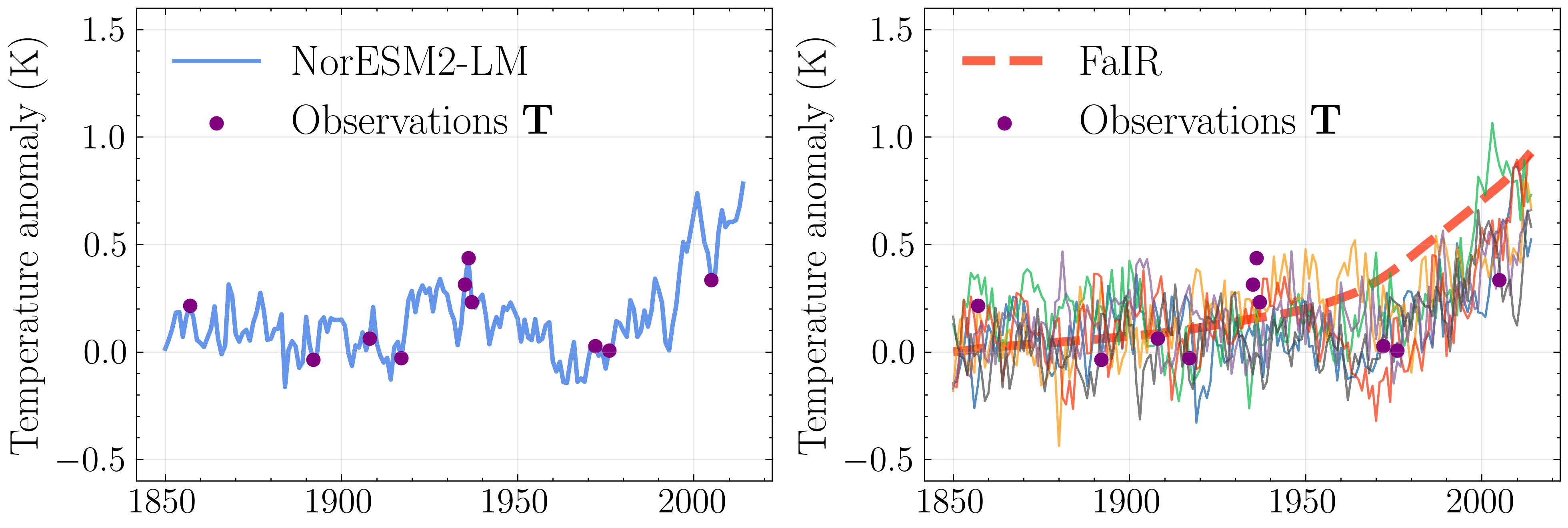}
    \vspace*{-1.5em}
    \caption{\textbf{Left}: Historical global mean surface temperature anomaly simulated with NorESM2-LM from which we randomly extract 10 data points (purple). \textbf{Right}: Example of sample paths from the FaIRGP posterior $\sfT(t)|\bT$ conditioned on the data. In contrast with the FaIRGP prior sample paths (see Figure~\ref{fig:fairgp-sample-paths}), posterior sample paths are more concentrated around data points and deviate from the FaIR temperature response (dashed red).}
    \label{fig:fairgp-posterior-sample-paths}
    \vspace*{-2.2em}
\end{figure}

Because it captures both the effect of internal variability and of the global mean temperature response to changes in emissions, the covariance matrix $\bK + \sigma^2\bGamma$ is well suited to capture the correlation structure we get from global mean surface temperature data $\bT$ simulated with a climate model. Using this new covariance matrix, we can then apply the update rules from (\ref{eq:gp-posterior-mean-update-rule}) and (\ref{eq:gp-posterior-covariance-update-rule}) to update our prior over $\sfT(t)$ with climate model data $\bT$, and obtain a posterior GP over global mean surface temperatures given by
\begin{equation}\label{eq:posterior-temperature}
    \sfT(t)\mid \bT \sim \GP(\bar m_\sfT, \bar k_\sfT),
\end{equation}
where the posterior mean $\bar m_\sfT(t)$ and the posterior covariance $\bar k_\sfT(t, t')$ have the following closed-form expressions
\begin{align}\label{eq:posterior-temperature-mean}
    \underbrace{\bar m_\sfT(t)}_{\text{posterior mean}} & = \underbrace{m_\sfT(t)}_{\text{prior mean}} \!\! + \,\,\, \underbrace{k_\sfT(t, \bt)(\bK + \sigma^2 \bGamma)^{-1}(\bT - m_\sfT(\bt))}_{\text{posterior correction}} \\
    \label{eq:posterior-temperature-covar}
    \underbrace{\bar k_\sfT(t, t')}_{\text{posterior covariance}} & = \underbrace{k_\sfT(t, t')}_{\text{prior covariance}} \!\!\! - \,\,\,\underbrace{k_\sfT(t, \bt)(\bK + \sigma^2 \bGamma)^{-1}k_\sfT(\bt, t')}_{\text{posterior correction}}.
\end{align}

The posterior mean $\bar m_\sfT(t)$ explicitly corrects the prior mean $m_\sfT(t)$, which we recall corresponds to the FaIR deterministic temperature response model. The correction introduced by the posterior mean in (\ref{eq:posterior-temperature-mean}) is a data-driven correction term, informed by the temperature data $\bT$, times $\bt$, and the emissions $\bE$ (which are used to compute the covariance matrix $\bK$). Intuitively, the correction term regresses the discrepancy between the simulated temperature data and the FaIR temperature response onto the emission data. Thus, it learns how much to deviate from the FaIR temperature response model to better fit climate model data, given emissions data. Similarly, the posterior covariance $\bar k_\sfT(t, t')$ explicitly adjusts the prior covariance $k_\sfT(t, t')$ with a data-driven correction term. Intuitively, this correction term acknowledges the new information we have updated our model with, such that when predicting over emission trajectories similar to the ones from the training data, the variability of the FaIRGP global mean surface temperature response is reduced. Figure~\ref{fig:fairgp-posterior-sample-paths} shows examples of sample paths from the FaIRGP posterior when updated with simulated data of global mean surface temperature.


This posterior can then be used to emulate temperatures by making predictions at future time steps. A major interest of this formulation is that whilst the emulator hinges on the robustness of the energy balance model for its prior, it can also benefit from the flexibility of GP regression by informing it with data in its posterior, thereby going beyond a simple impulse response model. In addition, the GP approach enjoys full probabilistic tractability, with the analytical expression of the posterior probability distribution provided in Appendix~\ref{appendix:analytical-expressions}. This means we can analytically compute the mean, the 95\% credible range and quantiles of the predicted surface temperature anomaly as shown in Figure~\ref{fig:mean-and-quantiles}.

\begin{figure}[h]
    \centering
    \vspace*{-1em}
    \includegraphics[width=\linewidth]{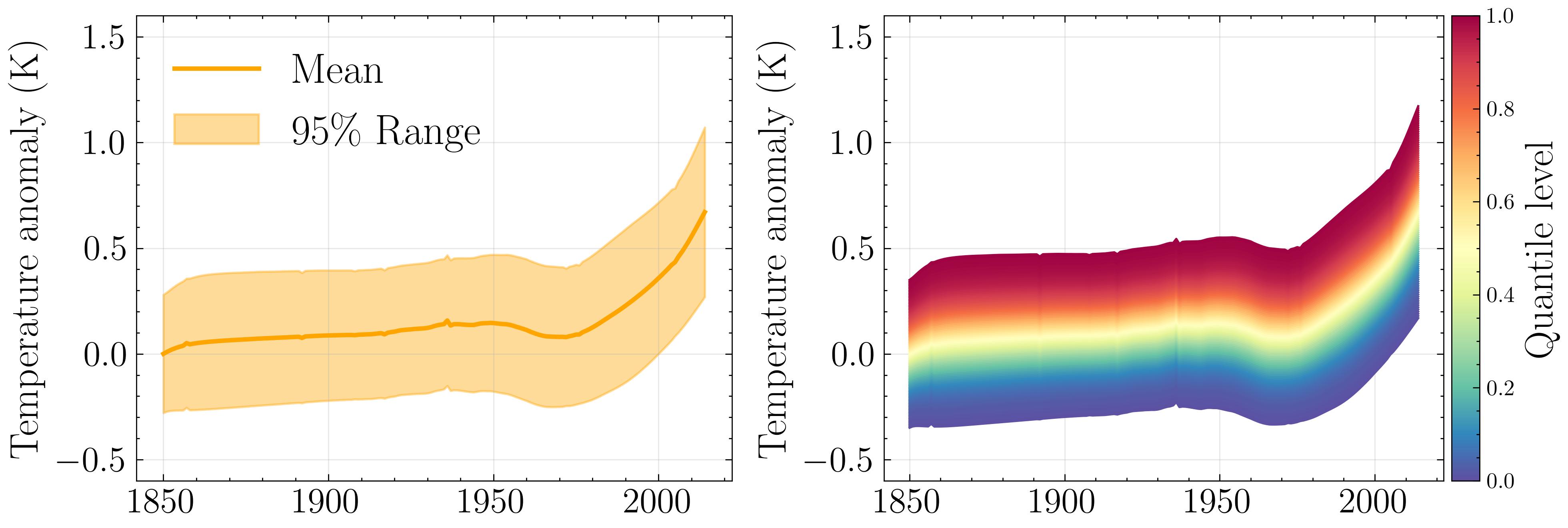}
    \vspace*{-2em}
    \caption{\textbf{Left:} FaIRGP posterior mean and 95\% credible range. \textbf{Right:} FaIRGP posterior quantiles. The posterior distribution $\sfT(t)|\bT$ is conditioned on the same 10 random data points used in Figure~\ref{fig:fairgp-posterior-sample-paths}.}
    \label{fig:mean-and-quantiles}
    \vspace*{-2em}
\end{figure}

Using the same data, we can also update the GP prior over the radiative forcing with temperature data to formulate a posterior distribution over the forcing. Indeed, because the radiative forcing $\sfF(t)$ is also GP, it is jointly Gaussian with the temperature $\sfT(t)$, and thus admits a cross-covariance function that we can derive following
\begin{equation}\label{eq:cross-covariance-F-T}
    k_{\sfF\sfT}(t, t') := \operatorname{Cov}(\sfF(t), \sfT(t')) = \sum_{i=1}^k\frac{q_i}{d_i}\int_0^{t'}K(t, s')e^{-(t'-s')/d_i}\d s'. 
\end{equation}



Intuitively, this cross-covariance function plays a mediation role between the temperature and radiative forcing domains: it translates how variations in surface temperatures correlate with variations in radiative forcing under our model. Using this cross-covariance, it is therefore possible to update our prior over $\sfF(t)$ with temperature data $\bT$, resulting in a posterior distribution over the radiative forcing given by
\begin{equation}\label{eq:posterior-forcing}
    \left\{
    \begin{aligned}
        \begin{split}
            & \sfF(t)\mid \bT \sim \GP(\bar m_\sfF, \bar k_\sfF) \\
            & \bar m_\sfF(t) = F(t) + k_{\sfF\sfT}(t, \bt)(\bK + \sigma^2 \bGamma)^{-1}(\bT - m_\sfT(\bt)) \\
            & \bar k_\sfF(t, t') = K(t, t') - k_{\sfF\sfT}(t, \bt)(\bK + \sigma^2 \bGamma)^{-1}k_{\sfF\sfT}(t, \bt)^\top.
        \end{split}
    \end{aligned}
    \right.
\end{equation}

As for temperatures, the posterior mean $\bar m_\sfF(t)$ corrects the FaIR forcing model $F(t)$ with a data-informed correction term. The mean correction term again regresses out the discrepancy between $\bT$ and the FaIR temperature response onto emissions, but converts this temperature discrepancy signal into a forcing discrepancy signal through the cross-covariance function $k_{\sfF\sfT}$. The uncertainty quantification given by the posterior covariance $\bar k_F(t, t')$ also adjusts the forcing prior covariance $K(t, t')$ to reduce the variability of the stochastic forcing response when predicting over emission trajectories similar to the ones from the training data $\bE$. Figure~\ref{fig:sample-path-F-posterior} shows examples of sample paths from the forcing posterior when updated with global mean surface temperature data.

Finally, we can verify that solving the thermal response stochastic differential equation for the posterior forcing $\sfF(t)\mid \bT$ yields the posterior temperature $\sfT(t)\mid\bT$ as the solution. Therefore, the forcing posterior and the temperature posterior are consistent with each other. It is important to note that the posterior forcing correction proposed in (\ref{eq:posterior-forcing}) may correct the forcing prior with sources of temperature discrepancies that are not imputable to the forcing. Indeed, beyond differences between the FaIR and the climate model forcing response, other aspects of the climate system, such as the climate feedback parameter or the ocean inertia, can also contribute to discrepancies between $\bT$ and the FaIR temperature response. Because the posterior correction reports all of the temperature response discrepancies onto the forcing, there is a risk of simply compensating errors, which needs cautious examination.

\begin{figure}[h]
    \centering
    \vspace*{-1em}
    \includegraphics[width=0.6\linewidth]{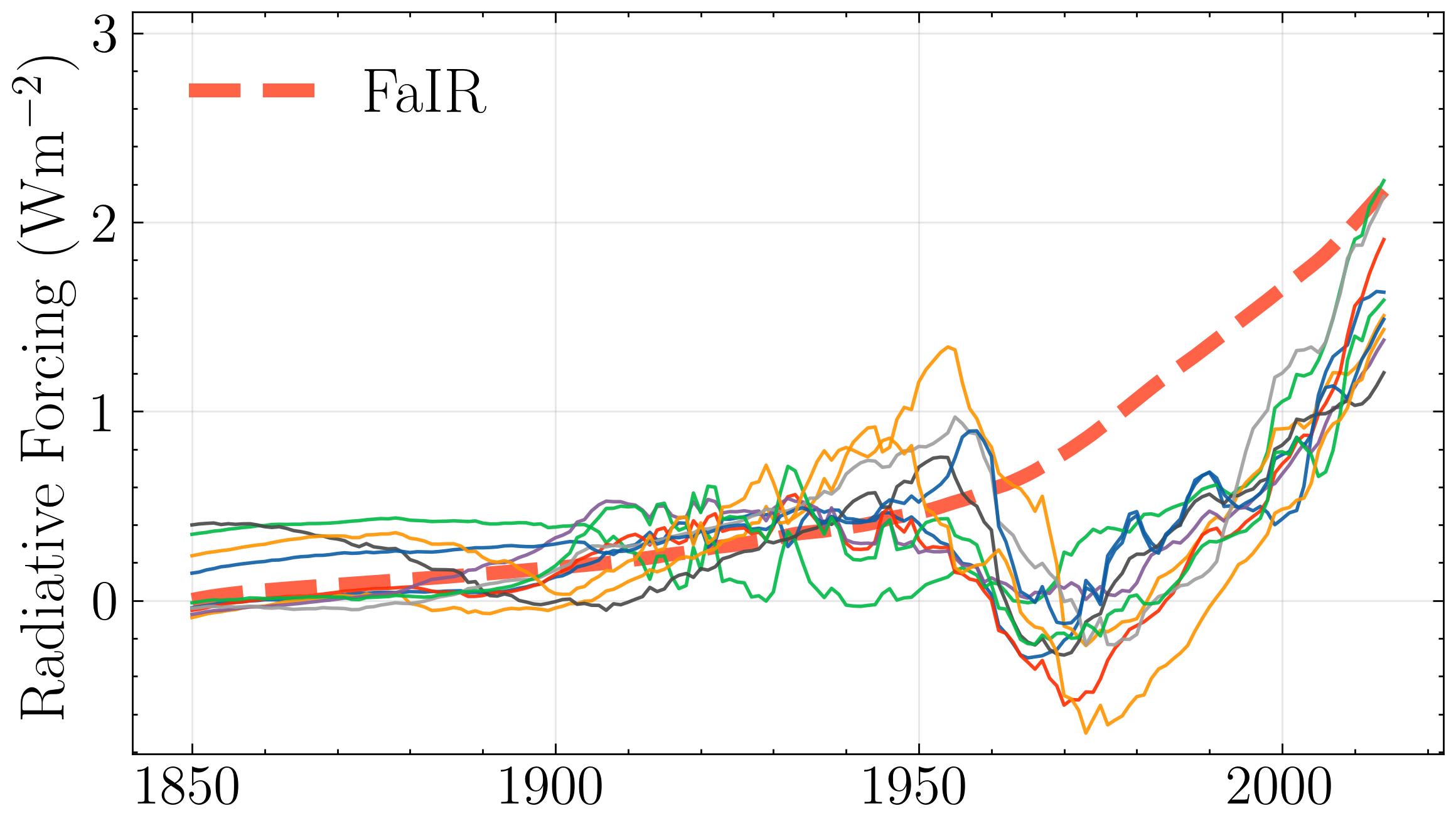}
    \vspace*{-1em}
    \caption{Example of sample paths from the radiative forcing posterior $\sfF(t)|\bT$ conditioned on the same 10 random data points used in Figure~\ref{fig:fairgp-posterior-sample-paths}. In contrast with the forcing prior sample paths (see Figure~\ref{fig:sample-paths-from-sfF}), the posterior sample paths are informed by temperature data and deviate from the FaIR forcing response (dashed red).}
    \vspace*{-2em}
    \label{fig:sample-path-F-posterior}
\end{figure}

\subsection{Parameters tuning}\label{section:parameters-tuning}

Access to a closed-form probability density for the prior allows us to tune the model parameters using a maximum likelihood strategy. To illustrate this, consider again the data $\bt, \bE, \bT$ from the previous section. We can derive a closed-form expression for the \emph{marginal log-likelihood} $\log p(\bT|\bE, \bt)$ of the data under our FaIRGP prior, where the exact formula can be found in Appendix~\ref{appendix:analytical-expressions}. The marginal log-likelihood represents (the logarithm of) how likely is the simulated data $\bt, \bE, \bT$ to occur under the FaIRGP prior we have specified. Hence, a well-specified prior yields a greater marginal log-likelihood. This motivates a tuning strategy for the model parameters that seeks parameter values that maximise $\log p(\bT|\bE, \bt)$.

This can be achieved by using a gradient-based optimisation algorithm to maximise the marginal log-likelihood $\log p(\bT|\bE, \bt)$ with respect to the model parameters. Namely, let $\theta = \{\sigma, \sigma_\sfF, \ell_1, \ldots, \ell_d\}$ be the internal variability standard deviation and the kernel $K(t, t')$ parameters we wish to tune. We can compute the direction in which a small shift in $\theta$ can be taken to increase $\log p(\bT|\bE, \bt)$. This direction corresponds to the gradient of the marginal log-likelihood with respect to the model parameters. By iteratively performing this operation, each \emph{gradient step} brings us closer to an optimal choice of model parameters $\theta$ that maximises $\log p(\bT|\bE, \bt)$. This procedure is known in machine learning as \emph{gradient descent} and illustrated in Algorithm 1. The parameter $\eta > 0$ is typically a small positive number called the learning rate which controls the size of the steps taken in the direction of the gradient at each step. A detailed discussion of its role and of the parameter tuning procedure with gradient descent is provided in Appendix~\ref{appendix:analytical-expressions}.
\begin{algorithm}[h]
    \caption{Maximum likelihood tuning of $\theta$ with gradient descent}
\begin{algorithmic}[1]
    \STATE {\bfseries Input:} $\bt, \bE, \bT, \theta_0, \eta > 0, \texttt{n\_steps}$
    \FOR{$t\in\{1, \ldots, \texttt{n\_steps}\}$}
        \STATE Compute $\log p(\bT|\bE, \bt)$
        \STATE Take $\theta_t \leftarrow \theta_{t-1} + \eta \nabla_\theta\log p(\bT|\bE, \bt)$
    \ENDFOR
    \STATE {\bfseries Return:} $\theta_{\texttt{n\_steps}}$
\label{alg:test-again}
\end{algorithmic}
\end{algorithm}

Finally, note that whilst we limit ourselves to tuning the GP and internal variability parameters in this work, we may also want to tune FaIR parameters such as $d_i$, $q_i$, the forcing model coefficients $\alpha_{\log}^\chi, \alpha_{\mathrm{lin}}^\chi, \alpha_{\mathrm{sqrt}}^\chi$ or even the temperature response parameters $\bA, \bb$. This can be achieved using the same procedure by including these parameters in $\theta$.


\subsection{Spatial FaIRGP}
Whilst informative, the global mean surface temperature fails to capture the difference in exposure of world regions to a changing climate. It is therefore a necessity to invest efforts in obtaining spatially-resolved climate projections. We propose an extension of FaIRGP to emulate spatially-resolved surface temperature maps that grounds itself on a \emph{pattern scaling} prior.

Pattern scaling is a well-established technique to model changes in local surface temperature as a function of changes in global mean surface temperature~\cite{santer1990developing, mitchell1999towards, huntingford2000analogue, tebaldi2014pattern}. It consists of a simple scaling of a fixed spatial pattern by global mean temperature changes. Whilst very simple, such approach is supported by findings that regional changes in temperature scale robustly with global temperature~\cite{seneviratne2016allowable}, and has been successfully used in existing spatial temperature emulation models~\cite{beusch2020emulating, beusch2020crossbreeding, beusch2022emission, link2019fldgen, goodwin2020computationally}.


Formally, pattern scaling assumes that for a given spatial location $x$, the local temperature response $T(x,t)$ is given by
\begin{equation}
    T(x,t) = \beta^{(x)} T(t) + \beta^{(x)}_0,
\end{equation}
with regression coefficient $\beta^{(x)}$ and intercept $\beta^{(x)}_0$. These coefficients are typically obtained by fitting independent local linear regression models of global temperature $T(t)$ onto local temperature $T(x,t)$. If we substitute the deterministic global temperature response $T(t)$ with its GP version $\sfT(t)$, we obtain a local FaIRGP prior temperature response given by
\begin{equation}\label{eq:spatial-fairgp-prior}
    \sfT(x,t) \sim \GP\left(\beta^{(x)} m_\sfT + \beta^{(x)}_0, (\beta^{(x)})^2 k_\sfT\right),
\end{equation}
which admits the local pattern scaling temperature response for its mean, and a locally scaled covariance. 


As for the global response, this local prior can be updated with local temperature data to obtain a posterior temperature response at spatial location $x$. This gives FaIRGP the flexibility to learn from data how to deviate from a fixed spatial pattern, in order to better account for climate model simulated data.


\section{Experimental Setup}

In this section, we introduce the dataset used in our emulation experiments, the baseline models we benchmark FaIRGP against, and the evaluation metrics. Code and data to reproduce experiments are publicly available\footnote{\url{https://github.com/shahineb/FaIRGP}}.

\subsection{Dataset description}

The data is obtained from the ClimateBench v1.0~\cite{watsonparris2021climatebench} climate emulation benchmark dataset. ClimateBench v1.0 proposes a curated dataset of local annual mean surface temperature ($\sim$2$^\circ$ horizontal resolution), paired with annual emissions for four of the main anthropogenic forcing agents: carbon dioxide (CO\textsubscript{2}), methane (CH\textsubscript{4}), sulfur dioxide (SO\textsubscript{2}) and black carbon (BC). The temperature data is generated from the latest version of the Norwegian Earth System Model (NorESM2)~\cite{seland2020overview} as part of the sixth coupled model intercomparison project (CMIP6)~\cite{eyring2016overview}. The emission data is constructed from the Community Emissions Data System~\cite{hoesly2018historical}, and corresponds to the input data used to drive the original NorESM2-LM simulations.

The dataset we use includes paired time series of global annual emissions and local annual mean surface temperature for five experiments. We include the CMIP6 \textit{historical} experiment, and four experiments corresponding to different possible shared socio-economic pathways (SSPs)~\cite{riahi2017shared} from the ScenarioMIP protocol~\cite{o2016scenario}: \textit{SSP126}, \textit{SSP245}, \textit{SSP370} and \textit{SSP585}. These scenarios are designed to span a range of emissions trajectories corresponding to plausible mitigation scenarios and end-of-century forcing possibilities. Table~\ref{table:dataset} gives a description of the experiments and the periods covered.

In all experiments, we use as inputs for FaIRGP the global annual cumulative emissions of CO\textsubscript{2} and global annual emissions of CH\textsubscript{4}, SO\textsubscript{2} and BC. In the experiments of Section~\ref{section:global-experiments}, we focus on emulating global annual mean surface temperature. Therefore, we preprocess the temperature data by computing the global annual mean surface temperatures using an area-weighted mean of the local annual mean surface temperatures data. In Section~\ref{section:spatial-experiments} we focus on emulating local annual mean surface temperatures, thus no preprocessing is required.

Our experiments are effectively conducted on data from a single climate model, and therefore do not reflect the diversity of response of different climate models to a common scenario. However, we argue they still provide a valid assessment of an emulator's ability to reproduce climate models outputs. Indeed, whilst different climate models exhibit variations in their response to a similar scenario, it has been shown that SCMs are still capable of capturing the diversity of forcing responses spanned~\cite{smith2018fair}. Therefore, there are reasons to believe that insights drawn from experiments where emulators are tuned on NorESM2 data should carry over to experiments where emulators are tuned on data from different climate models.

\begin{table}[t]
\centering
    \caption{Details of the experiments from ClimateBench v1.0~\protect\cite{watsonparris2021climatebench} used in the dataset.}
    \vspace*{-0.5em}
    \begin{tabularx}{\textwidth}{lcX}\toprule
     Experiment & Period & Description \\ \midrule
        \textit{historical} & 1850-2014 & A simulation using historical emissions  of all forcing agents designed to recreate  the historically observed climate \\
        \textit{SSP126} & 2015-2100 & A high ambition scenario designed to produce significantly less than 2° warming by 2100 \\
        \textit{SSP245} & 2015-2100 & A medium forcing future scenario aligned with current trajectories. \\
        \textit{SSP370} & 2015-2100 & A medium-high forcing scenario with high emissions of near-term climate forcers such as methane and aerosols \\
        \textit{SSP585} & 2015-2100 & A high forcing scenario leading to a large 8.5Wm\textsuperscript{-2} forcing in 2100 \\
    \bottomrule
    \end{tabularx}
    \label{table:dataset}
    \vspace*{-4em}
\end{table}

\subsection{Baseline emulators}

To develop intuition on the benefit of combining FaIR with a GP, with propose to compare our model to the temperature anomaly projections obtained with (\textit{i}) FaIR only and (\textit{ii}) with a purely data-driven plain GP regression model only. By comparing to the emulation with FaIR, we hope to highlight the benefit that there is to combine the flexibility of a data-driven approach to an impulse response model, and answer the question \textit{\enquote{Does introducing a GP in FaIR add value?}}. On the other hand, by comparing FaIRGP with a plain GP regression model, we hope to demonstrate the importance of having a robust physical prior underlying a data-driven approach, and answer the question \textit{\enquote{Does a FaIR prior in a GP add value?}}. Both baseline models take as inputs greenhouse gas and aerosol emissions data, and predict annual mean surface temperature anomalies with respect to the preindustrial period.

For FaIR, we use parameter values that have been tuned against NorESM2 simulations~\cite{leach2021fairv2}. Therefore, the FaIR model we use in experiments is fully deterministic with fixed parameter values. 

The plain GP model is entirely physics-free, and is specified with a zero mean and an anisotropic Matérn-3/2 kernel. This is exactly the same kernel $\rho$ we use in our prior over the forcing in FaIRGP, which takes as inputs the global annual cumulative emissions of CO\textsubscript{2} and global annual emissions of CH\textsubscript{4}, SO\textsubscript{2} and BC. This kernel is to contrast with the physics-informed kernel $k_\sfT$ used for temperatures in FaIRGP. As it is standard practice in GP regression, we assume Gaussian noise with constant variance over the data. The noise variance and the kernel parameters are tuned using marginal likelihood maximisation by gradient descent.

\subsection{Evaluation metrics}

We use two kinds of metrics to evaluate the predicted probabilistic surface temperatures: deterministic metrics, which compare only the posterior mean prediction to the ClimateBench temperature data, and probabilistic metrics, which evaluate the entire posterior probability distribution against ClimateBench temperature data. Table~\ref{table:metrics} provides a brief description of the metrics used.

\begin{table}[h]
\centering
    \caption{Details on evaluation metrics used in experiments.}
    \begin{tabular}{llcc}\toprule
    & Notation & Description & Best when \\ \midrule
    \multirow{3}{*}{\textit{Deterministic}}
        & RMSE & Root mean square error & close to 0 \\
        & MAE & Mean absolute error & close to 0 \\
        & Bias & Mean of (prediction $-$ groundtruth) & close to 0 \\ \midrule
    \multirow{3}{*}{\textit{Probabilistic}}
        & LL & Log-likelihood & higher \\
        & Calib95 & 95\% calibration score & close to 95\% \\
        & CRPS & Continous ranked probability score & close to 0 \\
    \bottomrule
    \end{tabular}
    \label{table:metrics}
\end{table}

The log-likelihood (LL) score evaluates the log probability of the groundtruth temperature data under the predictive posterior probability distribution predicted by our model. Therefore, greater LL means that the predicted distribution is a good fit for the test data. The 95\% calibration score (Calib95) computes the percentage of groundtruth temperature data that fall within the 95\% credible interval of the predicted posterior distribution. Therefore, if the predicted probability distribution is well calibrated, this score should be close to 95\%. 

The Continuous Ranked Probability Score (CRPS) is an extension of the RMSE for probability distributions, which measures the discrepancy between the cumulative distribution functions (cdfs) of the predicted posterior distribution and the empirical cdf of the test data. Namely, if $\Phi$ denotes the cdf of the predicted posterior distribution and $y$ is a scalar test data point, the CRPS between $\Phi$ and $y$ is given by
\begin{equation}
    \operatorname{CRPS}(\Phi, y) = \int_{-\infty}^{+\infty} \big(\Phi(x) - \mathbbm{1}_{[y, +\infty)}(x)\big)^2\d x, 
\end{equation}
where $\mathbbm{1}_{[y, +\infty)}(x)$ is the indicator function which takes value $0$ if $x < y$ and $1$ if $x \geq y$.

When evaluating predictions of local annual mean surface temperatures, we compute global mean metrics using an area-weighted mean that accounts for the decreasing grid-cell area toward the poles. Namely, we take 
\begin{equation}\label{eq:spatially-weighted-score}
    \langle \operatorname{Score}\rangle  = \frac{1}{N_\text{lon}\sum_{i=1}^{N_\text{lat}}w_i}\sum_{i=1}^{N_\text{lat}}\sum_{j=1}^{N_\text{lon}}w_i\operatorname{Score}_{i,j},
\end{equation}
where $w_i = \cos(\operatorname{lat}(i))$.

\newpage
\section{Application: global surface temperatures emulation}\label{section:global-experiments}

In this section, we benchmark FaIRGP against baseline models for the task of emulating mean global surface temperatures over SSP scenarios. We first briefly illustrate how the model concretely applies to an example. Then, we evaluate the emulated global temperatures trajectories when the model is trained on historical and SSP data, and when trained on historical data only. Finally, we probe the potential of FaIRGP to infer the top-of-atmosphere radiative forcing.

\subsection{Example of global temperature emulation with FaIRGP}

We begin with an example, and propose in Figure~\ref{fig:global-illustrative-example} a concrete illustration of global mean surface temperature emulation for \textit{SSP370} with FaIRGP, that parallels the posterior mean and covariance expressions from (\ref{eq:posterior-temperature-mean}) and (\ref{eq:posterior-temperature-covar}).

The prior temperature response over \textit{SSP370} admits as its mean the FaIR response, but with an additional layer of uncertainty quantification that arises from the GP. We then use temperature data from scenarios $\cD_\text{train} = \{$\textit{historical, SSP126, SSP245, SSP585}$\}$ as training data to learn a posterior correction. The posterior correction allows deviation from the prior --- both in mean and variance --- to provide a better fit to the training data. Finally, by linearly adding the posterior correction to the prior, we obtain a posterior temperature response over \textit{SSP370}.

\begin{figure}[h]
    \centering
    \hspace*{-1.5em}
    \includegraphics[width=1.05\linewidth]{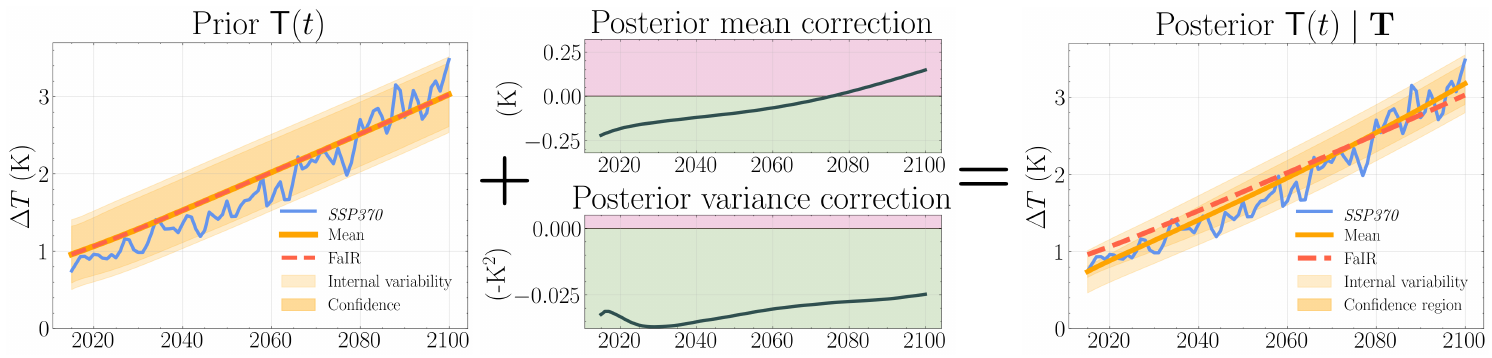}
    \caption{Emulation of NorESM2-LM \textit{SSP370} global mean surface temperature anomaly. \textbf{Left:} FaIRGP prior temperature response over \textit{SSP370}. The prior mean response exactly corresponds to the FaIR temperature response, but with prior uncertainty quantification. \textbf{Middle:} Posterior correction for the mean and variance that learns from data how to deviate from the prior. The pink area corresponds to deviations toward greater values and the green area deviations toward smaller values. \textbf{Right:} FaIRGP posterior temperature response when updated with the other scenarios. The posterior response provides a sounder fit to the climate model data, with better calibrated uncertainty. Shading indicates a 5\%-95\% range.}
    \vspace*{-2em}
    \label{fig:global-illustrative-example}
\end{figure}

\subsection{Shared socio-economic pathways emulation}\label{subsection:global-ssp-experiment}

In this experiment, we consider the experiment dataset given by $\cD = \{$\textit{historical, SSP126, SSP245, SSP370, SSP585}$\}$. We iteratively remove one SSP experiment from the dataset to construct a training set (e.g.\ retain \textit{SSP245} to obtain $\cD_\text{train} = \{$\textit{historical, SSP126, SSP370, SSP585}$\}$) and use it to train FaIRGP and the baseline GP model. We run predictions over the retained SSP experiment and evaluate the emulated global temperatures against NorESM2-LM data. We find that on average, FaIRGP outperforms the baseline models in every evaluation metric, as reported in Table~\ref{table:global-ssp-scores}. These results suggest that FaIRGP provides a better emulation of the global temperature response to anthropogenic forcing compared to both FaIR and the plain GP model. We include in Appendix~\ref{appendix:comparison-with-climatebench} a comparison with the GP emulator from ClimateBench~\cite{watsonparris2021climatebench}.

\begin{table}[t]
    \centering
    \caption{Scores of baseline emulators and FaIRGP for the task of emulating SSPs global temperatures over 2015-2100 period; the test experiment is retained from the training data and only used for evaluation; the best emulator scores for each test experiment are highlighted in bold; $\uparrow\!/\!\downarrow$ indicates higher/lower is better; we report 1 standard deviation over the mean scores; $\dagger$ indicates our proposed method.}
    \vspace*{-0.5em}
        \resizebox{\linewidth}{!}{
        \begin{tabular}{llcccccc}
        \toprule
         \shortstack{Test \\ experiment}   &  Emulator    &   RMSE\small{$\;\downarrow$} &  MAE\small{$\;\downarrow$} &  Bias &  LL\small{$\;\uparrow$} & Calib95 & CRPS\small{$\;\downarrow$} \\
        \midrule
        \multirow{3}{*}{\textit{SSP126}} & FaIR &  0.17 &  0.13 &  0.06 &       - &       -  & - \\
            & Plain GP &  0.18 &  0.14 &  0.08 &  0.23 &     1.0  & 0.10\\
            & FaIRGP$^{\dagger}$ &  \textbf{0.15} &  \textbf{0.12} &  \textbf{0.02} &  \textbf{0.48} &   \textbf{0.97}  & \textbf{0.08}\\\thinrule
        \multirow{3}{*}{\textit{SSP245}} & FaIR &  0.14 &  0.12 &  0.06 &     - &       - & - \\
            & Plain GP &  \textbf{0.09} &  \textbf{0.07} &  0.03 &  0.67 &     \textbf{1.0} & \textbf{0.06} \\
            & FaIRGP$^{\dagger}$ &  \textbf{0.09} &  0.08 & \textbf{-0.01} &  \textbf{0.71} &     \textbf{1.0} & \textbf{0.06}\\\thinrule
        \multirow{3}{*}{\textit{SSP370}} & FaIR &  0.22 &  0.19 &  0.10 &       - &       - & - \\
            & Plain GP &  0.22 &  0.17 & -0.16 &  0.12 &     1.0  & 0.12 \\
            & FaIRGP$^{\dagger}$ &  \textbf{0.17} &  \textbf{0.14} &  \textbf{0.06} &  \textbf{0.39} &    \textbf{0.94} & \textbf{0.10} \\\thinrule
        \multirow{3}{*}{\textit{SSP585}} & FaIR &  0.28 &  0.22 & \textbf{-0.07} &      - &  - & - \\
            & Plain GP &  0.30 &  0.21 & -0.12 &   0.16 &     \textbf{1.0} & 0.14 \\
            & FaIRGP$^{\dagger}$ &  \textbf{0.21} &  \textbf{0.17} & -0.08 &  \textbf{0.18} &   0.88  & \textbf{0.12} \\\midrule
        \multirow{3}{*}{\textit{Mean}} & FaIR &  0.20\scriptsize{$\pm$0.06} &  0.16\scriptsize{$\pm$0.05} &  0.04\scriptsize{$\pm$0.08} &     - &       -  & -\\
            & Plain GP &  0.20\scriptsize{$\pm$0.09} &  0.15\scriptsize{$\pm$0.06} & -0.04\scriptsize{$\pm$0.11} &  0.30\scriptsize{$\pm$0.25} &     1.0\scriptsize{$\pm$0.0}  & 0.11\scriptsize{$\pm$0.03} \\
            & FaIRGP$^{\dagger}$ &  \textbf{0.16\scriptsize{$\pm$0.05}} &  \textbf{0.12\scriptsize{$\pm$0.04}} & \textbf{-0.01\scriptsize{$\pm$0.06}} &  \textbf{0.44\scriptsize{$\pm$0.22}} &   \textbf{0.95\scriptsize{$\pm$0.05}} & \textbf{0.09\scriptsize{$\pm$0.02}}\\
        \bottomrule
        \end{tabular}
        }
    \label{table:global-ssp-scores}
    \vspace*{-1em}
\end{table}
\begin{figure}[H]
    \centering
    \hspace*{-3em}
    \includegraphics[width=1.1\linewidth]{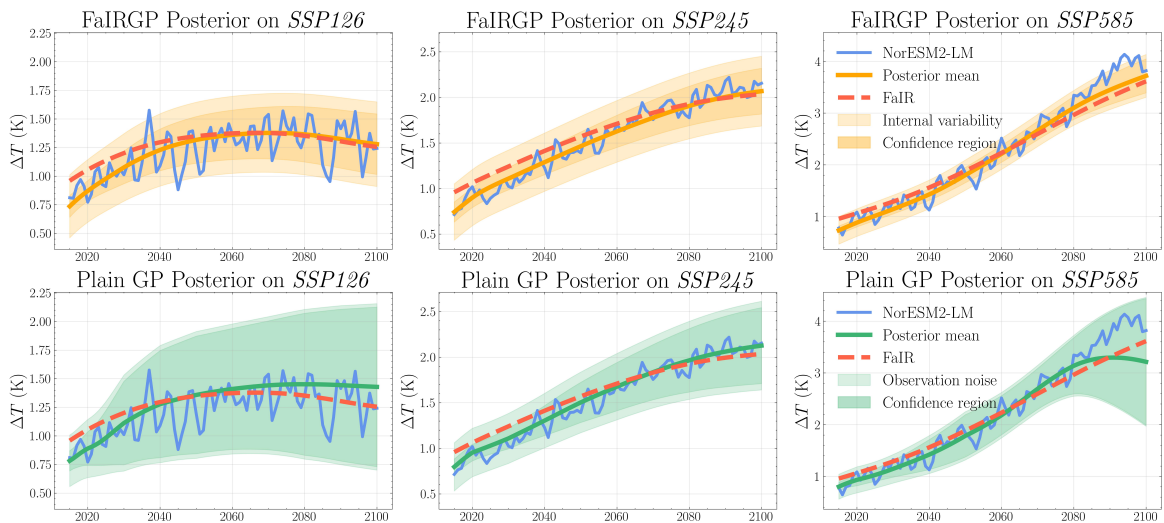}
    \caption{Emulated global mean surface temperature anomaly over 2015-2100 period with our FaIRGP emulator (top) and with a baseline GP emulator (bottom) for a low-forcing future scenario (\textit{SSP126}), a medium forcing future scenario (\textit{SSP245}) and a high forcing future scenario (\textit{SSP585}).}
    \label{fig:ssp-global-plots}
    \vspace*{-2em}
\end{figure}

Figure~\ref{fig:ssp-global-plots} shows that FaIRGP provides a better temperature projection than FaIR in the near future --- over the 2015-2050 period --- for all scenarios. This is because being informed by data grants FaIRGP the flexibility to deviate from the impulse response prior on which it hinges, and provide predictions that are better aligned with the historical and near future data from the training set. However, when further away from the training data, over the 2080-2100 period, FaIRGP reverts back to the prior behaviour of FaIR. This suggests that FaIRGP is better suited for emulating global temperature response in the near future, and is at least as good as FaIR for longer term emulation.

The plain GP model provides excellent predictions on \textit{SSP245}, and even outperforms FaIRGP by a slight margin on deterministic metrics for this scenario. This is because GPs are notorious for excelling at interpolation tasks, and the \textit{SSP245} scenario is a medium forcing scenario that perfectly lies within the range of the other low and high forcing scenarios used in the training data. However, when the evaluation scenario lies outside of the range of the training data, the plain GP model predictions will simply revert to the prior, and are therefore much less reliable. This is an important drawback of purely data-driven emulation, which by essence are interpolation models, and struggle to extrapolate outside the range of the training set. In Figure~\ref{fig:ssp-global-plots}, this is particularly evident in the plain GP prediction on \textit{SSP585}, where the model struggles at the end of the century due to unprecedented forcing levels that have not been seen in the training data. FaIRGP addresses this shortcoming by using FaIR as its prior, and therefore naturally reverts to an impulse response model when the prediction scenario is too distant from the training data.

Finally, FaIRGP provides a tighter and more robust uncertainty quantification over emulated temperatures compared to the plain GP model, which tends to overestimate the variance. This is reflected in Table~\ref{table:global-ssp-scores} by a substantial improvement in LL of FaIRGP against the Plain GP, and suggests that the physics-informed covariance structure of FaIRGP is a sound choice to quantify emulator uncertainty. Further, the model formulated is able to quantify the uncertainty introduced by the internal variability separately from the uncertainty introduced by the emulator design. As shown in Figure~\ref{fig:ssp-global-plots}, the uncertainty due to internal variability dominates in the near future, but the emulator uncertainty becomes predominant toward the end of the century, in accordance with the well-established assessment of \citeA{hawkins2009potential}.

\begin{figure}[h]
    \centering
    \vspace*{-0.5em}
    \includegraphics[width=\linewidth]{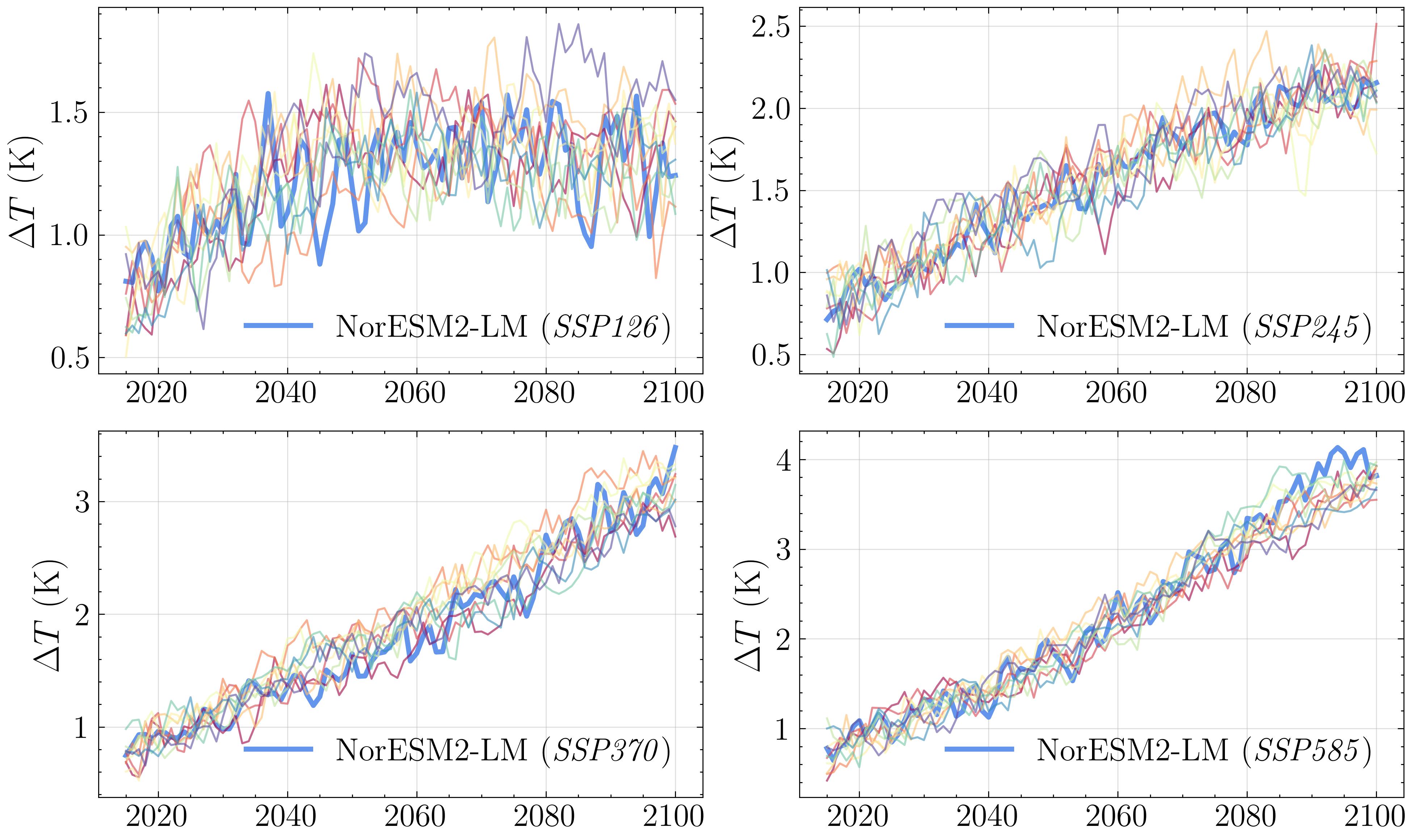}
    \vspace*{-1.5em}
    \caption{Example of sample paths from the FaIRGP posterior for the emulation of \textit{SSP126} (top-left), \textit{SSP245} (top-right), \textit{SSP370} (bottom-left) and \textit{SSP585} (bottom-right) global mean surface temperature anomaly over the 2015-2100 period.}
    \label{fig:sample-paths-SSP-global}
    \vspace*{-2em}
\end{figure}

In Figure~\ref{fig:ssp-global-plots}, the posterior mean lines and the shaded area depict the mean function and the 95\% credible region of the predicted posterior distribution over temperatures using FaIRGP. Since this represents a probability distribution, we can generate sample paths from the posterior GP (which will fall within the shaded area with probability $0.95$). Figure~\ref{fig:sample-paths-SSP-global} displays examples of sample paths from the FaIRGP posterior for each scenario. It shows that the smooth shaded areas hide the specific form of the functions described by the FaIRGP posterior. The sample paths are in fact capable of reproducing the variability of abrupt annual changes in global mean surface temperature anomalies. Further, the sample paths are generally well-concentrated around the NorESM2 simulated temperature anomaly, with the exception of the end of century temperature anomaly under \textit{SSP585} which tends to be underestimated --- consistently with the shaded area from Figure~\ref{fig:ssp-global-plots}. Appendix~\ref{appendix:sample-paths-quizz} provides additional plots of individual sample paths from the FaIRGP posterior over SSPs.

\subsection{Emulation from historical data}

In this experiment, we propose to investigate how well can the emulators generalise when they are solely trained on historical data, and no simulated future data\footnote{with the exception of the FaIR parameters which have been calibrated using NorESM2 outputs for future scenarios.}. We train the plain GP and FaIRGP using the \textit{historical} experiment only, and evaluate predictions on all SSP scenarios. The results are reported in Table~\ref{table:results-historical-only}. We find that FaIRGP outperforms other models in almost every metric, and displays performance similar to FaIR only in mean bias.

\begin{table}[h]
\vspace*{-2em}
\centering
    \caption{Scores of baseline emulators and FaIRGP when models are only trained using the \textit{historical} experiment; scores are evaluated and averaged for the task of emulating \textit{SSP126, SSP245, SSP370, SSP585} global surface temperatures over 2015-2100 period; the best emulator for each metric is highlighted in bold; $\uparrow\!/\!\downarrow$ indicates higher/lower is better; $\dagger$ indicates our proposed method.}
        \begin{tabular}{lcccccc}
        \toprule
        Emulator &   RMSE\small{$\;\downarrow$} &   MAE\small{$\;\downarrow$} &   Bias &   LL\small{$\;\uparrow$} &  Calib95 & CRPS\small{$\;\downarrow$} \\
        \midrule
        FaIR & 0.208 &  0.163 &  0.038 &      - &       - & - \\
        Plain GP &  1.875 &  1.690 & -1.690 & -30.979 &    0.009  & 1.558\\
        FaIRGP$^{\dagger}$   &  \textbf{0.191} &  \textbf{0.140} & \textbf{-0.037} &   \textbf{0.314} &    \textbf{0.948}  & \textbf{0.102} \\
        \bottomrule
        \end{tabular}
    \label{table:results-historical-only}
    \vspace*{-1em}
\end{table}
\begin{figure}[h]
    \centering
    \vspace*{-0.5em}
    \includegraphics[width=0.9\linewidth]{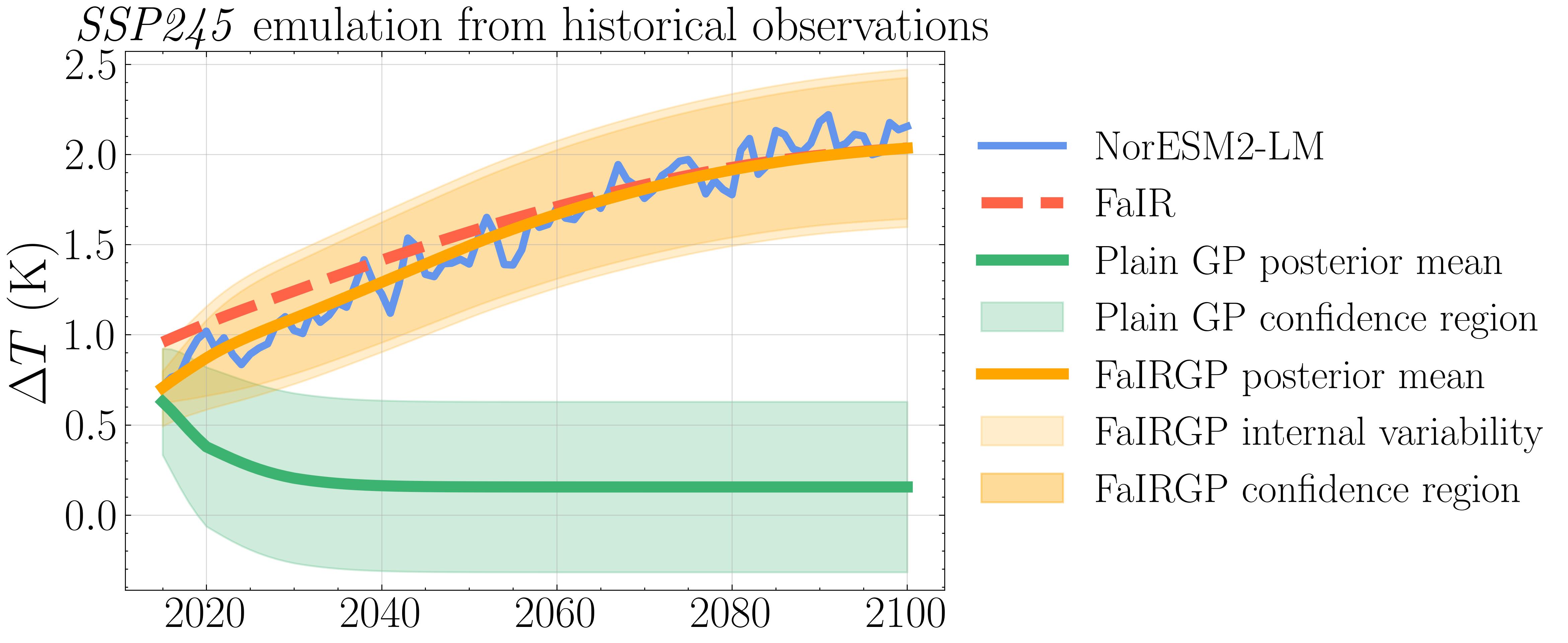}
    \vspace*{-0.5em}
    \caption{Emulated global mean surface temperature anomaly over 2015-2100 for the \textit{SSP245} scenario with emulators trained over historical temperatures only.}
    \vspace*{-2em}
    \label{fig:historical-only-experiment}
\end{figure}

In this context, a purely data-driven method struggles to extrapolate because it cannot marry the new previously unseen emission values with the underlying physical model. Figure~\ref{fig:historical-only-experiment} shows that the plain GP posterior immediately reverts to its prior after 2015, therefore providing uninformative temperature projections. On the other hand, FaIRGP manages to learn from historical data to provide a better fit to the SSP over the 2015-2050 period, and slowly reverts back to the prior behaviour of FaIR toward the end of the century.

This demonstrates the importance of having a robust underlying physical model for an emulator, and further suggests that FaIRGP is a well suited candidate for this task, and can provide meaningful temperature projections based only on historical data.

\subsection{Inferring radiative forcing from temperatures}\label{subsection:global-forcing-posterior}

Whilst FaIRGP is intended to emulate the surface temperature response to changes in emission, its GP nature also allows to make inference of radiative forcing given temperature data as demonstrated in Section~\ref{subsection:posterior-distribution}. In this experiment, we propose to assess how global surface temperature anomaly data can be used to inform an estimate of the effective radiative forcing using FaIRGP. 

We use the complete dataset of global temperature time series from the \textit{historical} and \textit{SSP126, SSP245, SSP370, SSP585} experiments, which are depicted in Figure~\ref{fig:global-forcing-posteriors}. We update the GP prior placed over the radiative forcing $\sfF(t)$ with temperature data to obtain a posterior radiative forcing response which incorporates information from temperature data. We emphasise that the posterior over $\sfF(t)$ does not use any forcing data, only simulated temperature data. As for temperature emulation, the posterior radiative forcing response is given in closed-form by  (\ref{eq:posterior-forcing}). We evaluate the posterior historical radiative forcing response against NorESM2-LM historical top-of-atmosphere radiative forcing.

Figure~\ref{fig:global-forcing-posteriors} illustrates that the posterior forcing response learns from temperature time series how to deviate from the FaIR forcing trend to better account for the NorESM2 simulated data. This is particularly evident between 1960 and 1980 where the posterior reproduces a decrease in forcing to better account for the global cooling trend in that time period, which FaIR struggles to capture. The results reported in Table~\ref{table:results-posterior-forcing} show that FaIRGP helps improve over the FaIR forcing in RMSE and Bias. Whilst FaIR performs better at inferring the stable forcing before 1950, informing the inferred forcing with temperature data with FaIRGP helps better account for the forcing variations after 1950.

\begin{figure}[h]
    \centering
    \includegraphics[width=\linewidth]{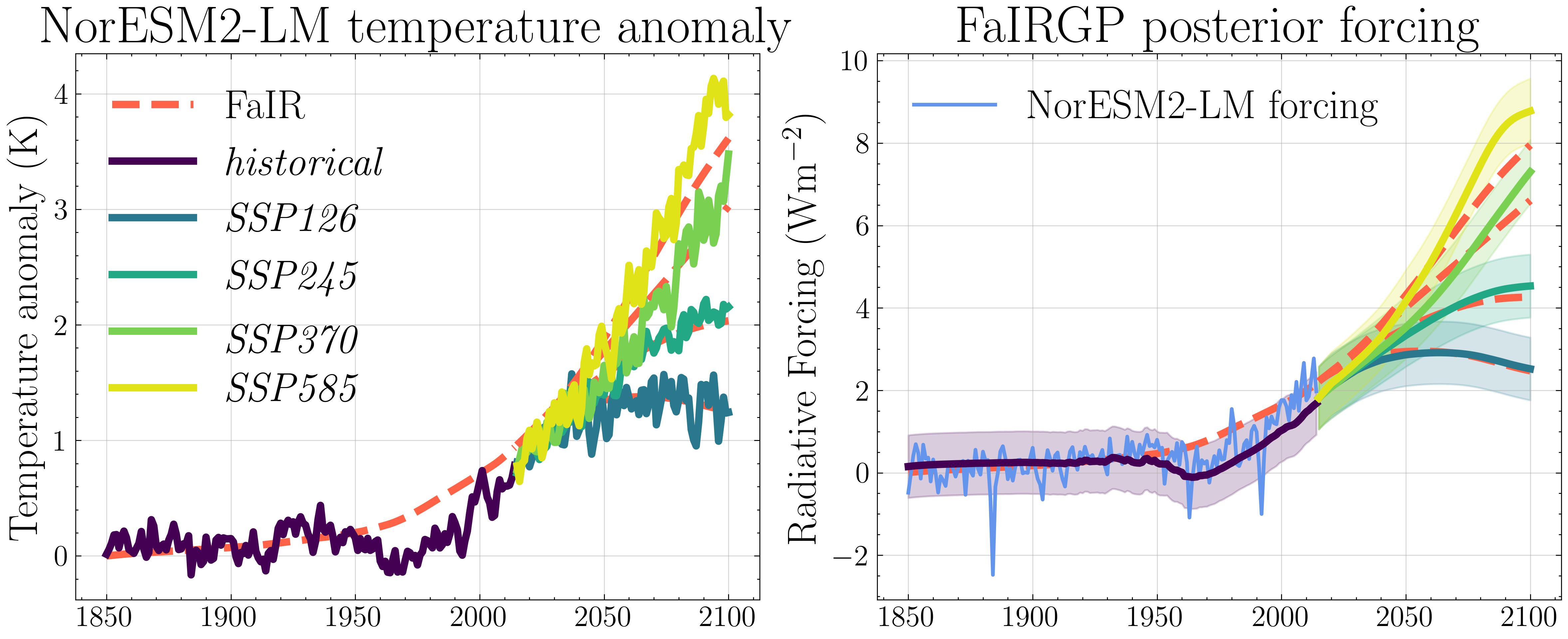}
    \caption{\textbf{Left:} Global temperature anomaly responses used to inform the posterior forcing response; \textbf{Right:} FaIRGP posterior radiative forcing predicted over experiments; the temperature response and radiative forcing obtained with FaIR are plotted in dashed red lines.}
    \label{fig:global-forcing-posteriors}
    \vspace*{-2em}
\end{figure}
\begin{table}[h]
    \centering
    \caption{Scores of FaIR and FaIRGP at predicting the global radiative forcing over historical periods; scores are evaluated against NorESM2-LM top-of-atmosphere radiative forcing; the best emulator scores for each period are highlighted in bold; $\uparrow\!/\!\downarrow$ indicates higher/lower is better; $\dagger$ indicates our proposed method.}
        \vspace*{-0.5em}
        \begin{tabular}{llcccccc}
        \toprule
         Period & Emulator &  RMSE\small{$\;\downarrow$}  &    MAE\small{$\;\downarrow$}  &   Bias  &     LL\small{$\;\uparrow$}  & Calib95 &   CRPS\small{$\;\downarrow$}  \\
        \midrule
        \multirow{2}{*}{1850-1950} & FaIR &  \textbf{0.396} &  \textbf{0.261} &  \textbf{-0.008} &      - &       - &      - \\
                  & FaIRGP$^{\dagger}$ &  0.421 &  0.285 &  0.041 & -0.797 &    0.99 &  0.25 \\ \thinrule
        \multirow{2}{*}{1950-2014} & FaIR &  0.622 &  \textbf{0.461} &  0.350 &      - &       - &      - \\
                  & FaIRGP$^{\dagger}$ &  \textbf{0.568} &  0.466 & \textbf{-0.315} & -0.925 &   0.984 &  0.328 \\ \thinrule
        \multirow{2}{*}{1850-2014} & FaIR &  0.496 &  \textbf{0.339} &  0.129 &      - &       - &      - \\
                  & FaIRGP$^{\dagger}$ &  \textbf{0.484} &  0.357 & \textbf{-0.101} &   -0.847 &   0.988 &  0.281 \\
        \bottomrule
        \end{tabular}
    \label{table:results-posterior-forcing}
    \vspace*{-1.5em}
\end{table}

FaIRGP therefore proves to be useful to infer global radiative forcing trends informed by surface temperatures. This is possible because our prior is specified within an energy balance model which explicitly connects the changes in surface temperatures to the changes in radiative forcing. On the contrary, this would not be possible with the plain GP model because it ignores forcing dynamics.

\subsection{FaIR multi-model uncertainty in FaIRGP}

In this section, we explore how the uncertainty bounds introduced by the FaIR model compare with those of the FaIRGP uncertainty. FaIR has been widely employed to assess the uncertainty surrounding global temperature projections. The methodology employed by FaIR for generating probabilistic projections is rooted in a constrained ensembling process. This involves specifying prior distributions over model parameters (such as carbon cycle sensitivity, forcing model sensitivity and equilibrium climate sensitivity). These prior distributions are determined from the inter-model variability. Then, a large ensemble of climate projections is drawn using these distributions, and is subsequently constrained based on the likelihood of projections aligning with the simulated present-day level and rate of anthropogenic warming~\cite{leach2021fairv2}. The resulting \emph{constrained ensemble} of projections allows to quantify the effect of \emph{FaIR multi-model uncertainty} on temperature projections.


\begin{figure}[h]
    \centering
    \vspace*{-0.5em}
    \includegraphics[width=0.96\linewidth]{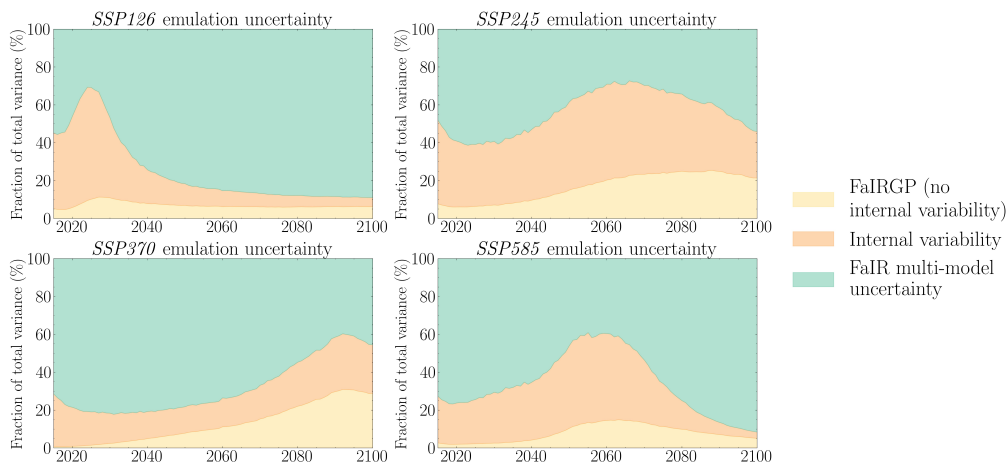}
    \vspace*{-1em}
    \caption{Proportional stacked area charts of the contribution to total variance of the FaIRGP uncertainty without internal variability (yellow), the internal variability modelled in FaIRGP as in Section~\ref{subsection:fairgp-internal-variability} (orange) and the FaIR multi-model uncertainty (green), for the emulation of the global annual mean surface temperature anomaly under \textit{SSP126} (top-left), \textit{SSP245} (top-right), \textit{SSP370} (bottom-left) and \textit{SSP585} (bottom-right).}
    \vspace*{-2em}
    \label{fig:total-variance-fraction}
\end{figure}

Following the protocol of \citeA{leach2021fairv2}, we generate a constrained ensemble of 1000 FaIR surface temperature anomaly projections. For each ensemble member, we then generate a FaIRGP projection. The total variability of this projection can be decomposed into two components: \textit{(i)} the average variability introduced by FaIRGP across ensemble members and \textit{(ii)} the variability in the mean FaIRGP response across ensemble members. The first component corresponds to the variability that FaIRGP's stochasticity introduces on the deterministic FaIR temperature response, averaged across ensemble member, and can be decomposed into the variability introduced by FaIRGP without internal variability, and the contribution of internal variability alone. For this reason component \textit{(i)} is viewed as the uncertainty imputable to FaIRGP. The second component corresponds to the variability that sampling across the constrained ensemble introduces in the FaIRGP mean response. For this reason, component \textit{(ii)} is viewed as the uncertainty imputable to the FaIR multi-model uncertainty. Figure~\ref{fig:total-variance-fraction} shows how the different sources of uncertainty contribute to the total variance when emulating surface temperature anomalies for SSPs.

\begin{figure}[t]
    \centering
    \vspace*{-1em}
    \includegraphics[width=0.96\linewidth]{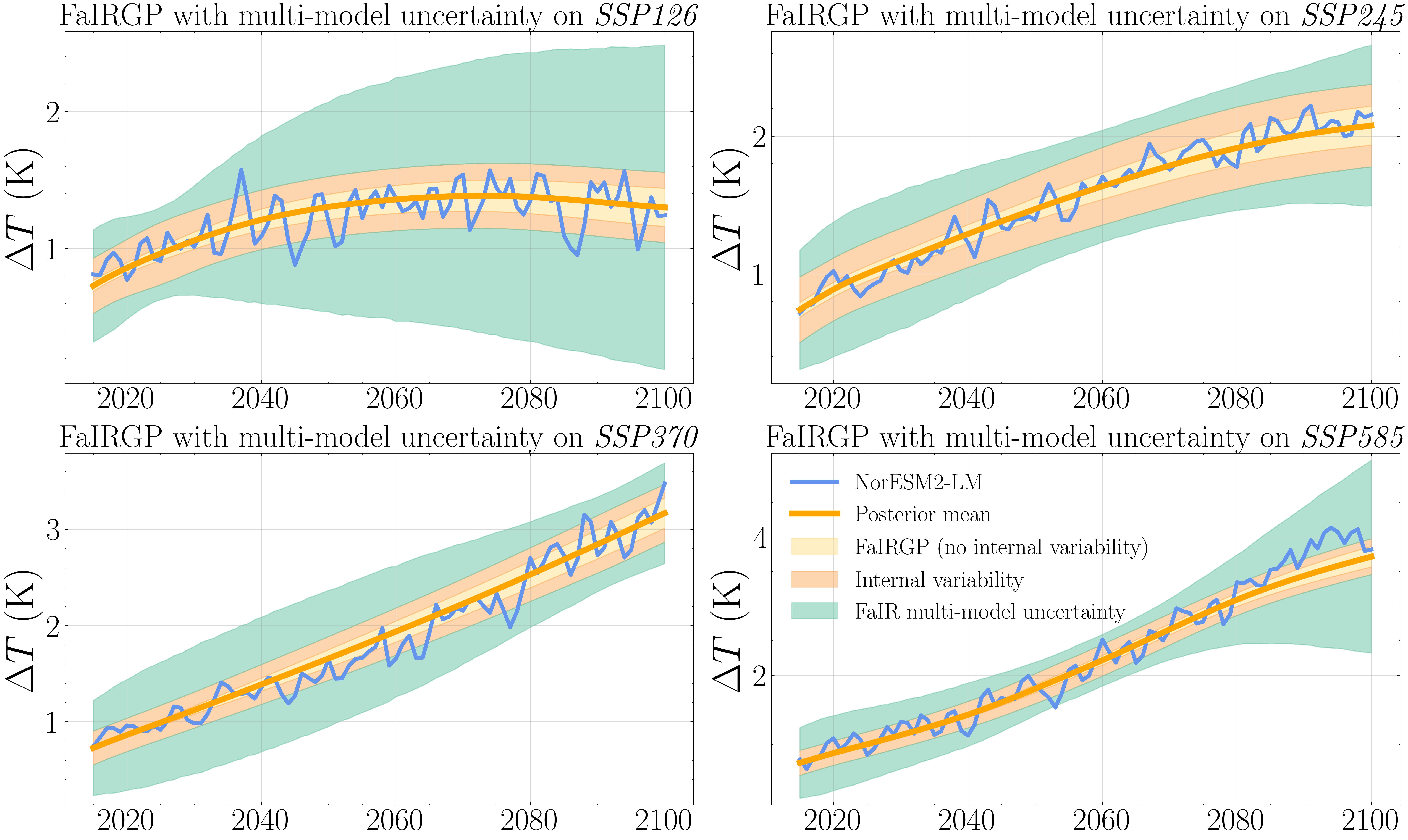}
    \vspace*{-1em}
    \caption{Global annual mean surface temperature anomaly projections over SSPs with contributions to uncertainty from FaIRGP without internal variability (yellow), the internal variability modelled in FaIRGP as in Section~\ref{subsection:fairgp-internal-variability} (orange) and the FaIR multi-model uncertainty (green). Shading indicates 2 standard deviations and is computed by taking the average contribution to the standard deviation of each source of uncertainty}
    \label{fig:ssp-prediction-uncertainty-stacked}
    \vspace*{-4em}
\end{figure}

A higher FaIR multi-model uncertainty contribution to the total variance (green) indicates significant sensitivity of the FaIRGP mean response to parameter choices within the constrained ensemble. Conversely, a greater FaIRGP uncertainty contribution (orange and yellow) suggests limited sensitivity to parameter choices, with most variability arising from our stochastic treatment of the temperature response. We find that in general, the FaIR multi-model uncertainty contributes the largest fraction of the total variance in projections of future warming. This is particularly evident in 2100 for low/high end forcing scenarios \textit{SSP126} and \textit{SSP585}, since they lie outside of the training data regime used in FaIRGP and the model's need for extrapolation makes it more sensitive to FaIR's structural uncertainty.

We find that for intermediate forcing scenarios \textit{SSP245} and \textit{SSP370}, the FaIRGP uncertainty contribution to the total variance remains significant. Because these scenarios lie in between other scenarios from the training data, FaIRGP's reliance on interpolation between its training data results in lower sensitivity to the multi-model uncertainty, making the idea of an uncertain future warming better constrained. This is supported by Figure~\ref{fig:ssp-prediction-uncertainty-stacked}, depicting how the average contribution of each source of uncertainty to the standard deviation stacks up in predictions. It shows a consistently stable and comparable magnitude of uncertainty introduced by FaIRGP with internal variability across scenarios.

It is important to caveat that the FaIR constrained ensemble samples from inter-model variability, whereas FaIRGP is here only calibrated on NorESM2-LM data. Therefore, one could reasonably expect that a multi-model version of FaIRGP, allowing calibration across multiple climate models, would likely contribute more uncertainty to projections of end-of-century warming.

\section{Application: spatial surface temperatures emulation}\label{section:spatial-experiments}

In this section, we pursue the comparison of FaIRGP to baseline models, but for the task of emulating spatially-resolved temperatures. As for global emulation, we start by briefly illustrating how the model concretely applies. Then, we benchmark it against baseline models for SSP emulation and evaluate the emulation of surface temperatures forced by anthropogenic aerosols only. We conclude by investigating how FaIRGP can help infer spatial top-of-atmosphere radiative forcing maps.

\vspace*{-0.5em}
\subsection{FaIRGP for spatial temperatures emulation}
\vspace*{-0.5em}

Figure~\ref{fig:ssp245-prediction-FaIRGP} illustrates the spatial surface temperature anomaly emulation with FaIRGP for test scenario \textit{SSP245}. As in the global case, a prior response based on FaIR is first specified, and then shifted by a data-informed posterior correction map. The posterior correction is learned from training scenarios $\cD_\text{train} = \{$\textit{historical, SSP126, SSP370, SSP585}$\}$.

The prior spatial response is constructed using a pattern scaling model trained on $\cD_\text{train}$, and therefore introduces a fixed spatial pattern that is only rescaled by changes in global mean temperature. The posterior correction at location $x$ is obtained by updating the prior over $\sfT(x,t)$ from (\ref{eq:spatial-fairgp-prior}) with local surface temperature data from $\cD_\text{train}$. The posterior correction maps effectively provide a data-driven way to deviate from this fixed spatial pattern, and better account for the possibly varying spatial temperature patterns of the emulated scenario. Finally, by linearly adding the posterior correction map to the prior, we obtain a posterior spatial temperature response over \textit{SSP245}.

\begin{figure}[h]
    \centering
    \hspace*{-1.7em}
    \vspace*{-1em}
    \includegraphics[width=1.1\linewidth]{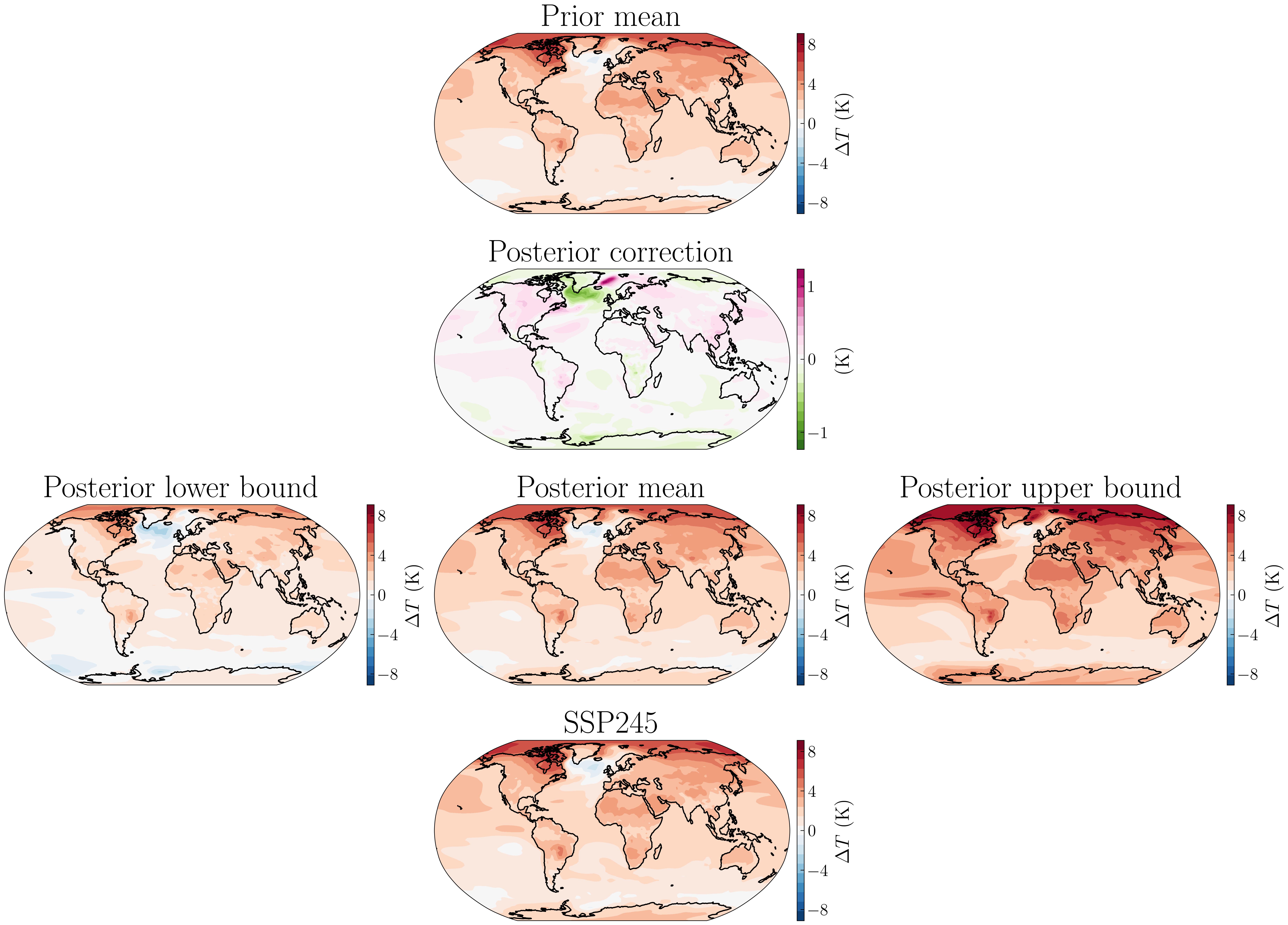}
    \caption{Spatial emulation of NorESM2-LM \textit{SSP245} temperatures. The prior mean (top) corresponds to a pattern scaling model used on the FaIR global temperature response. The posterior correction (middle-top) learns from data how to deviate from the pattern scaling prior to produce a posterior response (middle-bottom) that better emulates the groundtruth surface temperatures (bottom). The posterior lower/upper bounds capture the 95\% credible interval. Maps are averaged over the 2080-2100 period.}
    \label{fig:ssp245-prediction-FaIRGP}
    \vspace*{-3em}
\end{figure}


\begin{table}[t]
\centering
    \caption{Scores of baseline emulators and FaIRGP for the task of emulating SSPs spatial temperatures over 2080-2100 period; the best emulator for each metric is highlighted in bold; $\uparrow\!/\!\downarrow$ indicates higher/lower is better; we report 1 standard deviation; $\dagger$ indicates our proposed method.}
        \resizebox{\linewidth}{!}{
        \begin{tabular}{lccccccc}
        \toprule
        Emulator &   RMSE\small{$\;\downarrow$} &   MAE\small{$\;\downarrow$} &   Bias &  LL\small{$\;\uparrow$} &  Calib95 & CRPS\small{$\;\downarrow$} \\
        \midrule
        Pattern scaling & 0.637\tiny{$\pm$0.189} &  0.467\tiny{$\pm$0.139} & -0.162\tiny{$\pm$0.194} &       - &       - &      - \\
        Plain GP &  1.069\tiny{$\pm$0.616} &  0.761\tiny{$\pm$0.441} & -0.362\tiny{$\pm$0.350} & -1.625\tiny{$\pm$1.192} &   0.850\tiny{$\pm$0.141} &  0.560\tiny{$\pm$0.341} \\
        FaIRGP$^{\dagger}$   &  \textbf{0.619\tiny{$\pm$0.188}} &  \textbf{0.451\tiny{$\pm$0.138}} & \textbf{-0.113\tiny{$\pm$0.192}} & \textbf{-0.853\tiny{$\pm$0.398}} &   \textbf{0.873\tiny{$\pm$0.093}} &   \textbf{0.341\tiny{$\pm$0.099}} \\
        \bottomrule
        \end{tabular}
        }
    \label{table:spatial-ssp-scores}
\end{table}
\begin{figure}[h]
    \centering
    \hspace*{-1.7em}
    \includegraphics[width=1.1\linewidth]{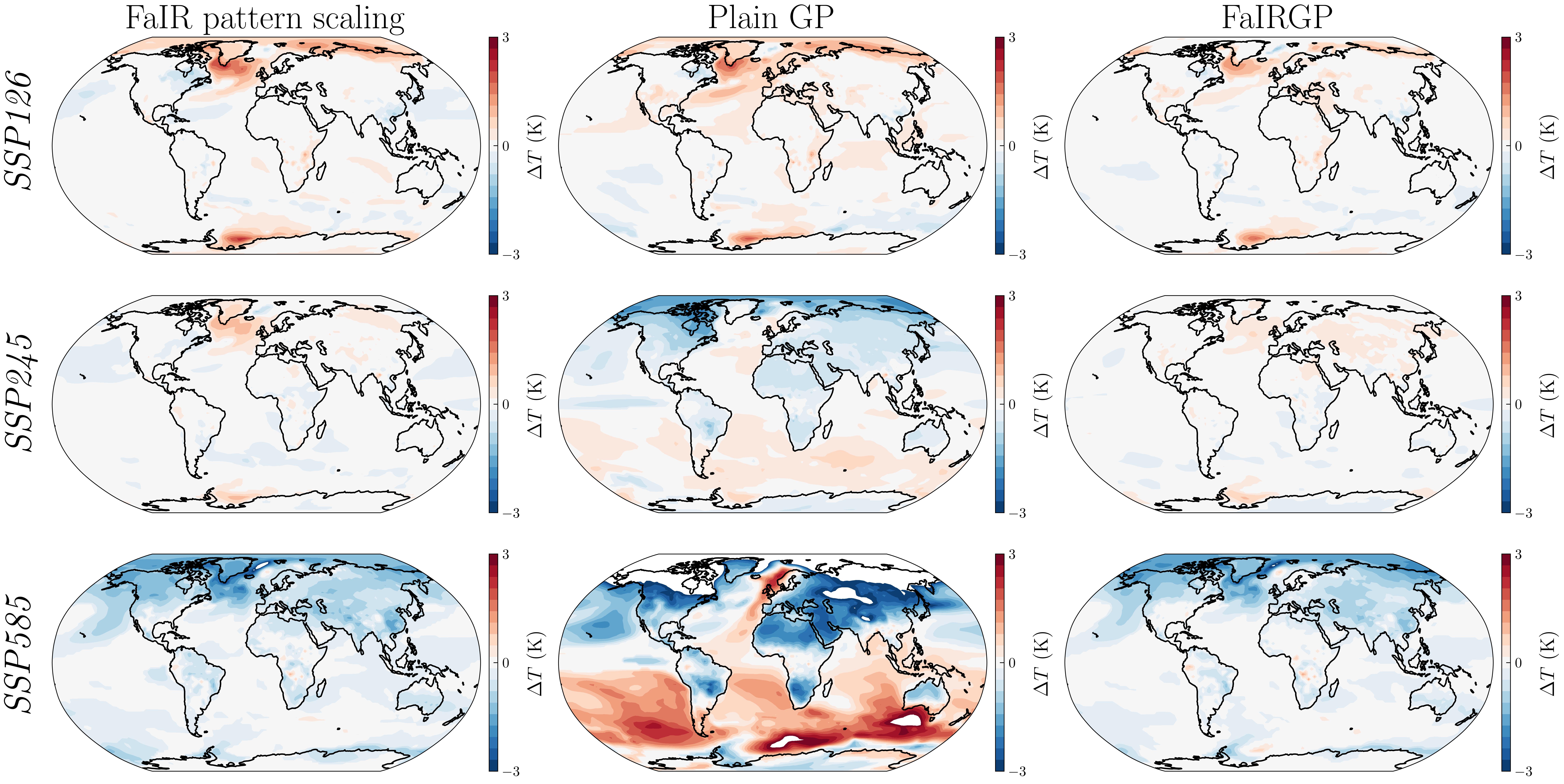}
    \caption{Maps of the mean difference between emulated spatial temperatures and target NorESM2 values for each emulator under a low forcing future scenario (\textit{SSP126}), a medium forcing future scenario (\textit{SSP245}) and a high forcing future scenario (\textit{SSP585}). Red/Blue means the emulator overshoots/undershoots NorESM2 temperatures. Maps are averaged over the 2080-2100 period. Differences insignificant at the $p < 0.05$ level are masked.}
    \label{fig:ssp-spatial-bias}
    \vspace*{-2em}
\end{figure}

\subsection{Shared socio-economic pathways emulation}

In this experiment, we follow the same procedure as in the global emulation experiment and iteratively train models on the dataset deprived of one SSP scenario, then use the retained scenario as a test scenario for evaluation. We benchmark FaIRGP against a FaIR pattern scaling model and a purely data-driven plain GP emulator. The pattern scaling model is obtained by fitting a linear regression model using the same training data as the other models. The plain GP model is analogous to the baseline GP emulator in \citeA{watsonparris2021climatebench}, but differs in two aspects: (\textit{i}) we adopt a simpler construction for the covariance with an anisotropic Matérn-3/2 kernel, and (\textit{ii}) our model takes as input global aerosols emissions whereas \citeA{watsonparris2021climatebench} use spatially-resolved aerosols emission maps. Scores are computed over the 2080-2100 period since the start of all SSPs is quite similar. Mean scores are reported in Table~\ref{table:spatial-ssp-scores}.

We find that FaIRGP has on average lower error than baseline models. Figure~\ref{fig:ssp-spatial-bias} shows that the spatial bias patterns of FaIRGP are similar to the bias patterns obtained with the FaIR pattern scaling model. This indicates that the prior has a strong influence on the predicted posterior. Nonetheless, we observe that the posterior correction in FaIRGP helps mitigate the spatial inaccuracies of its prior. This is particularly evident for \textit{SSP126} and \textit{SSP245} where the magnitude of the spatial bias in FaIRGP is overall smaller than the one for the FaIR pattern scaling model. Regarding the high forcing scenario \textit{SSP585}, the spatial correction is more subtle. This is due to it becoming an extrapolation task, and as a result, FaIRGP exhibits behaviour closer to its pattern scaling prior.

Figure~\ref{fig:ssp-spatial-bias} also shows that, as for global temperature emulation, the plain GP model predicts sound surface temperature maps for low and medium forcing scenarios, but struggles at extrapolating over high forcing scenarios.

\subsection{Emulating anthropogenic aerosols forcing}

In this experiment, we want to evaluate the emulation of temperature changes induced by anthropogenic aerosol emissions. We use the \textit{hist-aer} experiment from the ClimateBench v1.0 dataset. The \textit{hist-aer} experiment is generated using NorESM2-LM, using only historical anthropogenic aerosol emissions, and setting long-lived greenhouse gas emissions to zero. We emulate surface temperature anomalies over this scenario using emulators trained on all available \textit{historical} and SSPs experiments. The FaIR forcing model is modified to include the aerosol-cloud interaction parametrisation from \cite[Section 2.2.1]{leach2021fairv2}.

Figure~\ref{fig:global-hist-aer-emulation} shows the emulated global mean surface temperature anomaly with FaIRGP. Whilst FaIR faces challenges in reproducing the magnitude of the global temperature response to anthropogenic emissions, the posterior correction introduced by FaIRGP allows for a deviation from FaIR, resulting in predicted temperature anomalies that more accurately reflect the NorESM2-LM response. This suggests that FaIRGP can leverage information from the \textit{historical} and SSPs experiments to enhance its representation of the temperature response to anthropogenic aerosol emissions.

\begin{figure}[h]
    \centering
    \includegraphics[width=0.65\linewidth]{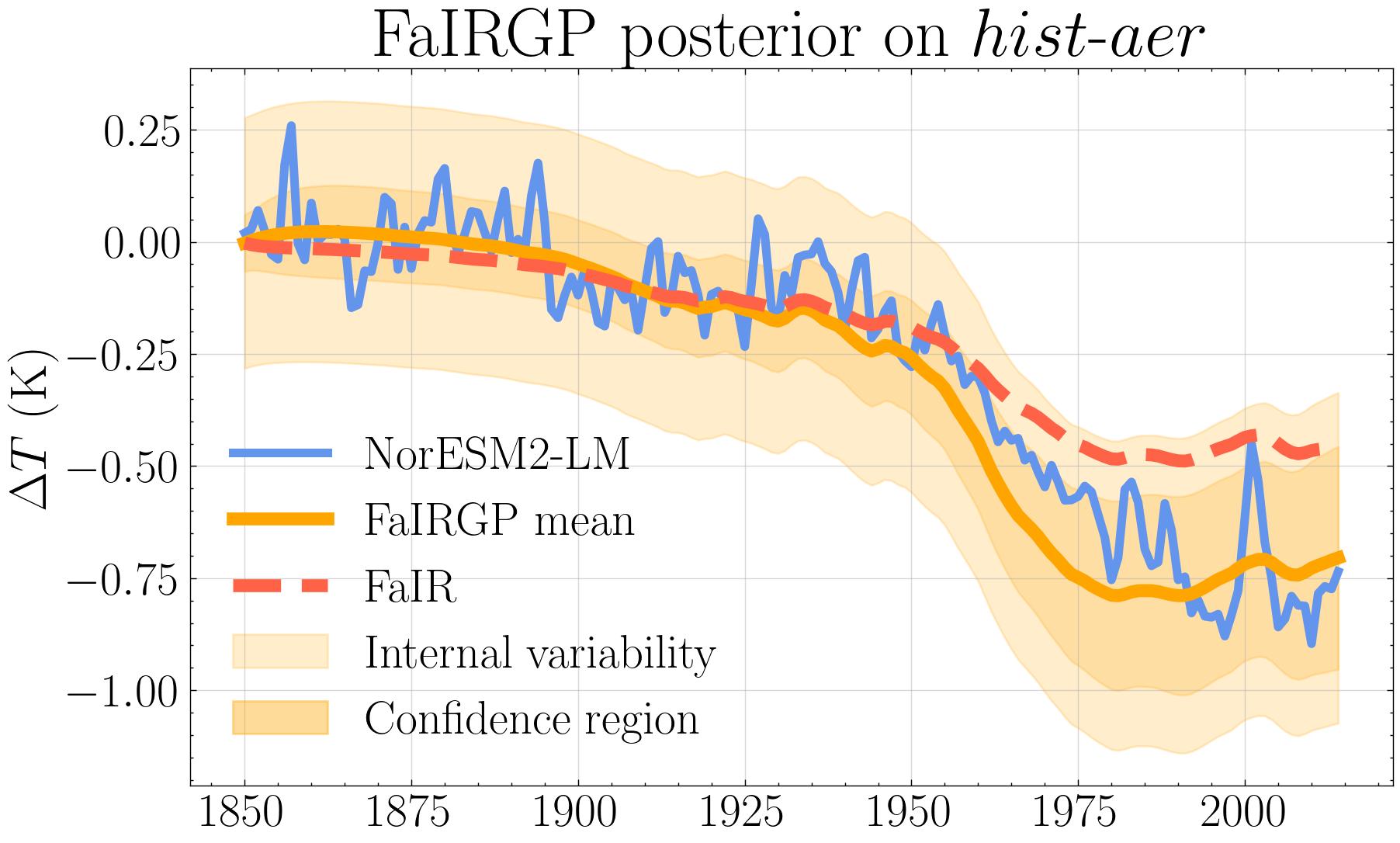}
    \caption{Emulated global mean surface temperature anomaly over 1850-2014 for the \textit{hist-aer} experiment}
    \label{fig:global-hist-aer-emulation}
\end{figure}


We now turn to the emulation of local surface temperature anomaly maps for the \textit{hist-aer} experiment. Figure~\ref{fig:hist-aer-plots} presents a comparison of predictions with the FaIR pattern scaling baseline and FaIRGP over three time periods. Consistently with the global trend from Figure~\ref{fig:global-hist-aer-emulation}, the predicted temperature anomaly magnitudes for 1900-1950 are similar between the pattern scaling model and FaIRGP, and align with the NorESM2-LM response whereas, over periods 1950-1980 and 1980-2014, the magnitude of the FaIRGP spatial response more accurately reflects the magnitude of the NorESM2-LM temperature response.

Regarding predicted spatial patterns, we observe that FaIRGP is able to deviate from its fixed pattern prior to partially reproduce the North Atlantic warming observed in NorESM2-LM after 1950. However, the predicted spatial temperature anomaly pattern with FaIRGP remains strongly influenced by its pattern scaling prior. This is particularly evident in the South Atlantic where FaIRGP reproduces the warming predicted by the pattern scaling model. This task is notoriously difficult for pattern scaling models, which excel at emulating responses to greenhouse gas emissions but struggle under strong aerosol forcing scenarios~\cite{may2012assessing, levy2013roles}. Considering the significant impact of the prior choice on FaIRGP, this fosters advocacy for a prior local response model that goes beyond pattern scaling models.  Additionally, using spatially-resolved emissions maps for aerosols, instead of global aerosols emissions, should enhance emulated temperatures given the strong influence spatial patterns of aerosols have over radiative forcing~\cite{williams2022strong}.

\begin{figure}[t]
    \centering
    \hspace*{-1.7em}
    \includegraphics[width=1.1\linewidth]{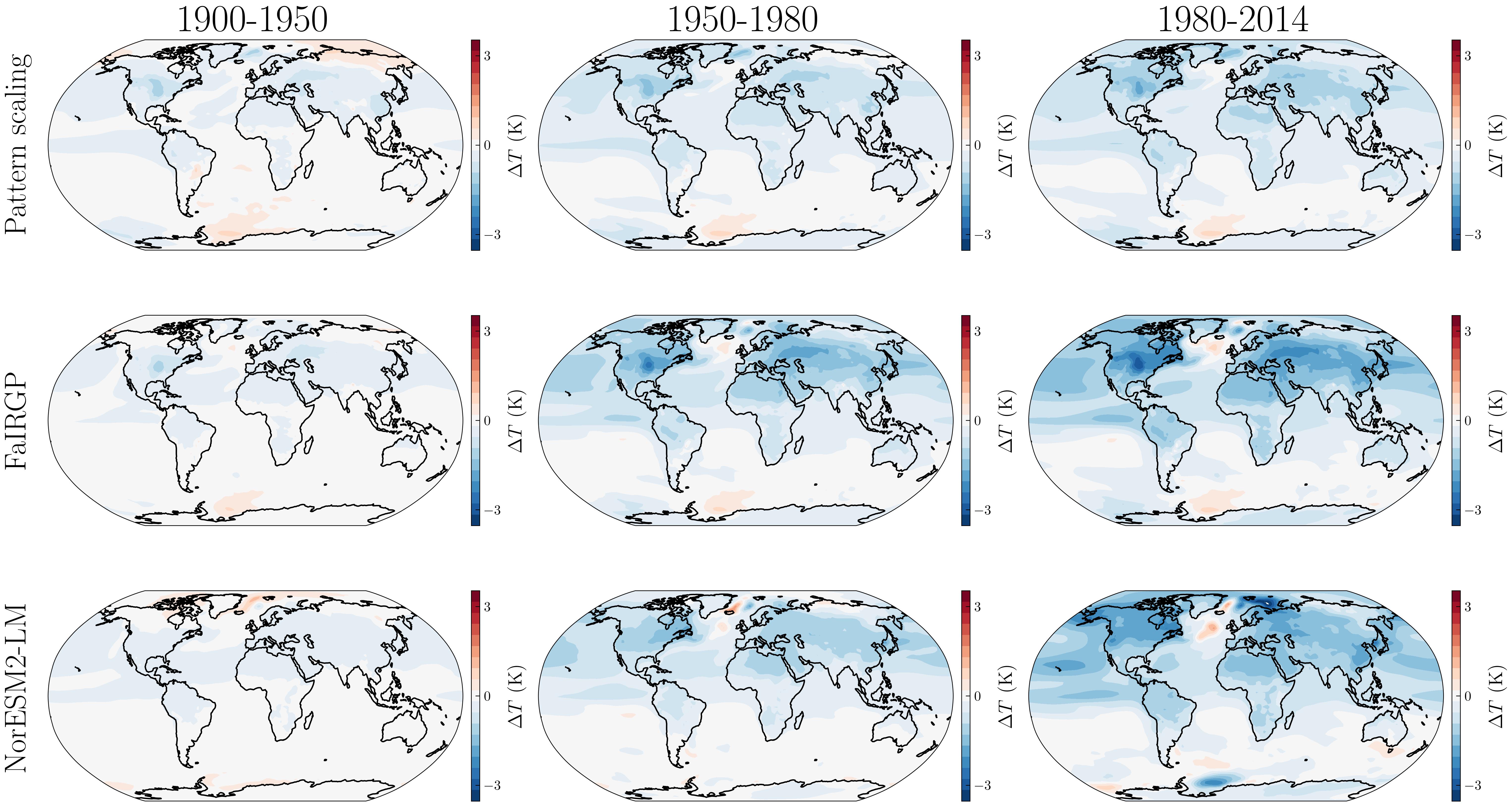}
    \caption{Maps of local surface temperature anomalies for the \textit{hist-aer} experiment averaged over the 1900-1950 period (left), 1950-1980 (left-right center) and 1980-2014 (right). \textbf{Top:} Prediction with the FaIR pattern scaling baseline. \textbf{Top-Bottom Center:} Prediction with the FaIRGP posterior mean. \textbf{Bottom:} Groundtruth NorESM2-LM surface temperature anomaly map.}
    \label{fig:hist-aer-plots}
    \vspace*{-2em}
\end{figure}

We conclude this experiment with a quantitative evaluation of the predicted local temperature anomaly maps over the 1950-2014 period. This is the period where the temperature response to aerosol emissions is most discernible. Because the anthropogenic aerosols forcing pattern is spatially heterogeneous and mostly localised around industrialised regions~\cite{CARSLAW2022101}, we assess the emulators' performance over four major industrialised regions: North America, East Asia, South Asia and Europe. The regions' extents are depicted in Figure~\ref{fig:regions-used-for-evaluation} and are constructed from world region delineation proposed by \cite{seneviratne2012changes}. Table~\ref{table:hist-aer-scores} shows that, consistently with the prediction maps, the FaIRGP posterior mean spatial response outperforms the FaIR pattern scaling baseline for every region considered. The plain GP model, as anticipated, faces difficulties in emulating local surface temperature anomalies solely based on aerosol emissions inputs. We believe this is primarily because every scenario from its training data includes greenhouse gas emissions. Appendix~\ref{appendix:aerosol-emulation-results} provides maps emulated with the plain GP model.

\begin{figure}[h]
    \centering
    \includegraphics[width=0.6\linewidth]{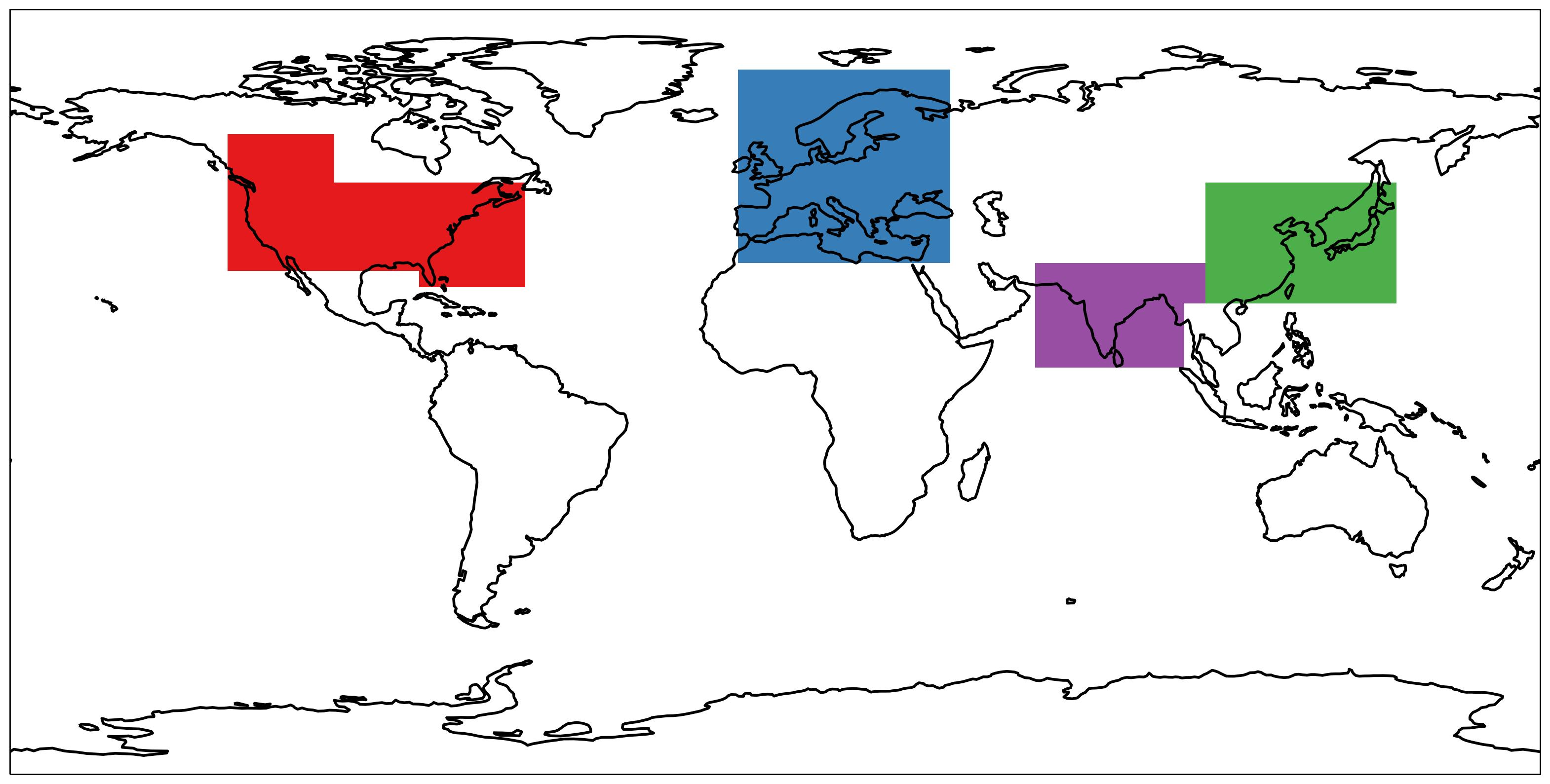}
    \caption{The polygons show the industrialised regions used for the evaluation of the predicted local annual mean surface temperature anomaly under the \textit{hist-aer} experiment:  North America (red), Europe (blue), South Asia (purple), East Asia (green).}
    \label{fig:regions-used-for-evaluation}
\end{figure}

\begin{table}[t]
\centering
    \caption{Scores of baseline emulators and FaIRGP for the task of emulating \textit{hist-aer} local annual mean surface temperature anomaly for four industrialised regions over the 1950-2014 period; the best emulator for each metric is highlighted in bold; $\dagger$ indicates our proposed method.}
        \begin{tabular}{llccc}
        \toprule
         Region   &   Emulator      &  RMSE (10\textsuperscript{-1})&  MAE (10\textsuperscript{-2}) &      Bias  (10\textsuperscript{-2}) \\
        \midrule
        \multirow{3}{*}{\textit{Europe}} & Pattern scaling &  1.29 &  1.71 &  1.29 \\
            & Plain GP &  2.26 &  3.37 &  3.07  \\
            & FaIRGP$^\dagger$ &  \textbf{1.05} &  \textbf{1.35} & \textbf{-0.313} \\ \midrule
        \multirow{3}{*}{\textit{North America}} & Pattern scaling &  1.10 &  1.63 &  1.21 \\
            & Plain GP &  2.40 &  3.70 &  3.65 \\
            & FaIRGP$^\dagger$ & \textbf{ 0.998} &  \textbf{1.35} & \textbf{-0.795} \\ \midrule
        \multirow{3}{*}{\textit{South Asia}} & Pattern scaling &  0.523 &  0.594 &  0.389 \\
            & Plain GP &  1.26 &  1.53 &  1.52 \\
            & FaIRGP$^\dagger$ &  \textbf{0.382} &  \textbf{0.404} & \textbf{-0.0100} \\ \midrule
        \multirow{3}{*}{\textit{East Asia}} & Pattern scaling &  1.01 &  1.32 &  1.20 \\
            & Plain GP &  2.22 &  3.08 &  3.08 \\
            & FaIRGP$^\dagger$ &  \textbf{0.631} &  \textbf{0.772} & \textbf{-0.173} \\
        \bottomrule
        \end{tabular}
    \label{table:hist-aer-scores}
\end{table}

\subsection{Inferring radiative forcing from temperatures}

As in Section~\ref{subsection:global-forcing-posterior}, we propose to probe whether FaIRGP can be used to infer spatial forcing maps. Since we do not have a spatial forcing model, we simply use a spatially constant forcing prior for $\sfF(x,t)$. We update it with spatially-resolved temperature data from \textit{historical} and SSPs scenarios. Figure~\ref{fig:spatial-radiative-forcing} compares the obtained posterior mean forcing with historical forcing maps simulated with NorESM2-LM.

The magnitude of the inferred posterior mean forcing with FaIRGP is very conservative, and does not align with the magnitude of the forcing levels observed in the groundtruth NorESM2-LM maps. The sign of the inferred forcing with FaIRGP also disagrees with the groundtruth maps in several regions. This is most evident around 90$^\circ$ latitudes, where FaIRGP infers a positive forcing whilst NorESM2-LM simulations suggest a negative forcing.

We attribute this discrepancy to FaIRGP relying only on local temperature anomalies to infer local radiative forcing. Whilst high latitudes tend to warm faster than around the equator (due among things to surface albedo feedback and atmospheric circulation for equator to poles), the top-of-atmospheric effective radiative forcing at high latitudes can remain negative (due to higher albedo, larger zenith-angle). This underlines an important point: FaIRGP performs probabilistic inference of local radiative forcing given local temperature anomalies, which alone is not sufficient to account for the complex spatiotemporal dynamics of the climate system. However, there are reason to believe that informing this probabilistic inference with additional covariates, such as surface albedo or net top-of-atmosphere data, could enhance the accuracy of the inferred spatial forcing maps.

\begin{figure}[h]
    \centering
    \vspace*{-0.5em}
    \hspace*{-1.7em}
    \includegraphics[width=1.1\linewidth]{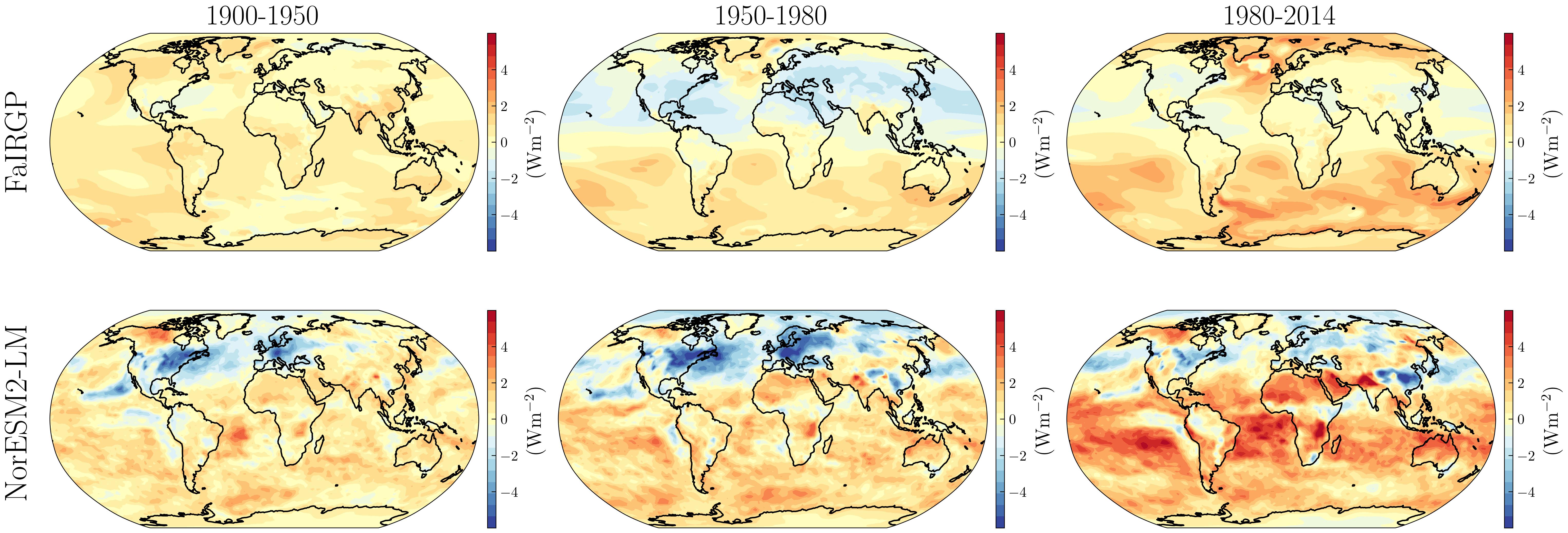}
    \vspace*{-0.5em}
    \caption{Maps of top-of-atmosphere radiative forcing averaged over 1900-1950, 1950-1980 and 1980-2014 periods of the \textit{historical} scenario. \textbf{Top:} Mean posterior radiative forcing inferred with FaIRGP. \textbf{Bottom:} NorESM2-LM radiative forcing maps.}
    \vspace*{-2em}
    \label{fig:spatial-radiative-forcing}
\end{figure}

Finally, we note that whilst the spatial forcing maps inferred with FaIRGP inadequately capture the groundtruth forcing magnitude and pattern, they still manage to partially reproduce two important features of the climate system: the overall temporal increase in forcing, as well as the mid-1900s hemispheric contrast. In particular, Figure~\ref{fig:spatial-radiative-forcing-latitudinal} shows that whilst predictions largely underestimate the forcing magnitude, they also display some latitudinal correlation with the NorESM2-LM simulated forcing over the 1940-1980 period. This is encouraging considering that the predicted spatial patterns are solely inferred from temperature patterns.

\begin{figure}[h]
    \centering
    \vspace*{-0.5em}
    \includegraphics[width=0.6\linewidth]{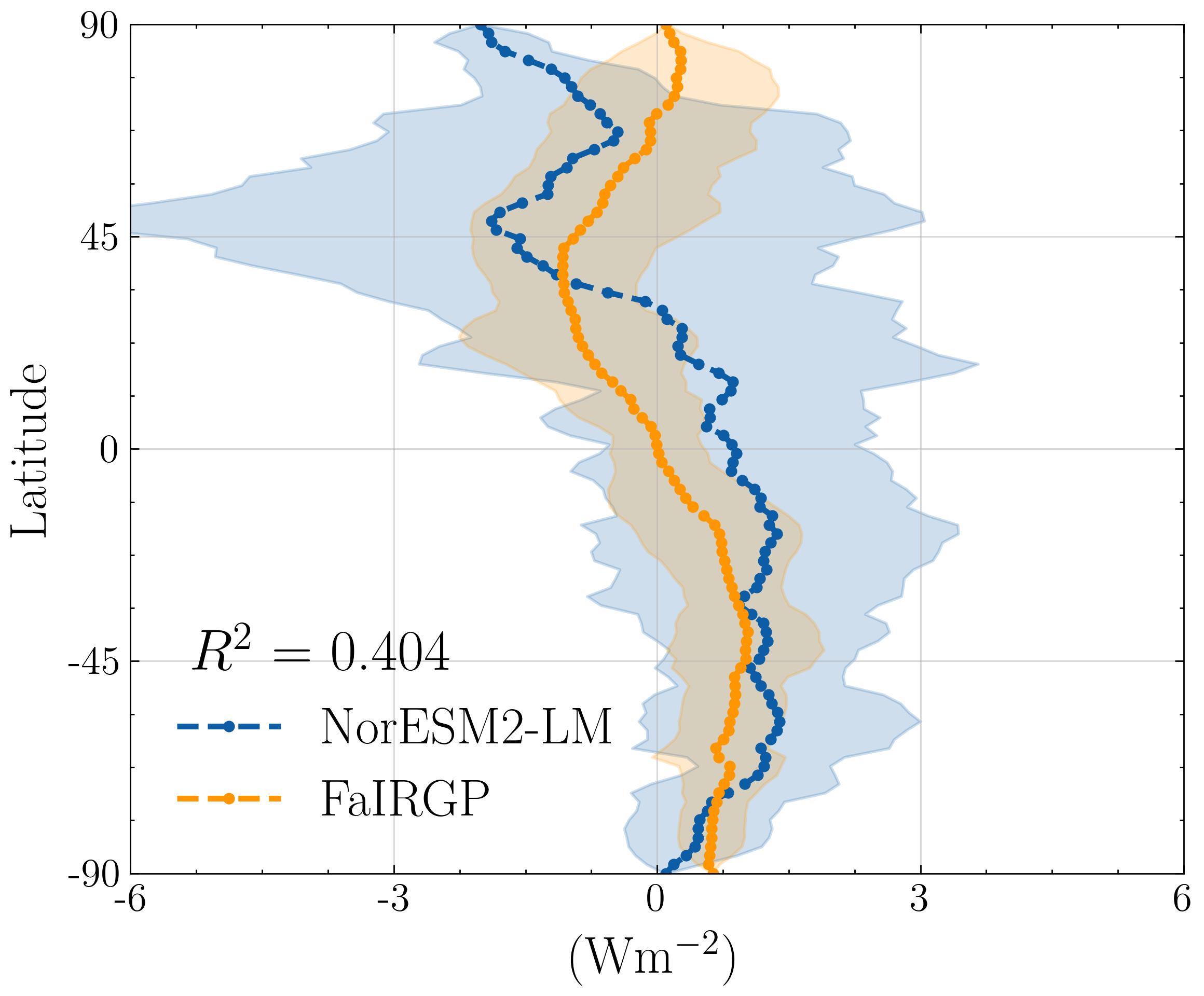}
    \vspace*{-0.5em}
    \caption{Latitudinal average of FaIRGP inferred radiative forcing and NorESM2-LM radiative forcing over 1940-1980 period. The shaded area reports two standard deviations of longitudinal variability. $R^2 = 0.404$ is computed between latitudinal averages.}
    \vspace*{-2em}
    \label{fig:spatial-radiative-forcing-latitudinal}
\end{figure}

\section{Discussion}\label{section:discussion}

\subsection{About the GP approach}

\subsubsection{Comparison to placing Bayesian priors over the SCM parameters}

A simple way to introduce model variability in simple climate models is to place Bayesian priors over model parameters, such as the carbon cycle feedback terms or the forcing model coefficients~\cite{leach2021fairv2}. Such priors can then be updated with global temperature data to formulate posterior distributions. This specifies a probabilistic climate model emulator calibrated against climate model data. Whilst aligned in spirit with the work proposed in this paper, it is important to highlight the distinctions in objectives and technical scope with our GP-based approach.

Placing Bayesian priors over parameters allows to sample from diverse sources of uncertainty such as uncertainty on the temperature response parameters, forcing sensitivities or carbon cycle sensitivities. Importantly, this allows to explore climate sensitivity and the structural uncertainty of models. In contrast, the framework we propose only addresses the forcing uncertainty and considers fixed values for the SCM parameters. In that sense, FaIRGP, as it is formulated in our work, provides a more restrictive exploration of the uncertainty than methods placing priors over an SCM parameters.

The main advantage of our GP-based approach rather lies in its technical convenience. Chiefly, FaIRGP formulates analytical expressions for the posterior distribution over forcing and temperatures, which have an intuitive interpretation as the sum of a physics-driven prior and a data-driven correction. In contrast, when placing a Bayesian prior over parameters, we do not in general have access to a probability distribution for temperatures. Therefore, probabilistic emulation may need to be sampling-based, and one must resort to more complex Markov-Chain Monte-Carlo (MCMC) techniques to sample from the posterior distribution.

In addition, beyond particular modelling objectives, sampling-based approaches have two main shortcoming from a purely technical perspective: \textit{(i)} a thorough uncertainty quantification requires storing a tremendous amount of scenarios, which can be limited by memory capacity --- \citeA{beusch2022emission} require an ensemble of 9 million emulations; \textit{(ii)} it reduces the statistical representation of uncertainty to summary statistics such as the mean, standard deviation or quantiles. In contrast, having a closed-form expression for the posterior distribution with GPs allows to analytically conduct probabilistic studies using the full probability density, and additionally draw samples, if needed. Finally, having access to the probability density expression allows to evaluate the likelihood of observations. This can critically be used as a maximisation objective to tune the model parameters against observations, but also in the context of Bayesian optimisation routines to find optimal emission trajectories to meet climate goals.

Because of the technical convenience that GP-based approaches display, and the richer structural uncertainty of sampling-based approaches, we argue that rather than opposing them, the two paradigms should be viewed as complementary. Namely, endeavours to formulate a hierarchical Bayesian model that both places priors over the SCM parameters and uses a FaIRGP backbone are promising directions to simultaneously address many sources of uncertainty while retaining some technical advantages of working with GPs.





\subsubsection{Connection with stochastic energy balance models}

Our work is related to the work of \citeA{cummins2020optimal}, which formulates a stochastic energy balance model by introducing a white noise variability term in the temperature response, but also in the forcing model. This allows them to account for climate internal variability, and formulate a Kalman filtering strategy to obtain maximum likelihood estimators of the energy balance model parameters. Our work similarly introduces a white noise in the temperature response model to account for climate internal variability, which results in an additional temporal Matérn-1/2 covariance term in the prior over temperatures. Whilst an extended discussion goes beyond the scope of this work, it can be shown that in the long term regime, these two approaches are equivalent, and that more broadly, Kalman filtering models are in fact equivalent to temporal GPs with Matérn covariance functions~\cite{hartikainen2010kalman, sarkka2012infinite, sarkka2013spatiotemporal}.

Our work differs from \citeA{cummins2020optimal} in that beyond the stochasticity arising from internal variability, we also introduce a GP prior over the radiative forcing. This GP prior introduces stochasticity over the SCM design, which is not only a function of time, but also of emission levels. Because our modelling is not purely temporal (the GP is also a function of emissions), we cannot employ Kalman filtering strategies and instead choose to use GP regression techniques.

\subsubsection{Choice of kernel $\rho$}
The choice of covariance function $\rho$ in (\ref{eq:prior-rho-specification}) is an important choice that allows the user to incorporate their domain knowledge into the prior over the radiative forcing. Let $u$ and $u'$ be generic notations for input data (in our work, greenhouse gas and aerosol emissions), the kernel $\rho(u, u')$ specifies how will the prior covary between these two inputs. For example, choosing $\rho(u, u') =  \delta(u-u')$ makes the GP independent at any two inputs. On the other hand, choosing $\rho(u, u') = 1$ causes the GP to covary equally between any two inputs.

The Matérn family are a common family of kernel parameterised by a degree $\nu$. They allow to control for the functional regularity of the GP. For $\nu = 1/2$, draws from the GP are continuous functions, for $\nu=3/2$ they are once differentiable, and in the limit $\nu=\infty$ they become infinitely differentiable\footnote{The Matérn-$\infty$ kernel actually corresponds to the squared exponential kernel.}. A detailed presentation of the Matérn kernels is provided in Appendix~\ref{appendix:matern-covariance}.

Additions or multiplications of kernels can be used to construct more elaborate covariance functions that reflect an additive or multiplicative structure in the forcing. Periodic kernels can also be introduced to model seasonality. Going further, more complex choices of kernels include the spectral mixture kernel~\cite{wilson2013gaussian} which attempts to learn the spectral density of the data, or even kernels parametrised as neural networks~\cite{wilson2016deep, law2019hyperparameter}.

Whilst kernel selection plays a key role in GP regression, our focus in this work is on the development of the FaIRGP framework rather than refining the kernel itself. The kernel is treated throughout as a modular component, with potential for refinement. As a result, we choose to work with a simple anisotropic Matérn-3/2 kernel throughout. Preliminary findings indicate that the choice of kernel does not significantly degrade the results. Hence, dedicating efforts to constructing more elaborate kernels is likely to yield comparable or better results than our current approach.

\subsubsection{Computational efficiency and scalability}

We report that to emulate 100 years of surface temperature anomaly with FaIRGP, it takes less than a second for global and spatial emulation on an \enquote{average} personal laptop\footnote{16Go memory}, without requiring any parallelization methods.

Scalability issues are commonly associated with Gaussian Processes (GPs) when the training set grows in size. This is because computing their posterior distribution involves a matrix inversion, which has a cubic computational cost in the number of training samples. Fortunately, unlike neural networks which require large amounts of data~\cite{watsonparris2021climatebench}, GPs excel in scenarios with limited data~\cite{rasmussen2005gaussian}. Consequently, it is possible to develop skilful GP emulators with limited training data.

In cases where using a larger training dataset becomes a necessity, one can still employ linear conjugate gradient methods and parallelisation schemes~\cite{wang2019exact} to scale exact GPs to millions of data points. Alternatively, sparse approximation techniques can be used to obtain a scalable estimate of the posterior distribution~\cite{rahimi2007random, tsitsias2009variational, matthews2016sparse}.

\subsection{Climate modelling considerations}

\subsubsection{Beyond FaIR: broader applicability of the method}

While we have chosen to use FaIR as the backbone climate model for our work, the rationale behind the development of FaIRGP is easily transferable to other commonly used simple climate models such as MAGICC~\cite{meinshausen2011emulating} or OSCAR~\cite{gasser2017compact}. These models share linear time invariant dynamics, allowing us to incorporate a GP prior into the forcing term of these dynamics. By doing so, we can obtain a GP-based solution that is informed by the model parameters, exactly like in FaIRGP. 

The dynamical systems of interest can naturally describe the temperature response to radiative forcing, as is the case in our work. However, we could also imagine extending this to carbon cycle models, where emission levels prescribe the forcing function, and the output of the dynamical system are atmospheric concentrations. This GP framework over dynamical systems is in fact highly general, and is known as latent force modelling. This paradigm was initially introduced by \citeA{alvarez2009latent, alvarez2013linear} in the context of linear dynamical systems where the forcing function is unknown. 

For more complex dynamical systems involving non-linearities, more sophisticated latent force modelling techniques may be applied, such as methodologies based on Volterra series approximation of the dynamical system~\cite{alvarez2019non, ross2021learning}, variational approximations of the posterior response~\cite{ward2020black, moss2021approximate}, with the option of using deep probabilistic modelling of the dynamical system~\cite{mcdonald2021compositional, baldwin2023deep}.

\subsubsection{Pixel independence assumption}

The pattern scaling model used in our prior effectively uses independent linear regressions at each location to map changes in global mean temperature onto changes in local temperature. However, this modelling approach challenges our intuition as it overlooks the spatial dependence of temperature fields, despite our expectation that temperatures at nearby locations should covary.

To address this modelling concern, a common solution is to incorporate spatially correlated innovations into the pattern scaling response, which represents the spatial expression of climate internal variability~\cite{beusch2020emulating, beusch2020crossbreeding, beusch2022emission, castruccio2013global, goodwin2020computationally}. Alternatively, \citeA{link2019fldgen} design a procedure based on the Wiener-Khinchin theorem~\cite{champeney1973fourier} to emulate a climate variability field with the same variance and spatiotemporal correlation structure as the one in ESMs outputs.

From a statistical modelling standpoint, these approaches introduce such spatially correlated innovations as a multivariate Gaussian variable, their main differences lying in the procedure used to determine its spatial covariance structure. Because of the Gaussian nature of the modelling we propose in this work, a Gaussian spatial correlated innovation can easily be incorporated in the FaIRGP spatial emulation model we propose. However, for the sake of clarity, we chose not to delve into these additional considerations and instead focus on the exposition of the Bayesian energy balance model.

Further, whilst it has been pointed out that, in general, regional changes in temperature scale robustly with global temperature~\cite{seneviratne2016allowable}, this may not be true under strong mitigation scenarios or under strong aerosol forcing~\cite{may2012assessing, levy2013roles, tebaldi2018evaluating, tebaldi2020emulating}. This fosters advocacy for the development of \enquote{spatialised} simple climate models going beyond pattern scaling. 


\subsubsection{Opportunities for precipitation emulation}

Going beyond surface temperature emulation, we can explore how FaIRGP can be used to emulate precipitations. One approach is to combine the Gaussian process GP emulator for precipitation described in \citeA{watsonparris2021climatebench} with FaIRGP. By leveraging Gaussian conjugacy relationships, a natural cross-covariance between the two emulators is induced. This enables the exchange of information between precipitation and temperature fields. This is in line with recent advocacy for joint emulation of temperatures and precipitations~\cite{snyder2019joint, schongart2022spatially, schongart2023extending}.

Another option is to use the emulated temperatures from FaIRGP as input for a statistical emulator of precipitation, such as a Gamma regression model~\cite{martinez2019why}. Indeed, many climate impacts are routinely assumed to be a function of temperatures. Additionally, the full probability distribution predicted by FaIRGP can be incorporated into the model, allowing for the propagation of epistemic uncertainty associated with FaIRGP.

\subsubsection{Application to detection and attribution}

With FaIRGP, we have access to the analytical expression of the probability density distribution of emulated temperatures. Therefore, we can emulate surface temperatures under historical scenarios, both with and without anthropogenic forcing, and analytically compute the probability of temperature occurrences in each scenario. By comparing these probabilities, we can assess the extent to which human activity has made a certain temperature range more likely, enabling us to conduct attribution studies.

Conducting detection and attribution studies with emulators is not exclusive to FaIRGP, and could in principle be conducted with any emulator as discussed in \citeA{watsonparris2021climatebench}. However, the strength of FaIRGP lies in its ability to input temperature ranges directly into a known probability density function, providing a precise probability between 0 and 1 of such temperatures to occur under a given emission scenario.

\section{Conclusion and outlooks}

Simple climate models (SCMs) are robust physically-motivated emulators of changes in global mean surface temperatures. Gaussian processes (GPs) are powerful Bayesian machine learning models capable of learning complex relationships, and emulate from data how changes in emissions affect changes in surface temperatures. By combining them together, we reconcile these two paradigms of emulator design, which mutually address their respective limitations. We introduce FaIRGP, a Bayesian energy balance model that \textit{(i)} maintains the robustness and interpretability of a simple climate model, \textit{(ii)} gains the flexibility of modern statistical machine learning models with the ability to learn from data, and \textit{(iii)} provides principled uncertainty quantification over the emulator design.

We demonstrate skilful emulation of global mean surface temperatures over realistic emission scenarios. Unlike GPs, FaIRGP has a robust physical grounding which allows it to provide reliable predictions even on out-of-sample scenarios. On the other hand, unlike SCMs, FaIRGP can learn complex non-linear relationships to deviate from an SCM and improve predictions. In particular, FaIRGP better accounts for the temperature response to anthropogenic aerosol emissions. We further show that these findings carry over to the task of emulating spatially-resolved surface temperature maps. In addition, we find that FaIRGP can also be used to produce estimates of top-of-atmosphere radiative forcing given temperature data.

The full mathematical tractability, with analytical expressions for probability distributions, provides great control over the modelling, and a rich framework to reason about probability distributions over temperatures. This is of great relevance to detection and attribution studies. Further, whilst our work focuses on temperature emulation --- which has already been thoroughly studied --- we envision FaIRGP as a foundation for the development of robust data-driven emulators for more complex climate variables, such as precipitations. Harnessing the mathematical properties of GPs, we believe that emulating climate impacts using FaIRGP will provide additional control over pure machine learning methods, whilst being able to capture complex non-linear relationships to forcing. 

We hope this work will contribute to building trust in data-driven models, and thereby allow the climate science community to benefit more widely from their potential.

\section*{Open Research}
\begin{itemize}
    \item Data --- The data used to run experiments in this paper is obtained from ClimateBenchv1.0~\cite{watsonparris2021climatebench} and available here \url{https://doi.org/10.5281/zenodo.5196512}.
    \item Software --- The model, evaluation metrics and all code used to run experiments and generate the plots is available here \url{https://doi.org/10.5281/zenodo.8164335}. All figures were made with Matplotlib version 3.6.2~\cite{caswell2020matplotlib, hunter2007matplotlib} and Cartopy~\cite{met2010cartopy}. Models are implemented with PyTorch version 1.12.1~\cite{paszke2019pytorch} and GPyTorch version 1.9.0~\cite{gardner2018gpytorch}.
\end{itemize}

\acknowledgments
We would like to thank Maybritt Schillinger for insightful discussions and helpful feedback which has greatly benefited this work. We would also like to thank three anonymous reviewers for their thorough reviews and comments, which contributed significantly to the quality of this work. Shahine Bouabid receives funding from the European Union’s Horizon 2020 research and innovation programme under Marie Skłodowska-Curie grant agreement No 860100.

\bibliography{references.bib}

\newpage

\appendix

\newpage
\section{Complementary material on Gaussian processes}\label{appendix:gps}

\subsection{Illustrated walk-through Gaussian process regression}\label{appendix:illustrations-gp}

We provide in this section an illustrated walk-through of how a regression task can be achieved using GPs. Begin by considering a GP specified by
\begin{equation}
    \left\{
    \begin{aligned}
        \begin{split}
            & \sfff(x) \sim \GP(m, k) \\
            & m(x) = 0 \\
            & k(x, x') = \sigma_\sfff^2 \exp\left(-\frac{|x - x|^2}{\ell}\right).
        \end{split}
    \end{aligned}
    \right.
\end{equation}
Further consider a set of $p = 500$ regularly spaced points between 0 and 1, $\texttt{grid} = \begin{bmatrix} z_1 & \ldots & z_p\end{bmatrix}$ where
\begin{equation}
    z_i = \frac{i}{p}, \enspace 1\leq i \leq p.
\end{equation}

Then, by definition of GPs, the evaluation of $\sfff$ on \texttt{grid} admits a joint multivariate normal distribution. Namely, if we define the covariance matrix
\begin{equation}
    \bK = \begin{bmatrix}
        k(z_1, z_1) & \ldots & k(z_1, z_p) \\
        \vdots & \ddots & \vdots \\
        k(z_p, z_1) & \ldots & k(z_p, z_p).
    \end{bmatrix},
\end{equation}
then we have
\begin{equation}
    \begin{bmatrix}
        \sfff(z_1) \\ \vdots \\ \sfff(z_p)
    \end{bmatrix} \sim \cN(\mathbf{0}, \bK).
\end{equation}

Therefore, the zero vector $\mathbf{0}$ provides the mean of $\sfff(x)$ at every grid point, and the diagonal elements of the covariance matrix provide the standard deviation of $\sfff(x)$ at every grid point following $\sqrt{k(z_i, z_i)} = \sigma_\sfff$. Figure~\ref{fig:gp-prior-only} plots the mean function and the 95\% credible interval of $\sfff(x)$ over the \texttt{grid}.
\begin{figure}[h]
    \centering
    \vspace*{-1em}
    \includegraphics[width=0.7\linewidth]{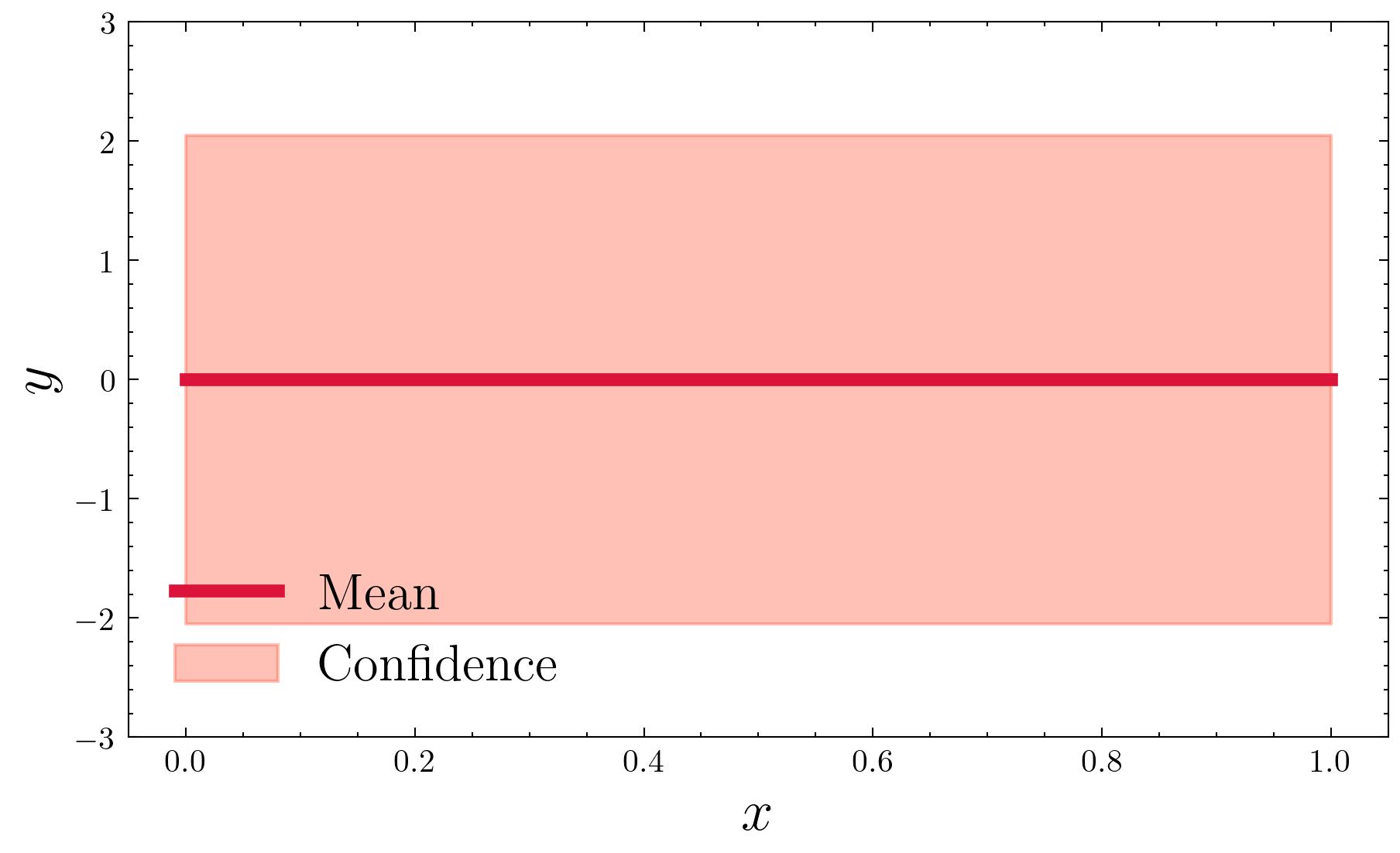}
    \vspace*{-1em}
    \caption{Plot of the $\sfff(x) \sim \GP(0, k)$ prior mean function and 95\% confidence region for $\sigma_\sfff^2 = 1$ and $\ell = 0.1$.}
    \vspace*{-1em}
    \label{fig:gp-prior-only}
\end{figure}

Furthermore, we are also able to draw samples from the multivariate normal distribution $\cN(\mathbf{0}, \bK)$ by taking
\begin{equation}
    \bff = \bK^{1/2} \bu, \enspace \bu\sim\cN(\mathbf{0}, \bI_p).
\end{equation}
Such a sample $\bff\in\RR^p$ corresponds to a sample path from the GP $\sfff(x)$ evaluated on the grid points. Figure~\ref{fig:gp-prior-samples} shows the plots of 10 sample paths, from this multivariate normal distribution.
\begin{figure}[h]
    \centering
    \vspace*{-1em}
    \includegraphics[width=0.7\linewidth]{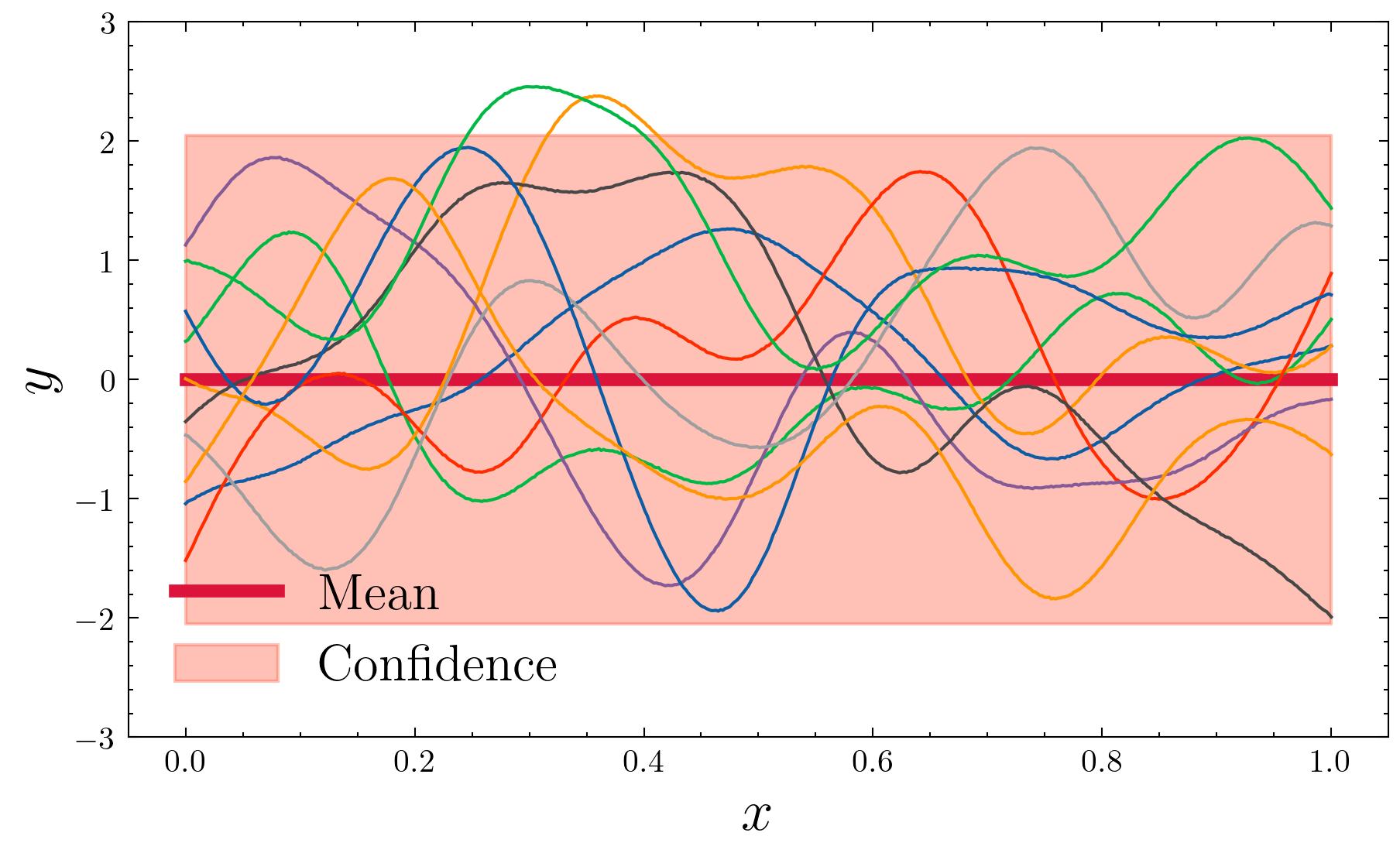}
    \vspace*{-1em}
    \caption{Same as Figure~\ref{fig:gp-prior-only}, but plotting on top 10 sample paths from $\bff \sim \cN(\mathbf{0}, \bK)$. $\sigma_\sfff^2 = 1$ and $\ell = 0.1$.}
    \vspace*{-1em}
    \label{fig:gp-prior-samples}
\end{figure}

Consider now the following data generating process
\begin{equation}
    y = \sin(2\pi x) + \epsilon, \enspace \epsilon\sim\cN(0, 0.04),
\end{equation}
from which we simulate $n = 10$ observations $\by = \begin{bmatrix}y_1 & \ldots & y_n\end{bmatrix}$ at inputs $\bx = \begin{bmatrix}x_1 & \ldots & x_n\end{bmatrix}$. Figure~\ref{fig:observed-data} shows a plot of the observed data.
\begin{figure}[h]
    \centering
    \vspace*{-1em}
    \includegraphics[width=0.6\linewidth]{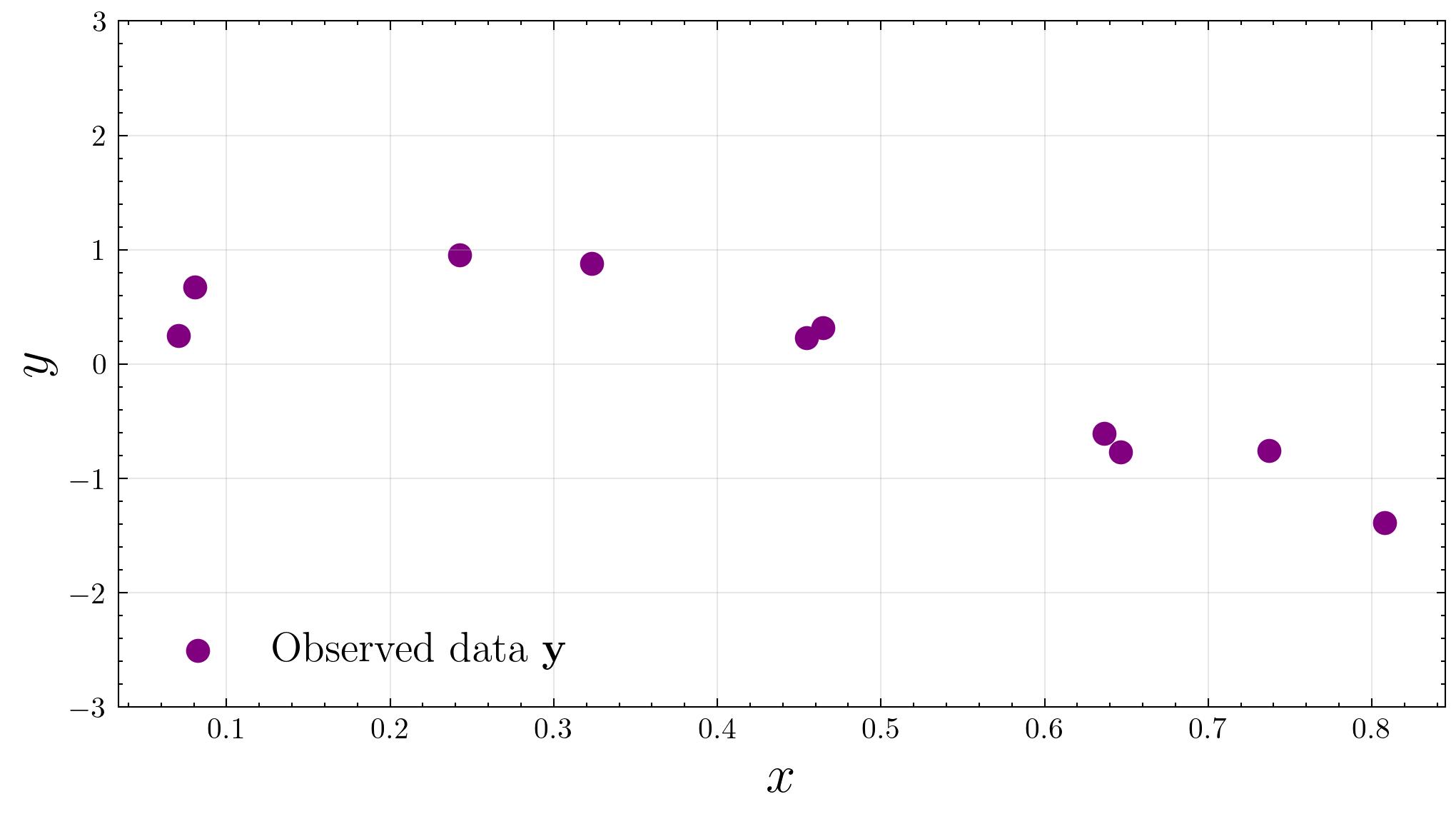}
    \vspace*{-1em}
    \caption{Scatter plot of $(\bx, \by)$}
    \vspace*{-1em}
    \label{fig:observed-data}
\end{figure}

Suppose that the form of the data generating process is unknown to a user who wishes to use these observations to regress $y$ onto $x$. One possibility is to use the previously defined GP $\sfff(x)$ as a Bayesian prior over the space of regression models $x\to y$ and assume the relationship
\begin{equation}
    y_i = \sfff(x_i) + \varepsilon_i, \enspace\varepsilon\sim\cN(0, \sigma_\varepsilon^2),
\end{equation}
for an unknown noise variance $\sigma^2_\varepsilon$.

By definition of GPs, we know that the evaluation of $\sfff$ on $\bx$ will again admit a joint multivariate normal distribution, i.e.\ by defining the covariance matrix
\begin{equation}
    \bK_\bx = \begin{bmatrix}
        k(x_1, x_1) & \ldots & k(x_1, x_n) \\
        \vdots & \ddots & \vdots \\
        k(x_n, x_1) & \ldots & k(x_n, x_n),
    \end{bmatrix}
\end{equation}
we get that
\begin{equation}
    \begin{bmatrix}
        \sfff(x_1) \\ \vdots \\ \sfff(x_n)
    \end{bmatrix} \sim \cN(\mathbf{0}, \bK_\bx),
\end{equation}
and therefore by incorporating the additive noise, it follows that
\begin{equation}
    \by|\bx \sim \cN(\mathbf{0}, \bK_\bx + \sigma^2_{\varepsilon}\bI_n).
\end{equation}

Firstly, using this distribution, we can find values of the hyperparameters $\theta = (\sigma_\sfff, \ell, \sigma_\varepsilon)$ that maximise $\log p(\by|\bx)$. This corresponds to finding their maximum likelihood estimates under $p(\by|\bx)$, and can be achieved for example following the gradient descent procedure outlined in the following algorithm.
\begin{algorithm}[h]
    \caption{Maximum likelihood calibration of $\theta = (\sigma_\sfff, \ell, \sigma_\varepsilon)$ with gradient descent}
\begin{algorithmic}[1]
    \STATE {\bfseries Input:} $\bx, \by, \theta_0, \eta > 0, N \in\NN$
    \FOR{$t\in\{1, \ldots, N\}$}
        \STATE Compute $\log p(\by|\bx) = \log \cN(\by|\mathbf{0}, \bK_\bx + \sigma^2_{\varepsilon}\bI_n)$
        \STATE Take $\theta_t \leftarrow \theta_{t-1} + \eta \nabla_\theta\log p(\by|\bx) $
    \ENDFOR
    \STATE {\bfseries Return:} $\theta_\tau$
\label{alg:mle}
\end{algorithmic}
\end{algorithm}
Such a procedure offers a principled and robust way to calibrate the model hyperparameters against observations. Figure~\ref{fig:calibrated-go} shows a plot of the GP mean and confidence region with sample paths that have been calibrated against observations. Calibrating the prior against $(\bx, \by)$ yields a smaller variance $\sigma_\sfff^2 = 0.43$, which translates in a tighter confidence region, and a larger lengthscale $\ell = 0.17$, which translates into slower variations of the sample paths.
\begin{figure}[h]
    \centering
    \vspace*{-1em}
    \includegraphics[width=0.7\linewidth]{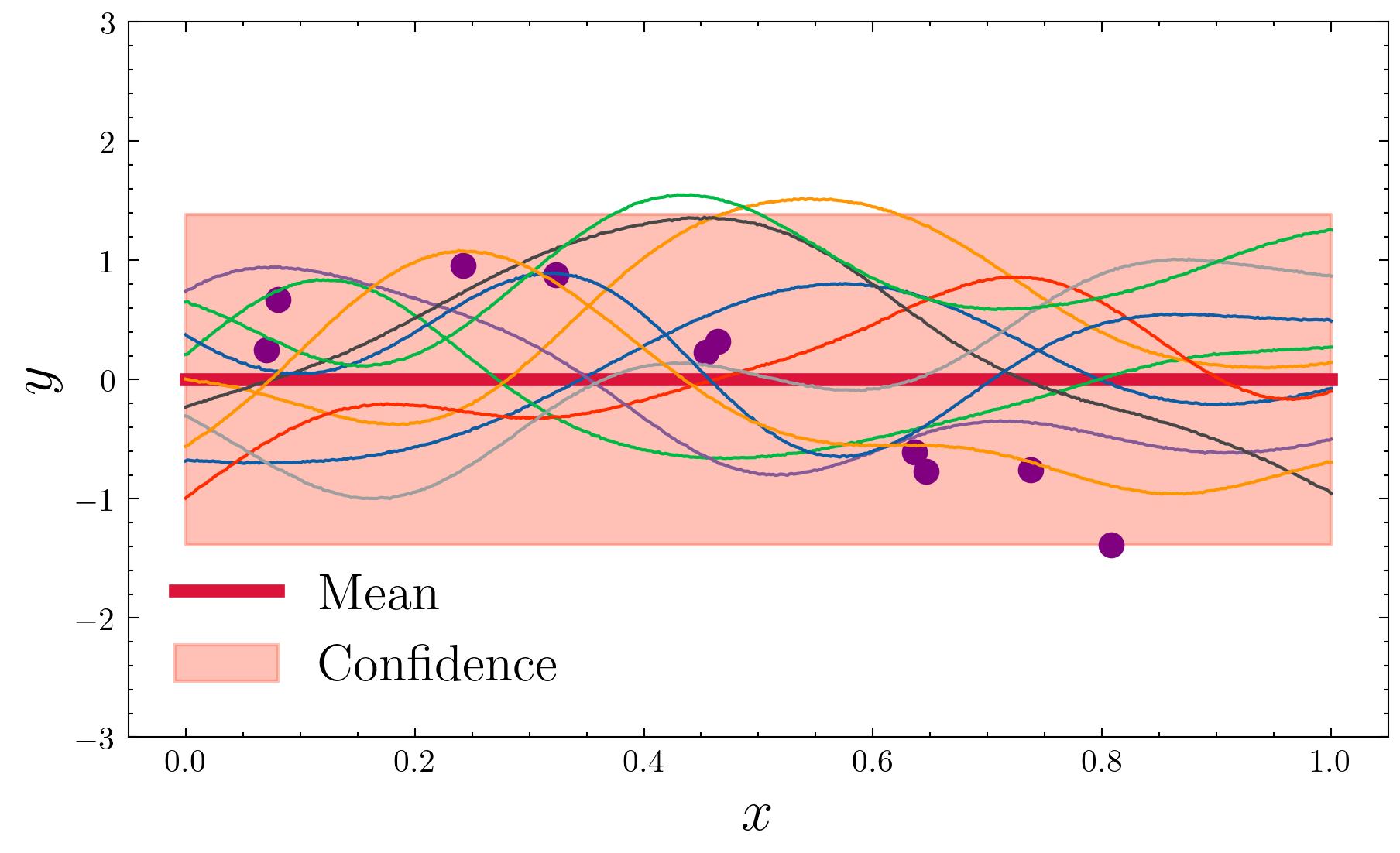}
    \vspace*{-1em}
    \caption{Plot of the $\sfff(x) \sim \GP(0, k)$ prior mean function and 95\% confidence region on the \texttt{grid} for calibrated hyperparameters $\sigma_\sfff^2 = 0.43$ and $\ell = 0.17$, with 10 sample paths and observed data $(\bx, \by)$.}
    \vspace*{-1em}
    \label{fig:calibrated-go}
\end{figure}

Second, we can update the prior $\sfff(x)$ placed over the predictive model into a posterior informed with the observed data $\sfff(x)|\by$. Formally, the mean and covariance of the posterior can be computed analytically following
\begin{equation}
    \left\{
    \begin{aligned}
        \begin{split}
            & \sfff(x)|\by \sim \GP(\bar m, \bar k) \\
            & \bar m(x) = k(x, \bx) (\bK_\bx + \sigma_\varepsilon^2\bI_n)^{-1}\by \\
            & \bar k(x, x') = k(x, x') - k(x, \bx) (\bK_\bx + \sigma_\varepsilon^2\bI_n)^{-1}k(\bx, x'),
        \end{split}
    \end{aligned}
    \right.
\end{equation}
where $k(x, \bx) = \begin{bmatrix} k(x, x_1) & \ldots & k(x, x_n)\end{bmatrix}$ and $k(\bx, x') = \begin{bmatrix} k(x_1, x') & \ldots & k(x_n, x')\end{bmatrix}^\top$.

Since this is still a GP, we can plot it on the \texttt{grid} using the exact same procedure as for the prior. Namely, define the posterior mean vector and posterior covariance matrix on \texttt{grid}
\begin{equation}
    \bar\bm = \begin{bmatrix}\bar m(z_1) \\ \vdots \\ \bar m(z_p)\end{bmatrix}\qquad     \bar\bK = \begin{bmatrix}
        \bar k(z_1, z_1) & \ldots & \bar k(z_1, z_p) \\
        \vdots & \ddots & \vdots \\
        \bar k(z_p, z_1) & \ldots & \bar k(z_p, z_p).
    \end{bmatrix}.
\end{equation}

Then
\begin{equation}
    \begin{bmatrix}\sfff(z_1)|\by \\ \vdots \\ \sfff(z_p)|\by\end{bmatrix}\sim \cN(\bar \bm, \bar\bK).
\end{equation}
Therefore, the posterior mean vector $\bar \bm$ provides the mean of $\sfff(x)|\by$ at every grid point and the diagonal elements of the covariance matrix $\bar\bK$ provide the standard deviation of $\sfff(x)|\by$ at every grid point (which does not simplify to $\sigma_\sfff$ this time). Figure~\ref{fig:gp-posterior} illustrates how updating the prior with 1, 2 and 10 observations affects the posterior mean and the posterior credible interval.
\begin{figure}[h]
    \centering
    \includegraphics[width=\linewidth]{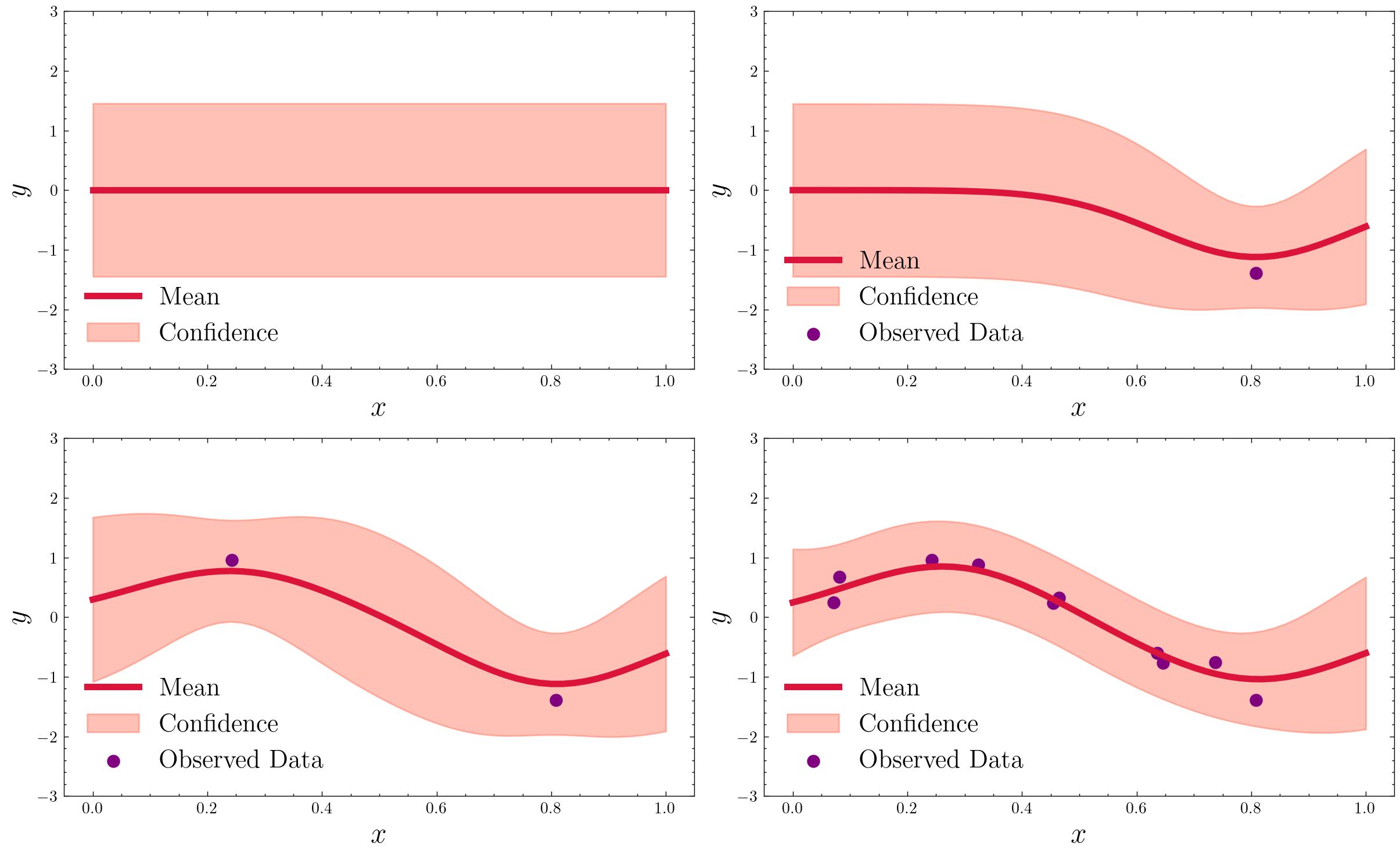}
    \caption{Plots of the posterior mean function and posterior 95\% credible interval when updating the prior with 0, (top-left), 1 (top-right), 2 (bottom-left) and 10 (bottom-right) observations.}
    \label{fig:gp-posterior}
    \vspace*{-2em}
\end{figure}

Furthermore, we are also able to draw samples from the multivariate normal distribution $\cN(\bar\bm, \bar\bK)$ by taking
\begin{equation}
    \bar \bff = \bar\bm + \bar\bK^{1/2}\bu, \enspace \bu\sim\cN(\mathbf{0}, \bI_p).
\end{equation}
Such a sample $\bar \bff\in\RR^p$ corresponds to a sample path from the posterior GP $\sfff(x)|\by$ evaluated on the grid points. Figure~\ref{fig:gp-posterior-samples} shows the plots of 10 sample paths from this multivariate normal distribution. Unlike for the prior, the sample paths are now concentrated around the observations and therefore provide reasonable candidates for a predictive model $x\to y$. In other words, the posterior GP induces a probability distribution over functions that provide a sound fit to observations, and are reasonable predictive models.
\begin{figure}[h]
    \centering
    \includegraphics[width=0.8\linewidth]{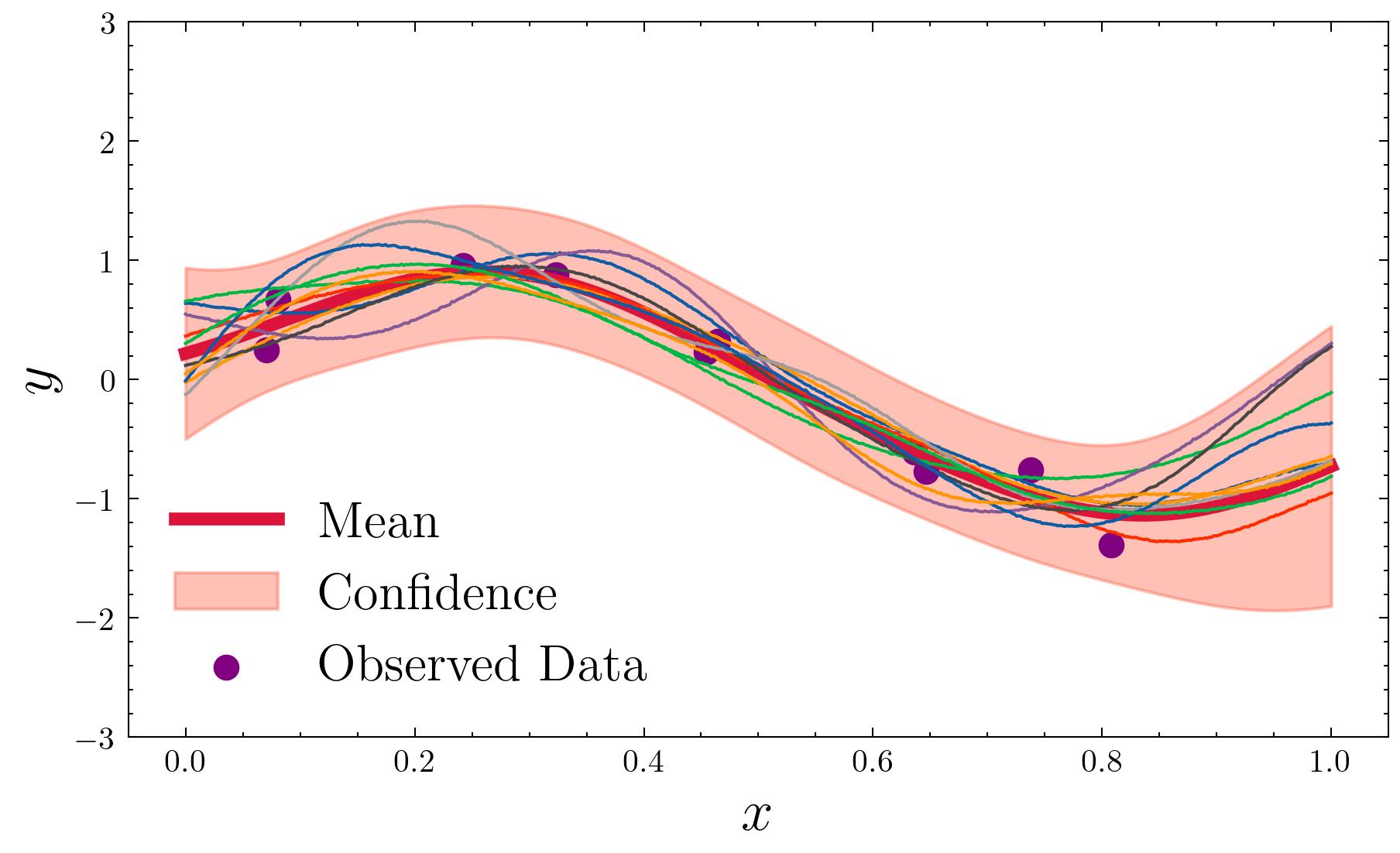}
    \caption{Plots of the posterior mean function and posterior 95\% credible interval + plots of 10 sample paths from $\bar\bff\sim\cN(\bar\bm, \bar\bK)$.}
    \label{fig:gp-posterior-samples}
\end{figure}

\subsection{Matérn covariance}\label{appendix:matern-covariance}

The Matérn covariances are a class of stationary covariance functions widely used in spatial statistics. The Matérn-$\nu$ covariance between two points $x, x'\in\RR$ is given by
\begin{equation}
    C_\nu(x, x') = \sigma^2 \frac{2^{1-\nu}}{\Gamma(\nu)}\left(\sqrt{2\nu}\frac{|x-x'|}{\ell}\right)^{\nu}K_\nu\left(\sqrt{2\nu}\frac{|x-x'|}{\ell}\right),
\end{equation}
where $\Gamma$ is the gamma function, $K_\nu$ is the modified Bessel function, $\sigma^2$ is a variance parameter and $\ell > 0$ is a lengthscale hyperparameter.

The covariance function expression considerably simplifies for $\nu = p + 1/2$ where $p\in\NN$. For example, for $\nu=1/2$ ($p=0$), $\nu=3/2$ ($p=1$) and $\nu = 5/2$ ($p=2$) we have
\begin{align}
    C_{1/2}(x, x') & = \sigma^2\exp\left(-\frac{|x-x'|}{\ell}\right) \\
    C_{3/2}(x, x') & = \sigma^2\left(1 + \sqrt{3}\frac{|x-x'|}{\ell}\right)\exp\left(-\sqrt{3}\frac{|x-x'|}{\ell}\right) \\
    C_{5/2}(x, x') & =  \sigma^2\left(1 + \sqrt{5}\frac{|x-x'|}{\ell} + \frac{5|x - x'|^2}{3\ell^2}\right)\exp\left(-\sqrt{5}\frac{|x-x'|}{\ell}\right)
\end{align}

When using the Matérn covariance as the covariance function of a GP, larger values of $\nu$ bestow greater smoothness on the GP sample paths. Namely, when $\nu = 1/2$ sample paths are continuous, when $\nu = 3/2$ sample paths are differentiable and when $\nu = 5/2$ sample paths are twice differentiable. More generally, for a Matérn-$p+1/2$ covariance function, sample paths from a GP are $p$ times continuously differentiable (with the convention that $0$ times means simply continuous). Figure~\ref{fig:matern-sample-paths} shows sample paths drawn from GPs with Matérn kernels for different values of $\nu$.

\begin{figure}
    \centering
    \includegraphics[width=\linewidth]{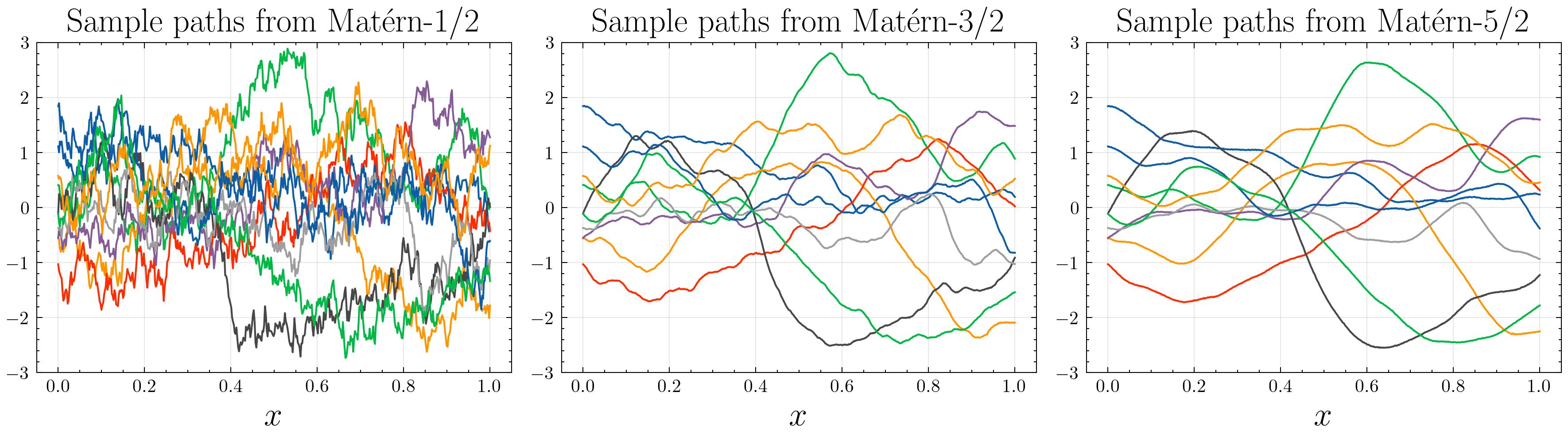}
    \caption{Example of sample paths drawn from a GP with kernel set as a Matérn-1/2, Matérn-3/2 and Matérn-5/2.}
    \label{fig:matern-sample-paths}
\end{figure}

In the limit where $\nu \to\infty$, the Matérn covariance converges to the squared exponential covariance given by
\begin{equation}
    C_\infty(x, x') = \sigma^2 \exp\left(-\frac{|x - x'|^2}{2\ell^2}\right).
\end{equation}
The sample paths of a GP with a squared exponential covariance function are infinitely differentiable.

When $x, x'\in\RR^d$, the distance $|x - x'|$ can be substituted by the norm $\|x - x'\| = \sqrt{\sum_{i=1}^d (x_i - x_i')^2}$. The covariance is called an anisotropic --- or automatic relevance determination --- kernel when each dimension has its own independent lengthscale parameter $\ell_i > 0$. For example, the Matérn-$1/2$ and Matérn-$3/2$ anisotropic kernels write
\begin{align}
    C_{1/2}(x, x') & = \sigma^2\exp\left(-\sqrt{\sum_{i=1}^d \frac{(x_i - x_i')^2}{\ell_i^2}}\right) \\
    C_{3/2}(x, x') & = \sigma^2\left(1 + \sqrt{3\sum_{i=1}^d \frac{(x_i - x_i')^2}{\ell_i^2}}\right)\exp\left(-\sqrt{3\sum_{i=1}^d \frac{(x_i - x_i')^2}{\ell_i^2}}\right).
\end{align}

\newpage
\section{Supporting materials for FaIRGP derivation}\label{eq:fairgp-derivation}

\subsection{Useful results}

\begin{lemma}\label{lemma:mintrick}
    Let $a, b > 0$ and $u, v \geq 0$. We have
    \begin{equation}
        (a + b) \min(u, v)  - (au + bv) = \begin{cases}
            -b|u - v| & \text{if } u \leq v \\
            -a|u - v| & \text{if } u \geq v.
        \end{cases}
    \end{equation}
\end{lemma}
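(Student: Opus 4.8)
The plan is to prove the identity by a direct case split on the sign of $u - v$, exactly matching the two branches of the right-hand side. The statement is purely algebraic, so no clever idea is needed — the only thing to be careful about is resolving the absolute value $|u-v|$ consistently in each case.

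First I would treat the case $u \le v$. Here $\min(u,v) = u$, so the left-hand side expands as
\begin{equation}
(a+b)\min(u,v) - (au + bv) = (a+b)u - au - bv = bu - bv = b(u - v).
\end{equation}
Since $u \le v$ we have $|u-v| = v - u$, hence $b(u-v) = -b(v-u) = -b|u-v|$, which is the first branch.

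Next I would treat the case $u \ge v$. Now $\min(u,v) = v$, so
\begin{equation}
(a+b)\min(u,v) - (au + bv) = (a+b)v - au - bv = av - au = a(v - u).
\end{equation}
Since $u \ge v$ we have $|u-v| = u - v$, so $a(v-u) = -a(u-v) = -a|u-v|$, which is the second branch. The two cases overlap only at $u = v$, where both branches give $0$, so the formula is consistent. Note also that the positivity of $a,b$ and nonnegativity of $u,v$ are not actually needed for the identity itself — they are presumably stated because they hold in the intended application (with $a = e^{-(t-s)/d_i}$-type weights), so I would simply carry them along without invoking them.

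There is essentially no obstacle here: the result is a one-line computation in each branch, and the only place a slip could occur is a sign error when rewriting $u - v$ or $v - u$ as $\pm|u-v|$. I would therefore present both cases fully rather than appealing to symmetry, since swapping $(a,u) \leftrightarrow (b,v)$ does exchange the two branches but inverting the inequality direction makes the symmetry argument more error-prone than just writing out the four short lines above.
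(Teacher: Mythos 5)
Your proof is correct and follows essentially the same route as the paper's: a case split on $u \le v$ versus $u \ge v$, expanding $\min(u,v)$ and simplifying to $b(u-v)$ or $a(v-u)$ respectively, then rewriting via the absolute value. The extra remarks (consistency at $u=v$, the unused positivity hypotheses) are fine but not needed.
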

\begin{proof}
If $u \leq v$,
\begin{align*}
    (a + b) \min(u, v)  - (au + bv) & = (a + b) u - au - bv \\
    & = bu - bv \\
    & = -b |u - v|.
\end{align*}

If $u \geq v$,
\begin{align*}
    (a + b) \min(u, v)  - (au + bv) & = (a + b) v - au - bv \\
    & = av - au \\
    & = -a |u - v|.
\end{align*}
\end{proof}

\subsection{Derivation of FaIRGP}\label{appendix:derivation-of-fairgp}

Consider a $k$-box EBM specified by the stochastic temperature response model
\begin{equation}
    \d\sfbX(t) = \bA\sfbX(t)\d t + \bb \sfF(t)\d t,
\end{equation}
where $\sfF(t) \sim \GP(F, K)$, a forcing feedback vector $\bb = \begin{bmatrix} 1/C_1 & 0 & \ldots & 0 \end{bmatrix}^\top$, and a forcing feedback matrix given by
\begin{equation*}
\resizebox{\linewidth}{!}{$
    \textbf{A} =  \begin{bmatrix}
\nicefrac{-(\kappa_1 + \kappa_2)}{C_1} & \nicefrac{\kappa_2}{C_1} & 0 & \dots & 0 & 0  & 0\\ 
\nicefrac{\kappa_2}{C_2} & \nicefrac{-(\kappa_2+\kappa_3)}{C_2} & \nicefrac{\kappa_3}{C_2} & \dots & 0 & 0 & 0\\
0 & \nicefrac{\kappa_3}{C_3} & \nicefrac{-(\kappa_3+\kappa_4)}{C_3} & \dots & 0 & 0 & 0\\
\vdots & \vdots & \vdots & \ddots & \vdots & \vdots & \vdots \\
0 & 0 & 0 & \dots & \nicefrac{-(\kappa_{k-2} + \kappa_{k-1})}{C_{k-2}} & \nicefrac{\kappa_{k-1}}{C_{k-2}} & 0 \\
0 & 0 & 0 & \dots &  \nicefrac{\kappa_{k-1}}{C_{k-1}} & \nicefrac{-(\kappa_{k-1} + \epsilon\kappa_k)}{C_{k-1}} & \nicefrac{\epsilon\kappa_k}{C_{k-1}} \\
0 & 0 & 0 & \dots & 0 &  \nicefrac{\kappa_k}{C_k} & \nicefrac{-\kappa_k}{C_k}.
\end{bmatrix}
$
}
\end{equation*}

Denote $\bA = \bPhi \bD\bPhi^{-1}$ the diagonalisation of the feedback matrix, where $\bD$ is a diagonal matrix with diagonal elements
\begin{equation}
    \bD = \operatorname{Diag}(D_1, \ldots, D_k).
\end{equation}

Then we have
\begin{align}
    & \d\sfbX(t)  = \bA\sfbX(t)\d t + \bb \sfF(t)\d t \\
    \Rightarrow & \bPhi^{-1}\d \sfbX(t) = \bD\bPhi^{-1}\sfbX(t)\d t + \bPhi^{-1}\bb \sfF(t)\d t\\
    \Rightarrow & \d \left[\bPhi^{-1}\sfbX(t)\right] = \bD \left[\bPhi^{-1}\sfbX(t)\right]\d t + \left[\bPhi^{-1}\bb \right]\sfF(t) \d t.
\end{align}

Therefore, by using the notations from \cite{millar2017modified}
\begin{align}
    D_i & = -1/d_i \\
    \bPhi^{-1}\sfbX(t) & = \begin{bmatrix} \sfS_1(t) & \ldots & \sfS_k(t)\end{bmatrix}^\top \\
    \bPhi^{-1}\bb & = \begin{bmatrix} q_1/d_1 & \ldots & q_k/d_k\end{bmatrix}^\top,
\end{align}
the diagonalised system can be written as an impulse response system, where for $i \in\{1, \ldots, k\}$ we have
\begin{equation}
    \boxed{
    \d \sfS_i(t)  = -\frac{1}{d_i}\sfS_i(t)\d t + \frac{q_i}{d_i}\sfF(t)\d t = \frac{1}{d_i}(q_i \sfF(t) - \sfS_i(t))\d t.
    }
\end{equation}

Consider now the thermal response $\sfS_i(t)$ of the $i$\textsuperscript{th} stochastic differential equation (SDE). This thermal response is given by
\begin{equation}
    \sfS_i(t) = \frac{q_i}{d_i}\int_0^t \sfF(s)e^{-(t-s)/d_i}\d s,
\end{equation}
where we recall that $\sfF(t) \sim \GP(F, K)$. Because GPs are closed under linear transformations, $\sfS_i(t)$ must also be GPs for all $i\in\{1, \ldots, k\}$. Their mean functions can be computed following
\begin{align}
    m_i(t) := \EE[\sfS_i(t)] & = \EE\left[\frac{q_i}{d_i}\int_0^t \sfF(s)e^{-(t-s)/d_i}\d s\right] \\
    & = \frac{q_i}{d_i}\int_0^t \EE[\sfF(s)]e^{-(t-s)/d_i}\d s \\
    & = \frac{q_i}{d_i}\int_0^t F(s)e^{-(t-s)/d_i}\d s.
\end{align}

Their cross-covariance functions can be computed following
\begin{align}
    k_{ij}(t, t') & := \operatorname{Cov}(\sfS_i(t), \sfS_j(t')) \\
    & = \operatorname{Cov}\left(\frac{q_i}{d_i}\int_0^t \sfF(s)e^{-(t-s)/d_i}\d s, \frac{q_j}{d_j}\int_0^{t'} \sfF(s')e^{-(t'-s')/d_j}\d s'\right) \\
    & = \frac{q_i q_j}{d_i d_j}\int_0^t\int_0^{t'} \operatorname{Cov}(\sfF(s), \sfF(s')) e^{-(t-s)/d_i}e^{-(t'-s')/d_j}\d s \d s' \\
    & = \frac{q_i q_j}{d_i d_j}\int_0^t\int_0^{t'} K(s, s') e^{-(t-s)/d_i}e^{-(t'-s')/d_j}\d s \d s'. \\
\end{align}

So we have $\sfS_i(t)\sim\GP(m_i, k_{ii})$ for any $i\in\{1, \ldots, k\}$. Finally, if we define $\sfT(t) = \sum_{i=1}^k \sfS_i(t)$, since the sum of Gaussians is still a Gaussian, we know that $\sfT(t)$ must also be a GP. And we can compute its mean function following
\begin{align}
    m_\sfT(t) := \EE[\sfT(t)]  = \EE\left[\sum_{i=1}^k\sfS_i(t)\right] = \sum_{i=1}^k\EE[\sfS_i(t)] = \sum_{i=1}^k m_i(t),
\end{align}
and its covariance function following
\begin{align}
    k_\sfT(t, t') := \operatorname{Cov}(\sfT(t), \sfT(t')) &  = \operatorname{Cov}\left(\sum_{i=1}^k \sfS_i(t), \sum_{j=1}^k \sfS_j(t')\right) \\
    & = \sum_{i=1}^k\sum_{j=1}^k \operatorname{Cov}(\sfS_i(t), \sfS_j(t')) \\
    & = \sum_{i=1}^k\sum_{j = 1}^k k_{ij}(t,t').
\end{align}

We conclude that
\begin{equation}
\boxed{
    \left\{
    \begin{aligned}
        \begin{split}
            & \sfT(t)  \sim \GP(m_\sfT, k_\sfT) \\
            & m_\sfT(t) = \sum_{i=1}^k m_i(t) \\
            & k_\sfT(t, t')  = \sum_{i=1}^k\sum_{j = 1}^k k_{ij}(t,t').
        \end{split}
    \end{aligned}
    \right.
}
\end{equation}

\subsection{Accounting for climate internal variability in FaIRGP}\label{appendix:internal-variability-proof}

Consider again the same problem, but introducing an additional white noise term in the temperature response SDE
\begin{equation}\label{eq:A25}
    \d\sfbX(t) = \bA\sfbX(t)\d t + \bb \sfF(t)\d t + \sigma \bb \d\sfB(t),
\end{equation}
where $\sfB(t)$ denotes the standard Brownian motion.

Following the same derivation steps we get
\begin{align}
    & \d\sfbX(t)  = \bA\sfbX(t)\d t + \bb \sfF(t)\d t + \sigma\bb\d\sfB(t) \\
    \Rightarrow & \bPhi^{-1}\d \sfbX(t) = \bD\bPhi^{-1}\sfbX(t)\d t + \bPhi^{-1}\bb F(t)\d t + \bPhi^{-1}\bc \d\sfB(t) \\
    \Rightarrow & \d \left[\bPhi^{-1}\sfbX(t)\right] = \bD \left[\bPhi^{-1}\sfbX(t)\right]\d t + \left[\bPhi^{-1}\bb \right]\sfF(t) \d t + \sigma \left[\bPhi^{-1}\bb \right] \d \sfB(t).
\end{align}
which gives an impulse response form
\begin{equation}
\boxed{
    \d \sfS_i(t)  = \frac{1}{d_i}(q_i \sfF(t) - \sfS_i(t))\d t + \sigma\frac{q_i}{d_i}\d \sfB(t),\enspace \forall i\in\{1, \ldots, k\}.}
\end{equation}

The solution to the $i$\textsuperscript{th} SDE is now given by
\begin{align}
    \sfS_i(t) & = \underbrace{\frac{q_i}{d_i}\int_0^t \sfF(s)e^{-(t-s)/d_i}\d s}_{\sfS_i^\circ(t)} + \sigma \underbrace{\frac{q_i}{d_i}\int_0^t e^{-(t-s)/d_i}\d\sfB(s)}_{\upxi_i(t)} \\
    & = \sfS_i^\circ(t) + \sigma\upxi_i(t).
\end{align}

$\sfS_i^\circ(t)$ corresponds to the GP we have obtained in the previous derivation without white noise, i.e.\ $\sfS_i^\circ(t)\sim\GP(m_i, k_{ii})$. $\upxi_i(t)$ is also a GP, with mean zero, and cross-covariance function given by
\begin{align*}
    \operatorname{Cov}(\upxi_i(t), \upxi_j(t')) & = \EE\left[\left(\upxi_i(t) - \EE[\upxi_i(t)]\right)\left(\upxi_j(t') - \EE[\upxi_j(t')]\right)\right] \\
    & = \EE[\upxi_i(t)\upxi_j(t')] \\
    & = \EE\left[\left(\frac{q_i}{d_i}\int_0^t e^{-(t-s)/d_i}\d\sfB(s)\right)\left(\frac{q_j}{d_j}\int_0^{t'} e^{-(t'-s')/d_j}\d\sfB(s')\right)\right] \\
    & = \frac{q_iq_j}{d_id_j}e^{-t/d_i - t'/d_j} \EE\left[\int_0^t e^{s/d_i}\d\sfB(s)\int_0^{t'}e^{s'/d_j}\d\sfB(s')\right] \\
    & = \frac{q_iq_j}{d_id_j} e^{-t/d_i - t'/d_j} \int_0^{\min(t, t')}e^{(1/d_i + 1/d_j)s} \d s \\
    & = \frac{q_iq_j}{d_id_j} e^{-t/d_i - t'/d_j} \frac{d_id_j}{d_i + d_j}\left(e^{\frac{d_i + d_j}{d_id_j}\min(t, t')} - 1\right) \\
    & = \frac{q_iq_j}{d_i + d_j} e^{-(d_j t + d_i t') / d_id_j} \left(e^{\frac{d_i + d_j}{d_id_j}\min(t, t')} - 1\right) \\
    & = \begin{cases}
        \frac{q_iq_j}{d_i + d_j} \left(e^{-|t-t'| / d_i} - e^{-(d_j t + d_i t') / d_id_j}\right) & \text{if } t \leq t' \\
        \frac{q_iq_j}{d_i + d_j} \left(e^{-|t-t'| / d_j} - e^{-(d_j t + d_i t') / d_id_j}\right) & \text{if } t \geq t'
    \end{cases} & (Lemma~\ref{lemma:mintrick})\\
    & \sim \begin{cases}
        \frac{q_iq_j}{d_i + d_j} e^{-|t-t'| / d_i} & \text{if } t \leq t' \\
        \frac{q_iq_j}{d_i + d_j}e^{-|t-t'| / d_j} & \text{if } t \geq t'
    \end{cases} \qquad \text{when } t\gg d_i \text{ or } t'\gg d_j.
\end{align*}

Therefore, if we define
\begin{equation}
    \gamma_{ij}(t, t') = \frac{q_iq_j}{d_i + d_j}\exp\left(\frac{-|t-t'|}{d_i \mathbbm{1}_{\{t \leq t'\}} + d_j \mathbbm{1}_{\{t > t'\}}}\right),
\end{equation}
which simplifies when $i = j$ to
\begin{equation}
    \gamma_i(t, t') = \gamma_{ii}(t, t') = \frac{q_i^2}{2d_i}\exp\left(-\frac{|t-t'|}{d_i}\right),
\end{equation}
we obtain that in the long time regime, we can approximate $\upxi_i(t)\sim\GP(0, \gamma_i)$. And because $\sfS_i^\circ(t)$ and $\upxi_i(t)$ are independent processes, we obtain that
\begin{equation}
    \boxed{\sfS_i(t) \sim \GP(m_i, k_{ii} + \sigma^2 \gamma_i).} 
\end{equation}

Finally, if we take again $\sfT(t) = \sum_{i=1}^k \sfS_i(t)$, then $\sfT(t)$ must be a GP. Its mean function is given by
\begin{align}
    m_\sfT(t) := \EE[\sfT(t)]  = \EE\left[\sum_{i=1}^k\sfS_i(t)\right] = \sum_{i=1}^k\EE[\sfS_i(t)] = \sum_{i=1}^k m_i(t),
\end{align}
and its covariance function is given by
\begin{align}
    k_\sfT(t, t') & := \operatorname{Cov}(\sfT(t), \sfT(t')) \\
    &  = \operatorname{Cov}\left(\sum_{i=1}^k \sfS_i(t), \sum_{j=1}^k \sfS_j(t')\right) \\
    & = \sum_{i, j = 1}^k \operatorname{Cov}(\sfS_i(t), \sfS_j(t')) \\
    & = \sum_{i, j = 1}^k \operatorname{Cov}\left(\sfS_i^\circ(t) + \sigma\eta_i(t), \sfS_j^\circ(t') + \sigma\eta_j(t')\right) \\
    & = \sum_{i, j = 1}^k \operatorname{Cov}(\sfS_i^\circ(t), \sfS_j^\circ(t')) + \sigma^2\operatorname{Cov}(\eta_i(t), \eta_j(t')) & (\sfS_i^\circ, \sfS_j^\circ \indep \upxi_i, \upxi_j) \\
    & = \sum_{i, j = 1}^k k_{ij}(t, t') + \sigma^2 \sum_{i, j = 1}^k \gamma_{ij}(t, t').
\end{align}

However, if $t \leq t'$ we have
\begin{align}
    \sum_{i,j=1}^k \gamma_{ij}(t, t') & = \sum_{i,j=1}^k \frac{q_iq_j}{d_i + d_j} \exp\left(-\frac{|t-t'|}{d_i}\right) \\
    & = \sum_{i=1}^k \left(\sum_{j=1}^k \frac{q_iq_j}{d_i + d_j}\right)\exp\left(-\frac{|t-t'|}{d_i}\right) \\
    & = \sum_{i=1}^k \underbrace{\frac{2d_i}{q_i^2}\left(\sum_{j=1}^k \frac{q_iq_j}{d_i + d_j}\right)}_{\nu_i} \frac{q_i^2}{2d_i}\exp\left(-\frac{|t-t'|}{d_i}\right) \\
    & = \sum_{i=1}^k \nu_i\gamma_i(t, t').
\end{align}

When $t \geq t'$ we can refactor terms similarly into a sum over $j$. By symmetry of the indices we conclude that
\begin{equation}
\boxed{
    \left\{
    \begin{aligned}
        \begin{split}
            & \sfT(t)  \sim \GP(m_\sfT, k_\sfT) \\
            & m_\sfT(t) = \sum_{i=1}^k m_i(t) \\
            & k_\sfT(t, t')  = \sum_{i,j=1}^k k_{ij}(t, t') + \sigma^2 \sum_{i=1}^k \nu_i \gamma_i(t, t') \\
            & \nu_i = \sum_{j=1}^k \frac{2d_iq_j}{q_i(d_i + d_j)} \\
        \end{split}
    \end{aligned}
    \right.
}
\end{equation}

\subsection{Accounting for unforced fluctuations in top-of-atmosphere radiative flux}

In addition to the variability arising from unforced transfers of energy modelled in the last section, \citeA{cummins2020optimal} propose to also account for unforced fluctuations in the top-of-atmosphere radiative flux. Let us denote by $\tilde \sfF(t)$ the radiative forcing model proposed by \citeA{cummins2020optimal}, to distinguish it from the FaIRGP prior $\sfF(t)$. They propose to model $\tilde \sfF(t)$ as an Ornstein-Uhlenbeck process,
\begin{equation}
    \d \tilde \sfF(t) = \theta(F(t) - \tilde \sfF(t))\d t + \varsigma \d \tilde\sfB(t)
\end{equation}
where $\theta > 0$ is the autocorrelation parameter, $\varsigma > 0$ is the white noise variance, $\tilde\sfB(t)$ is another standard Brownian motion (independent from $\sfB(t)$ in the temperature response SDE), and $F(t)$ is a deterministic forcing component.

Assuming that $F(0) = 0$, we know that the solution of this SDE is given by
\begin{align}
    \tilde \sfF(t) & = \underbrace{ \int_0^t \theta F(s) e^{-\theta (t - s)} \d s}_{\tilde F(t)} + \varsigma\underbrace{\int_0^t e^{-\theta (s - t)}\d \tilde\sfB(s)}_{\upeta(t)} \\
    & = \tilde F(t) + \varsigma\upeta(t).
\end{align}

$\tilde F(t)$ is a deterministic quantity, whereas $\upeta(t)$ is as before a GP with mean zero, and a covariance function $\gamma_\theta$ which can be approximated in the long time regime by
\begin{equation}
    \gamma_\theta(t, t') = \frac{1}{2\theta}\exp\left(-\theta |t - t'|\right).
\end{equation}
Therefore, we get that in the long term regime
\begin{equation}
    \tilde\sfF(t) \sim \GP\left(\tilde F, \varsigma\gamma_\theta\right).
\end{equation}
This can be combined with the FaIRGP prior by substituting the deterministic forcing $F(t)$ by the FaIRGP forcing prior $\sfF(t)\sim \GP(F, K)$. Namely, if we define
\begin{equation}
    \tilde K(t, t') = \int_0^t \int_0^{t'} \theta^2 K(s, s') e^{-\theta(t - s)}e^{-\theta(t'-s')}\d s\d s',
\end{equation}
then we obtain a new FaIRGP prior over the forcing
\begin{equation}
    \tilde\sfF(t) \sim \GP\left(\tilde F, \tilde K + \varsigma\gamma_\theta\right),
\end{equation}
which accounts for the unforced fluctuations in the top-of-atmosphere radiative flux like in \cite{cummins2020optimal}.

\subsection{Analytical expression for FaIRGP probability distribution}\label{appendix:analytical-expressions}

A key benefit of FaIRGP is its complete mathematical tractability. This tractability in particular includes access to the analytical expression of: (\textit{i}) the marginal probability distribution over the training data, which can be used as an objective to tune the model hyperparameters and (\textit{ii}) of the probability distribution predicted over emulated temperatures, which can be useful for downstream applications of emulation.

In what follows, we assume access to a training set $\cD = \{\bt, \bE, \bT\}$ of size $n$, where we follow the notational conventions from Section~\ref{subsection:posterior-distribution}.

\subsubsection{Analytical expression of the marginal log-likelihood}

Using the prior mean and covariance of the FaIRGP prior over $\sfT(t)$, we define a prior mean vector and a prior covariance matrix over $\bt, \bE$ given by
\begin{align}
    \bm & = m_\sfT(\bt)  = \begin{bmatrix} m_\sfT(t_1) \\ \vdots \\ m_\sfT(t_n) \end{bmatrix}\\
    \bK & = k_\sfT(\bE, \bE) = \begin{bmatrix} k_\sfT(E_i, E_j) \end{bmatrix}_{1 \leq i, j \leq n}.
\end{align}

Further, consider the temperatures internal variability covariance matrix defined by
\begin{equation}
    \bGamma = \gamma_\sfT(\bt, \bt) = \begin{bmatrix} \gamma_\sfT(t_i, t_j) \end{bmatrix}_{1 \leq i, j \leq n}.
\end{equation}

Then the prior distribution over temperatures is exactly given by the multivariate normal distribution $\cN(\bm, \bK + \sigma^2 \bGamma)$. Let us denote $\bK_{\sigma^2} = \bK + \sigma^2 \bGamma$ for conciseness. Then, we can exactly evaluate the marginal probability of $\bT\in\RR^n$ under the prior distribution following
\begin{equation}
    p(\bT | \bE, \bt)  = \frac{1}{\sqrt{(2\pi)^n \det(\bK_{\sigma^2})}}\exp\left(-\frac{1}{2} (\bT - \bm)^\top \bK_{\sigma^2}^{-1}(\bT - \bm)\right),
\end{equation}
which can be used as a maximisation objective to tune model hyperparameters such that the observed temperatures $\bT$ have the greatest possible likelihood under the prior. In practice, we prefer working with the marginal log-likelihood for computational stability. It is given in closed-form by
\begin{equation}
    \boxed{\log p(\bT|\bE, \bt) = -\frac{1}{2}\left\{n \log(2\pi) + \log \det(\bK_{\sigma^2}) + \left(\bT - \bm\right)^\top \bK_{\sigma^2}^{-1}\left(\bT - \bm\right)\right\}}.
\end{equation}

Let $\theta$ denote a set of model parameters we wish to tune against simulated data $\bt, \bE, \bT$. For example, we can choose the internal variability variance and GP kernel parameters $\theta = \{\sigma, \sigma_\sfF, \ell_1, \ldots, \ell_d\}$, FaIR temperature response parameters $\theta = \{d_1, \ldots, d_k, q_1, \ldots, q_k\}$, the radiative forcing model parameters $\theta = \{\alpha^\chi_{\log}, \alpha^\chi_\text{lin}, \alpha^\chi_\text{sqrt}\}$, or any combination of these options. Let $\theta_0$ be the initial values we arbitrarily choose for these parameters, and $\eta > 0$ is a typically small positive number called the \emph{learning rate}. Then the gradient descent algorithm outlined below proposes a principled procedure to tune $\theta$ such that it maximises the marginal log-likelihood $\log p(\bT|\bE, \bt)$.

\begin{algorithm}[h]
    \caption{Maximum likelihood calibration of $\theta$ with gradient descent}
\begin{algorithmic}[1]
    \STATE {\bfseries Input:} $\bt, \bE, \bT, \theta_0, \eta > 0, N\in\NN$
    \FOR{$t\in\{1, \ldots, N\}$}
        \STATE Compute $\log p(\bT|\bE, \bt) = \log \cN(\bT|\bm, \bK + \sigma^2\bGamma)$
        \STATE Take $\theta_t \leftarrow \theta_{t-1} + \eta \nabla_\theta\log p(\bT|\bE, \bt) $
    \ENDFOR
    \STATE {\bfseries Return:} $\theta_N$
\label{alg:mle}
\end{algorithmic}
\end{algorithm}

The learning rate $\eta > 0$ controls the size of the steps taken in the direction of the gradient. A higher learning rate results in larger steps, while a lower learning rate leads to smaller steps. If the learning rate is too high, the algorithm may overshoot the maximum and fail to converge. On the other hand, if the learning rate is too low, the algorithm may take a long time to converge.

In the presented gradient descent algorithm, the algorithm stops after the maximum number of iterations $N$ has been reached. More sophisticated stopping criteria can be devised, such as stopping if the absolute change in the marginal log-likelihood falls below a predefined threshold. 

Finally, it may be that the size $n$ of the dataset $\cD = \{\bt, \bE, \bT\}$ is so large that the evaluation of $\log p(\bT|\bE, \bt)$ is in practice computationally prohibitive. In this case, a popular variant of gradient descent called \emph{mini-batch stochastic gradient descent} can be employed. At each iteration, instead of computing the marginal log-likelihood over the entire dataset, we compute it over a randomly selected subset of the training data, called a mini-batch. This mini-batch typically has a size $m \ll n$. Because only a subset of the data is used for each update, the updates are noisier, introducing more variability in the optimisation process. It is however more computationally efficient, especially for large datasets, as it processes only a small subset of the data in each iteration.

\subsubsection{Analytical expression of the posterior distribution}

Suppose that we want to emulate the global temperature response for a different emission scenario where we have access to greenhouse gas and aerosol emission data $E_1^*, \ldots, E^*_m$ at times $t_1^* < \ldots < t^*_m$. We concatenate them into
\begin{equation}
    \bt^* = \begin{bmatrix}t_1^* \\ \vdots \\ t_m^* \end{bmatrix}\in\RR^m,\qquad \bE^* = \begin{bmatrix}E_1^* \\ \vdots \\ E_m^* \end{bmatrix}\in\RR^{m\times d},
\end{equation}
where $d \geq 0$ is the number of emission agents $\chi_1, \ldots, \chi_d$ we observe emissions from.

Using the posterior mean and covariance functions when $\sfT(t)$ is updated with the training set $\cD$, we define a posterior mean vector and posterior covariance matrix over $\bt^*, \bE^*$ given by
\begin{align}
    \bar \bm^* & = \bar m_\sfT(\bt^*) \\
    \bar \bK^* & = \bar k_\sfT(\bE^*, \bE^*).
\end{align}

Further, consider the emulated temperatures internal variability covariance matrix defined by
\begin{equation}
    \bGamma^* = \gamma_\sfT(\bt^*, \bt^*).
\end{equation}

Then, the posterior distribution over emulated temperatures is exactly given by the multivariate normal distribution $\cN(\bar\bm^*, \bar\bK^* + \sigma^2\bGamma^*)$. Let us denote $\bar\bK^*_{\sigma^2} = \bar\bK^* + \sigma^2\bGamma^*$ for conciseness. This means that for a given temperature vector $\bT^*\in\RR^m$, we can exactly evaluate the probability of $\bT^*$ under the predicted distribution following
\begin{equation}
\boxed{
    p(\bT^* | \cD, \bE^*, \bt^*)  = \frac{1}{\sqrt{(2\pi)^m \det(\bar\bK^*_{\sigma^2})}}\exp\left(-\frac{1}{2} (\bT^* - \bar \bm^*)^\top \left(\bar \bK_{\sigma^2}^*\right)^{-1}(\bT^* - \bar\bm^*)\right).
}
\end{equation}

The vector $\bT^*$ may be a retained test scenario, in which case computing $p(\bT^* | \cD, \bE^*, \bt^*)$ provides an evaluation of the emulated posterior $\cN(\bar\bm^*, \bar\bK^*_{\sigma^2})$. $\bT^*$ may also simply correspond to temperature for which we would like to assess the probability under emission scenario $\{\bt^*, \bE^*\}$. A natural application is in detection attribution studies. Indeed, we can take $\bT^*$ to be observed historical temperatures, and $\{\bt^*, \bE^*\}$ to be a counterfactual historical scenario without anthropogenic forcing. By evaluating $p(\bT^* | \cD, \bE^*, \bt^*)$ we would be able to assess the probability of historical temperature observations to occur in a scenario without anthropogenic forcing. 

The posterior distribution can also be used to draw a sample $\bV$ if needed following
\begin{equation}
    \bV =  \left(\bar\bK^*_{\sigma^2}\right)^{\nicefrac{1}{2}} \bU + \bar\bm^*, \quad \bU\sim\cN(0, \bI_m).
\end{equation}

Finally, whilst the above is formulated for global mean surface temperatures, we can also derive analytical forms for the posterior distribution over the radiative forcing and for spatially-resolved emulation.

\newpage
\section{Complementary experimental results}\label{appendix:experiments}

\subsection{FaIRGP posterior sample paths over SSPs}\label{appendix:sample-paths-quizz}

\begin{figure}[h]
    \centering
    \includegraphics[width=\linewidth]{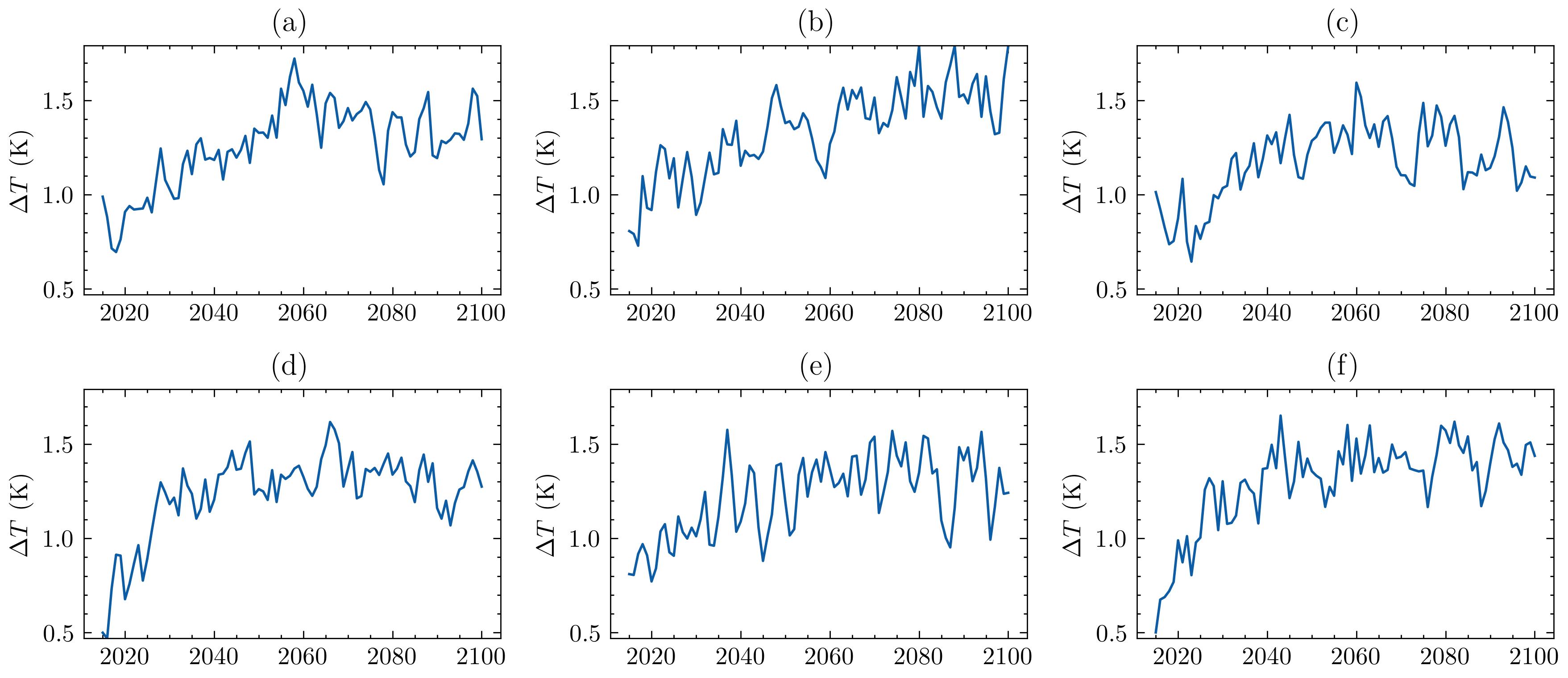}
    \caption{Example of 5 sample paths from the predicted posterior distribution over global annual mean surface temperature anomaly under \textit{SSP126} with FaIRGP + NorESM2-LM simulated global annual mean surface temperature anomaly. The posterior is conditioned on experiments $\cD_{\text{train}} = \{historical, SSP245, SSP370, SSP585\}$. The position of the NorESM2-LM simulated data is given at the end of the section.}
    \label{fig:ssp126-sample-paths}
\end{figure}
\begin{figure}[H]
    \centering
    \includegraphics[width=\linewidth]{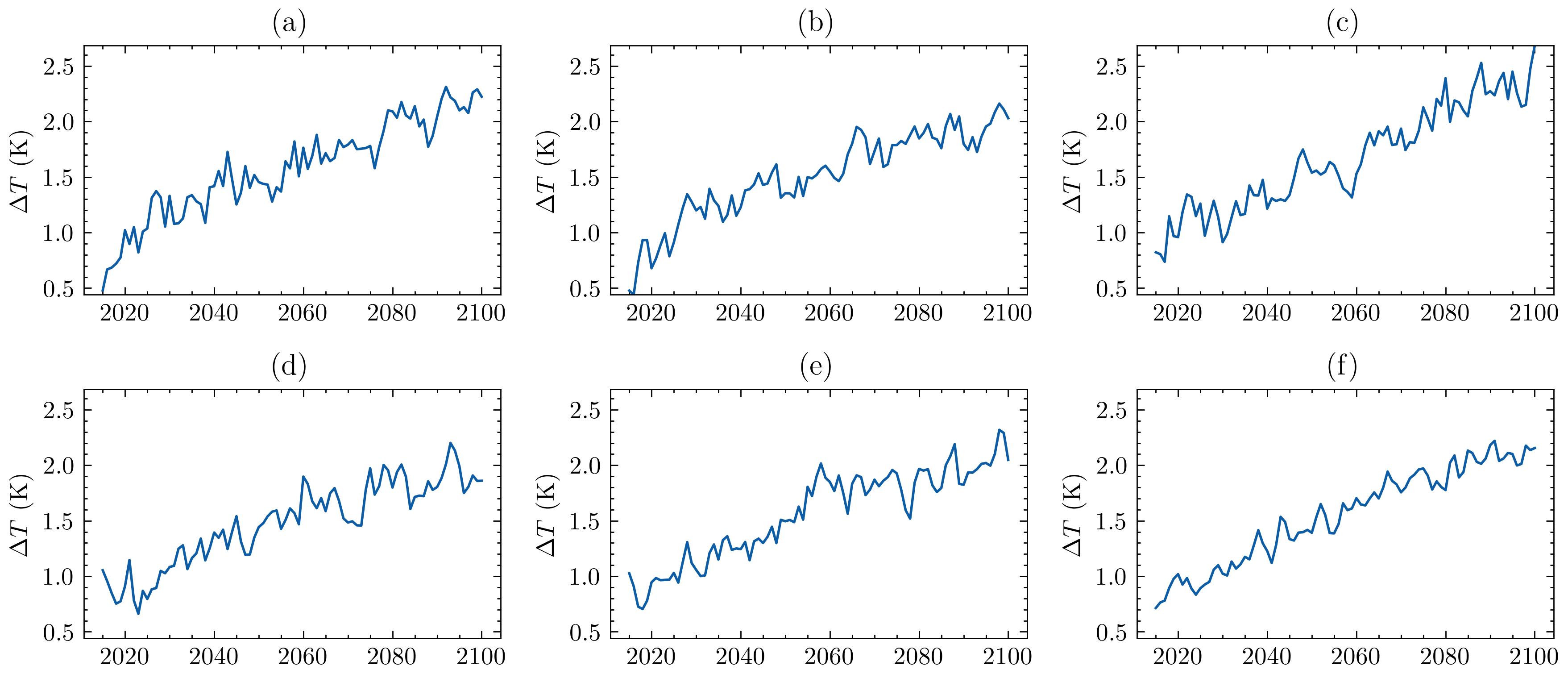}
    \caption{Example of 5 sample paths from the predicted posterior distribution over global annual mean surface temperature anomaly under \textit{SSP245} with FaIRGP + NorESM2-LM simulated global annual mean surface temperature anomaly. The posterior is conditioned on experiments $\cD_{\text{train}} = \{historical, SSP126, SSP370, SSP585\}$. The position of the NorESM2-LM simulated data is given at the end of the section.}
    \label{fig:ssp245-sample-paths}
\end{figure}
\begin{figure}[H]
    \centering
    \includegraphics[width=\linewidth]{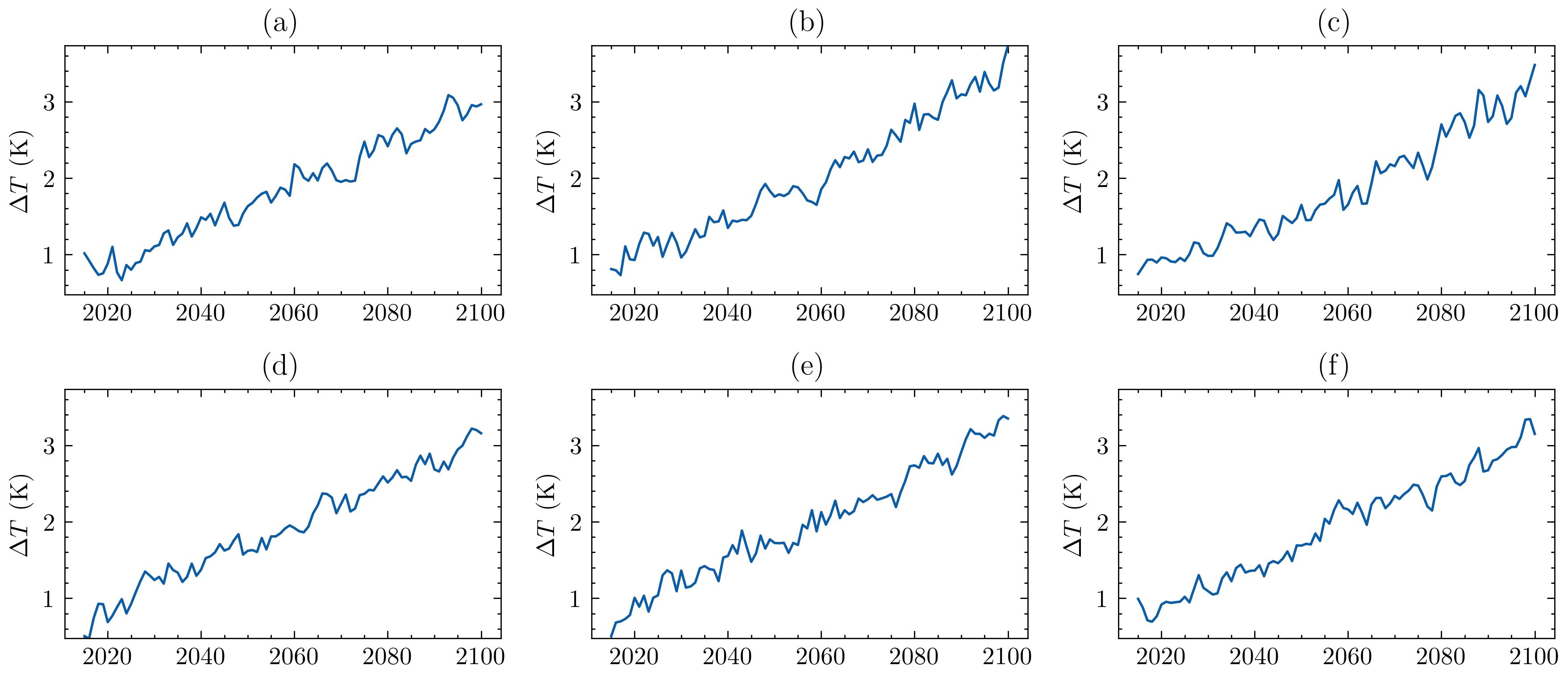}
    \caption{Example of 5 sample paths from the predicted posterior distribution over global annual mean surface temperature anomaly under \textit{SSP370} with FaIRGP + NorESM2-LM simulated global annual mean surface temperature anomaly. The posterior is conditioned on experiments $\cD_{\text{train}} = \{historical, SSP126, SSP245, SSP585\}$. The position of the NorESM2-LM simulated data is given at the end of the section.}
    \label{fig:ssp370-sample-paths}
\end{figure}
\begin{figure}[H]
    \centering
    \includegraphics[width=\linewidth]{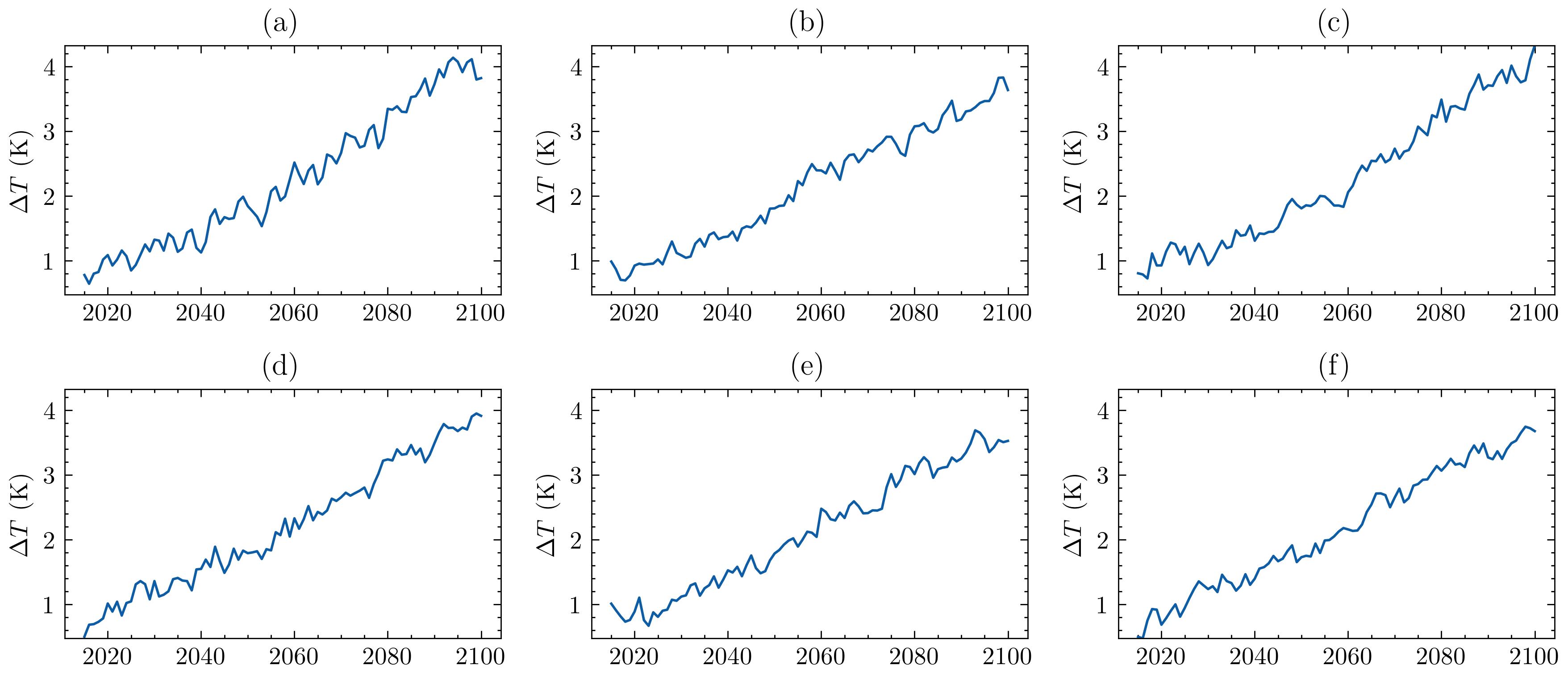}
    \caption{Example of 5 sample paths from the predicted posterior distribution over global annual mean surface temperature anomaly under \textit{SSP585} with FaIRGP + NorESM2-LM simulated global annual mean surface temperature anomaly. The posterior is conditioned on experiments $\cD_{\text{train}} = \{historical, SSP126, SSP245, SSP370\}$. The position of the NorESM2-LM simulated data is given at the end of the section.}
    \label{fig:ssp585-sample-paths}
\end{figure}

\noindent Positions of NorESM2-LM simulated data : (e) in Figure~\ref{fig:ssp126-sample-paths}, (f) in Figure~\ref{fig:ssp245-sample-paths}, (f) in Figure~\ref{fig:ssp370-sample-paths} and (a) in Figure~\ref{fig:ssp585-sample-paths}.

\newpage
\subsection{Comparison of plain GP baseline with ClimateBench GP emulator from \protect\citeA{watsonparris2021climatebench}}\label{appendix:comparison-with-climatebench}

\subsubsection{Modelling differences}

The plain GP baseline we use is analogous to the GP emulator from ClimateBench~\cite{watsonparris2021climatebench}, but differs in two aspects.

\textbf{First}, we use in our plain GP baseline an anisotropic covariance structure, denoted as $\rho(E, E') = C_{3/2}(E, E')$, where $C_{3/2}$ denotes the Matérn-3/2 covariance (see Appendix~\ref{appendix:matern-covariance}).

This kernel introduces a different lengthscale parameter $\ell_\chi$ for each atmospheric agent $\chi$, which is tuned through maximizing the marginal loglikelihood. It is explicitly given by
\begin{equation}
    \rho(E, E') = C_{3/2}(E, E') = \left(1 + \sqrt{3 \sum_\chi \frac{(E^\chi - E'^\chi)^2}{\ell_\chi^2}}\right)\exp\left(-\sqrt{3 \sum_\chi \frac{(E^\chi - E'^\chi)^2}{\ell_\chi^2}}\right),
\end{equation}
where $E^\chi$ denotes global emission level for agent $\chi \in \{\text{CH}_4, \text{SO}_2, \text{BC}\}$, while for $\chi = \text{CO}2$, it denotes global cumulative emission levels.

In contrast, \citeA{watsonparris2021climatebench} adopt an additive kernel structure given by
\begin{equation}
    \rho(E, E') = \sum_\chi \sigma_\chi C_{3/2}(E^\chi, E'^\chi) = \sum_\chi \sigma_\chi \left(1 + \sqrt{3}\frac{|E^\chi - E'^\chi|}{\ell_\chi}\right)\exp\left(-\sqrt{3}\frac{|E^\chi - E'^\chi|}{\ell_\chi}\right).
\end{equation}
This additive structure introduces additional complexity through the variance terms $\sigma_\chi$, which modulate the of each atmospheric agent to the covariance, and need to be tuned alongside the lengthscales $\ell_\chi$. From a functional perspective, choosing an additive kernel is equivalent to representing the temperature response as the sum of independent GPs, where each GP models the response to a single atmospheric agent. Further details on the covariance function be found in \cite[Appendix A2]{watsonparris2021climatebench}.

\textbf{Second}, in our plain GP baseline $E^{\text{SO}_2}$ and $E^{\text{BC}}$ correspond to global emission levels, whereas \citeA{watsonparris2021climatebench} used the 5 principal components of spatial emission maps for SO\textsubscript{2} and BC. We anticipate that spatially-resolved inputs for aerosol emissions should improve the model's ability to predict spatially-resolved surface temperature features.

\subsubsection{Predictive performance comparison}

To compare the predictive performance against our plain GP and FaIRGP, we evaluate the ClimateBench GP for the emulation of global and spatial mean surface temperature anomaly and report scores in Table~\ref{table:comparison-with-climatebench}. All emulators are trained on the same training data: \textit{historical, SSP126, SSP370, SSP585}. Predictive performance is evaluated for the emulation of \textit{SSP245} since this is the test data used in \cite{watsonparris2021climatebench}.

\begin{table}[h]
    \centering
    \caption{Scores of ClimateBench GP, our Plain GP baseline and FaIRGP for the task of emulating global and spatial mean surface temperatures on \textit{SSP245}; scores are computed over 2015-2100 period for global emulation and over 2080-2100 period for spatial emulation; the best emulator scores are highlighted in bold; $\uparrow\!/\!\downarrow$ indicates higher/lower is better.}
        \resizebox{\linewidth}{!}{
        \begin{tabular}{llcccccc}
        \toprule
         {} & Emulator &  RMSE\small{$\;\downarrow$}  &    MAE\small{$\;\downarrow$}  &   Bias  &     LL\small{$\;\uparrow$}  & Calib95 &   CRPS\small{$\;\downarrow$}  \\
        \midrule
        \multirow{3}{*}{\textit{Global}}  & ClimateBench GP &  0.12 &  0.10 &  \textbf{0.01} & -0.45 & \textbf{1.0} &   0.15 \\
                  & Plain GP &  \textbf{0.09} &  \textbf{0.07} &  0.03 & 0.67 &    \textbf{1.0} &  \textbf{0.06} \\
                  & FaIRGP &  \textbf{0.09} &  0.08 &  \textbf{-0.01} & \textbf{0.71} &    \textbf{1.0} &  \textbf{0.06} \\ \thinrule
        \multirow{3}{*}{\textit{Spatial}} & ClimateBench GP &  0.43 &  0.32 &  -0.13 &      -0.69 &  0.98 &  0.25 \\
                  & Plain GP &  0.56 &  0.40 &  -0.19 & -0.70 & \textbf{0.97} &  0.28 \\
                  & FaIRGP &  \textbf{0.36} &  \textbf{0.26} &  \textbf{-0.05} & \textbf{-0.34} &  0.99 & \textbf{0.20} \\ 
        \bottomrule
        \end{tabular}
        }
    \label{table:comparison-with-climatebench}
\end{table}

Overall, FaIRGP demonstrates improved scores compared to the ClimateBench baseline GP across various metrics, both in terms of global and spatial emulation. In terms of global emulation, the predictive performance of the Plain GP appears to be better than that of the ClimateBench GP. However, when it comes to spatial surface temperature emulation, the ClimateBench GP outperforms the Plain GP. This improvement can likely be attributed to the utilization of spatially-resolved aerosol information as input in the ClimateBench GP.

\newpage
\subsection{Tuned model parameters values}\label{appendix:model-parameters-values}

\begin{table}[h]
    \centering
        \begin{tabular}{lcccccc}
        \toprule
        Training data &  $\sigma^2$ &  $\sigma^2_\sfF$ &  $\ell_{CO_2}$ &  $\ell_{CH_4}$ & $\ell_{SO_2}$ & $\ell_{BC}$ \\
        \midrule
        All SSPs and historical        &    0.754 &                 0.244 &       1.60 &       3.79 &       2.98 &     3.797865 \\
        All but SSP126       &    0.679 &                 0.271 &       1.57 &       3.90 &       2.91 &     3.86 \\
        All but SSP245       &    0.845 &                 0.276 &       1.52 &       3.61 &       2.86 &     3.89 \\
        All but SSP370       &    0.709 &                 0.241 &       1.87 &       3.87 &       2.91 &     3.40 \\
        All but SSP585       &    0.647 &                 0.192 &       1.58 &       3.22 &       2.41 &     3.18 \\
        Historical only &    0.385 &                 0.263 &       3.33 &       2.57 &       1.02 &     3.48 \\
        \bottomrule
        \end{tabular}
    \caption{Values obtained for internal variability variance $\sigma^2$, forcing kernel variance $\sigma^2_\sfF$ and forcing kernel lengthscales $\ell_{CO_2}, \ell_{CH_4}, \ell_{SO_2}, \ell_{BC}$ after tuning using the gradient descent maximum likelihood procedure described in Section~\ref{section:parameters-tuning}.}
    \label{tab:my_label}
\end{table}

\newpage

\subsection{Emulating anthropogenic aerosols forcing}\label{appendix:aerosol-emulation-results}

\begin{figure}[H]
    \centering
    \includegraphics[width=\textwidth]{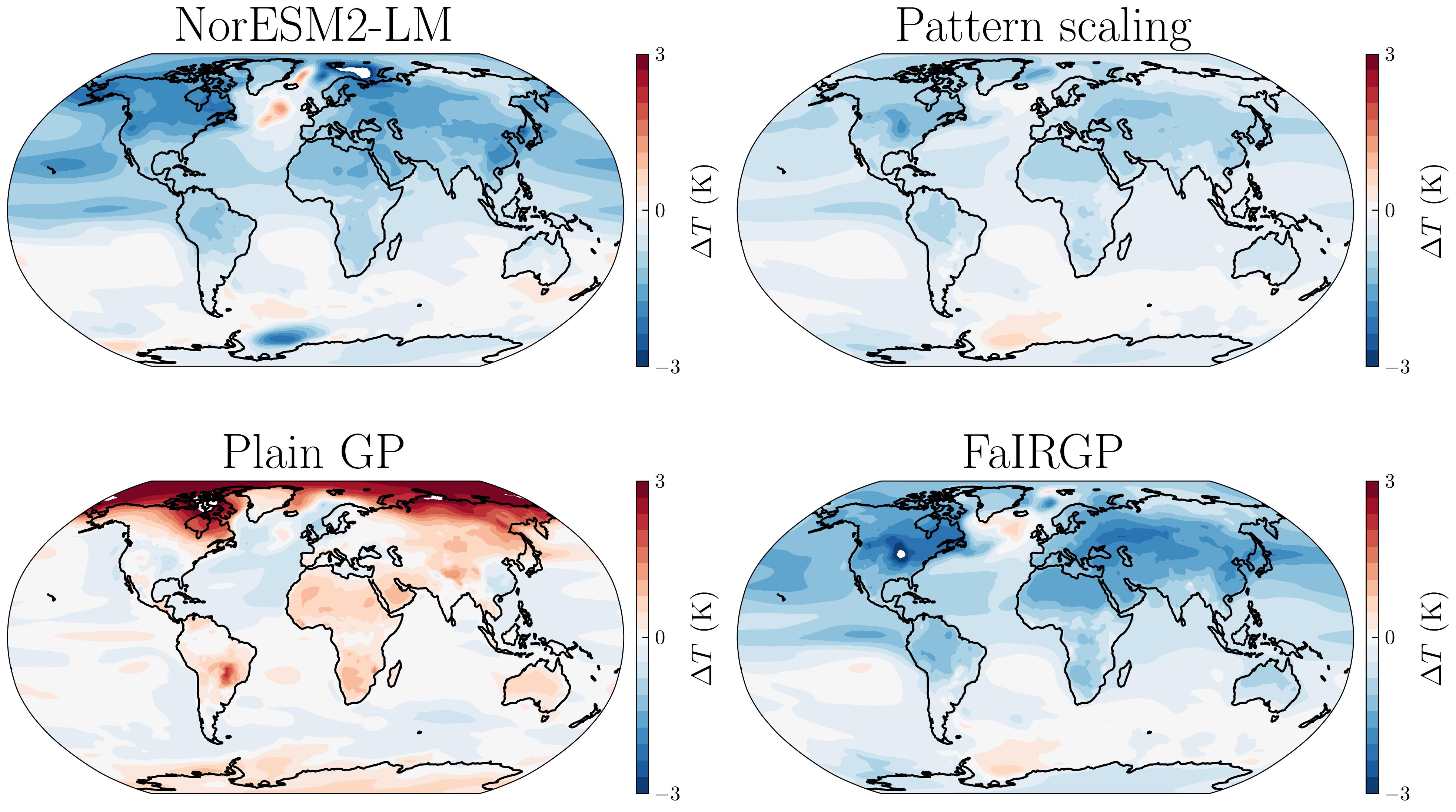}
    \caption{Maps of surface temperatures averaged for the \textit{hist-aer} experiment averaged over 1980-2014 period. \textbf{Top-Left:} Groundtruth NorESM2-LM surface temperatures map. \textbf{Top-Right:} Prediction with the FaIR pattern scaling baseline. \textbf{Bottom-Left:} Prediction with Plain GP posterior mean. \textbf{Bottom-Right:} Prediction with the FaIRGP posterior mean.}
\end{figure}

\end{document}